\newtheorem{Thm}{Theorem}
\newtheorem{Lem}[Thm]{Lemma}
\newtheorem{Cor}[Thm]{Corollary}
\newtheorem{Def}{Definition}
\newcommand\mbR{\mbox{$\mathbb{R}$}}
\newcommand {\ie} {{i.e.}\xspace}
\newcommand {\st} {\textit{s.t.}\xspace}
\newcommand\nash{\mbox{\sf {Nash}}\xspace}
\newcommand\CE{\mbox{\sf {CE}}\xspace}
\newcommand\yes{\mbox{\sf {Yes}}\xspace}
\newcommand\no{\mbox{\sf {No}}\xspace}
\newcommand\vol{{\sf vol}}
\newcommand\region{{\sf region}}
\newcommand\vor{{\sf Vor}}
\newcommand\gvor{{\sf GenVor}}
\newcommand\extreme{{\sf extreme}}
\newcommand\atan{{\sf angle}}
\newcommand\SameVor{{\sf VorCheck}\xspace}
\newcommand\FurthestAlg{{\sf FurthestAlg}\xspace}
\newcommand\ap{a}
\newcommand{\Cylinders}{{\sf Cylinders}\xspace}
\newcommand{\AlgUnknownLP}{{\sc AlgLP}}
\newcommand{\ulp}{{\sf UnknownLP}\xspace}
\title{Solving Linear Programming with Constraints Unknown\\[0.2in]}
\author{
Xiaohui Bei\footnote{Nanyang Technological University, Singapore. Email: beixiaohui@gmail.com.} \and Ning Chen\footnote{Nanyang Technological University, Singapore. Email: ningc@ntu.edu.sg.} \and Shengyu Zhang\footnote{The Chinese University of Hong Kong, Hong Kong. Email: syzhang@cse.cuhk.edu.hk.}
}
\date{}
\begin{document}

\maketitle

\begin{abstract}
What is the value of input information in solving linear programming? The celebrated ellipsoid algorithm tells us that the full information of input constraints is not necessary; the algorithm works as long as there exists an oracle that, on a proposed candidate solution, returns a violation in the format of a separating hyperplane. Can linear programming still be efficiently solved if the returned violation is in other formats?

%\shengyu{In a linear program $Ax>b$, how does the access to input $(A,b)$ affect the computational complexity? The celebrated ellipsoid algorithm tells us that it does not require full information of input constraints to admit efficient algorithms---they exist as long as there is an separating oracle available which, on a proposed candidate solution, returns a violation in the format of a separating hyperplane. A fundamental question is then what other natural violations (and in what format) render linear program efficiently solvable?}

%We study this question in a trial-and-error setting: there is an oracle that, upon a proposed solution, returns the \emph{index} of an arbitrary violated constraint (with the content of the constraint still hidden). We give an algorithm with running time $O(m^{poly(n)}\cdot L)$, where $m$ and $n$ are the numbers of constraints and variables, respectively, and $L$ is the input size of the linear program. The exponential dependence on $n$ is unfortunately unavoidable; we show a lower bound of $\Omega\big(m^{\lfloor n/2\rfloor}\big)$ on the number of queries needed. Meanwhile, if the oracle provides more violation information---the index of a ``most violated'' constraint, measured by the Euclidean distance of the proposed solution and the half-spaces defined by the constraints---then we show that the linear program can be solved in polynomial time.

Motivated by some real-world scenarios, we study this question in a trial-and-error framework: there is an oracle that, upon a proposed solution, returns the \emph{index} of a violated constraint (with the content of the constraint still hidden). When more than one constraint is violated, two variants in the model are investigated. (1) The oracle returns the index of  a ``most violated'' constraint, measured by the Euclidean distance of the proposed solution and the half-spaces defined by the constraints. In this case, the LP can be efficiently solved (under a mild condition of non-degenerency). (2) The oracle returns the index of an arbitrary (\ie, worst-case) violated constraint. In this case, we give an algorithm with running time exponential in the number of variables. We then show that the exponential dependence on $n$ is unfortunately necessary even for the query complexity. These results put together shed light on the amount of information that one needs in order to solve a linear program efficiently.

The proofs of the results employ a variety of geometric techniques, including McMullen's Upper Bound Theorem, the weighted spherical Voronoi diagram, and the furthest Voronoi diagram. In addition, we give an alternative proof to a conjecture of L{\' a}szl{\' o} Fejes T{\' o}th on bounding the number of disconnected components formed by the union of $m$ convex bodies in $\mathbb{R}^n$. Our proof, inspired by the Gauss-Bonnet Theorem in global differential geometry, is independent of the old one by Kovalev \cite{Kov88} and reveals more clear insights into the problem and the bound.
\end{abstract}

\setcounter{page}{0}\thispagestyle{empty}
\newpage

% No one is taking the token

\section{Introduction}\label{sec:intro}
%\vspace{-0.1in}

Solving linear programming (LP) is a central question studied in operations research and theoretical computer science.
The existence of efficient algorithms for LP is one of the cornerstones of a broad class of designs in, for instance, approximation algorithms and combinatorial optimization.
%The existence of efficient algorithms for LP is one of the cornerstones for broad classes of algorithms in approximation algorithms, economics and management sciences, etc.
The feasibility problem of linear programming asks to find an $x\in \mathbb{R}^n$ to satisfy a number of linear constraints $Ax > b$.
%In the feasibility problem, one needs to determine the values of variables $x_1,\ldots,x_n\in \mathbb{R}$ that satisfy $Ax > b$. %where $A=(a_1,\ldots,a_m)^T$ and $b=(b_1,\ldots,b_m)^T$.
Some previous algorithms, such as the simplex and interior point algorithms, assume that the constraints are explicitly given.
In contrast, the ellipsoid method is able to find a feasible solution even without full knowledge of the constraints.
This remarkable property grants the ellipsoid method an important role in many theoretical applications.

%That is, each $a_i\in \mathbb{R}^n$ and $b_i$ describes a linear constraint $a_ix\ge b_i$. The feasible region of the variables defines a convex polytope $P$ in $\mathbb{R}^n$. In the {\em feasibility} problem, one asks if $P$ is empty or not, and if not, computes a feasible point in $P$. In the {\em optimization} problem, one would like to optimize a given linear objective function within the solution polytope $P$.
% The simplex algorithm, developed by Dantzig in 1947, is one of the most seminal designs solving linear programming. However, while the simplex algorithm is quite efficient in practice, in the worst case it may have a runtime of exponential in the problem input size. The first polynomial time algorithm was the ellipsoid algorithm proposed by Khachiyan in 1979. A more efficient polynomial time algorithm was later proposed by Karmarkar in 1984. The detailed descriptions of these algorithms can be found in many textbooks, e.g.,~\cite{PS98,BV04}.

A central ingredient in the ellipsoid method is an oracle that, for a proposed (infeasible) point $x\in \mathbb{R}^n$, provides a violation that separates $x$ and the feasible region of the LP in the format of a hyperplane.
Such a separation oracle captures situations in which the input constraints are unavailable or cannot be accessed affordably, and the available information is from separating hyperplanes for proposed solutions.
%Such separations capture precisely the fact that in the settings with input constraints unavailable, the only information that one can obtain for a proposed solution is from violations.
A natural question is what if the feedback for a proposed solution is not a separating hyperplane.
%Can the ellipsoid method still solve linear programming efficiently if provided less information than separating hyperplanes?
%However, in some practical applications, the acquired violation information is actually rather different and even more restricted and limited.
Aside from theoretical curiosity, the question relates to practical applications, where the acquired violation information is actually rather different and even more restricted and limited.
% Other than theoretical curiosity, the question of how to deal with unknown constraints with different types of observed violations also arises in some real applications due to a variety of reasons.

%\vspace{-0.05in}
%\begin{itemize}[$\bullet$]
%\addtolength{\itemsep}{-0.5\baselineskip}
%\item
Transmit power control in cellular networks has been extensively studied in the past two decades, and the techniques developed have become foundations in the CDMA standards in today's 3G networks. In a typical scenario, there are a number of pairs of transmitters and receivers, and the transmission power of each transmitter needs to be determined to ensure that the signal is strong enough for the target receiver, yet not so strong that it interferes with other receivers. This requirement can be written as an LP of the form $Ax>b$, where each constraint $i$ corresponds to the requirement that the Signal to Interference Ratio (SIR) is no less than a certain threshold. In general the power control is a well-known hard problem (except for very few cases, such as power minimization~\cite{FM93}); a major difficulty is that matrix $A$ depends mainly on the ``channel gains'', which are largely unknown in many practical scenarios~\cite{CHLT08}.
Thus the LP $Ax>b$ needs to be solved despite the unavailability of $(A,b)$. What is available here is that the system can try some candidate solution $x$ and observe violation information (namely whether the SIR exceeds the threshold). The system can then adjust and propose new solutions until finally finding an $x$ to satisfy $Ax>b$.
There are more examples in other areas (e.g., normal form games and product design and experiments \cite{Mon08}) with input information hidden. In these examples,
for any unsatisfied proposed solution, only certain salient phenomena of violation (such as signal interference) are exhibited, which give \emph{indices} of violated constraints but not their contents.%, of which we usually have a very limited understanding. %The observed information from a violated constraint can be indicated by its index in the linear program, i.e., the $i$-th constraint, but it is too limited to retrieve a violated constraint or establish a separation.
 With so little information obtained from violations, is it still possible to solve linear programming efficiently?
We attempt to answer that question in the present paper. Our work aims to address the value of input information in solving LP, and can hopefully help to deepen our understanding of the following general question.
\begin{center}
  \begin{quote}
\emph{What is the least amount of input information, in what format, that one needs to solve a linear program efficiently?}
   \end{quote}
\end{center}

\noindent In this paper, we study the above question by testing both sides of the boundary.

%We will study two natural variants of the model, one providing slightly more information than the other. As will be shown, this little extra information reduces the complexity from exponential to polynomial time. This hopefully can help to deepen our understands of the above fundamental question.
%In a sense, this question pushes the edge of the ellipsoid method by requiring less information from the separating oracle.
%What is surprising to us is that, if in addition we are given the index of a violated constraint, then with some reasonable assumption we are able to solve the linear program efficiently.

%
%Our work targets on an algorithmic study of the {\em value of information} for linear programming.
%
%The main question studied in the present paper is the following:
%\begin{quote}
%{\em What is the complexity of solving linear programming when constraints are unknown?}
%\end{quote}
%

%There is a recent surge of studies motivated from availability or accessibility of input information from various angles.
%For instance, algorithmic mechanism design considers situations in which input information is held by self-interested players.
%A SENTENCE FOR CRYPTO, COMMUNICATION? In the study of big data, while the studied data set is usually given, its size is too large and complex to manage using classic algorithmic tools.

\subsection{Our Model and Results}

%Motivated by the aforementioned examples, we study solving linear programming in the following model.
Our model is defined as follows. In an LP $Ax>b$, the constraints $a_ix>b_i$ are hidden to us. We can propose candidate solutions $x\in \mathbb{R}^n$ to a \emph{verification oracle}\footnote{The verification oracle is simply a means of determining whether a solution is feasible. It arises from the nature of LP as shown from the foregoing examples. For infeasible solutions, the feedback is a signaled violation.}. If $x$ satisfies $Ax > b$, then the oracle returns \yes and the job is done. If $x$ is not a feasible solution, then the oracle returns the index of a violated constraint. The algorithm continues until it either finds a feasible solution or concludes that no feasible solution exists. The algorithm is adaptive in the sense that future queries may depend on the information returned during previous queries.
We focus only on the feasibility problem, to which an optimization LP can be transformed by a standard binary search.

Note that when the proposed solution is not feasible, the oracle returns only the \emph{index} $i$ of a violation rather than the constraint $a_ix>b_i$ itself. We make this assumption for two reasons. First, consistent with the aforementioned examples, we are often only able to observe unsatisfactory phenomena (such as a strong interference in the power control problem). However, the exact reasons (corresponding to the content of violated constraints) for these problems may still be unknown. Second, as our major focus is on the value of information in solving linear programming, a weaker assumption on the information obtained implies stronger algorithmic complexity results. Indeed, as will be shown, in some settings efficient algorithms exist even with this seeming deficit of information.

For a proposed solution $x$, if there are multiple violated constraints, the oracle returns the index of one of them\footnote{It is also natural to consider the case where the oracle returns the indices of all violated constraints. That model turns out to be so strong as to make the linear program easily solvable. We study oracles returning only one index to emphasize that even given such limited information, efficient algorithms exist in some settings.}. This raises the question of which violation the oracle returns, and two variants are studied in this paper. In the first one, the oracle gives more information by returning the index of a ``most violated'' constraint, where the extent of a violation is measured by the Euclidean distance of the proposed solution $x$ and the half-space defined by the constraint. This oracle, referred to as the \emph{furthest oracle}, attempts to capture the situation in which the first violation that occurs or is observed is usually the most severe one. %For instance, in security games attackers, after conducting surveillance over an extended period, devise best ways to attack, which usually focus on defenders' weakest points.
%the infrastructure with the least protection is first attacked.
The second variant follows the tradition of worst-case analysis in theoretical computer science, and makes no assumption about the returned violation. This oracle is referred to as the \emph{worst-case oracle}.

We will denote by \ulp the problem of solving LP with unknown constraints in the above model. In either oracle model, %the query complexity is the least number of queries needed for any algorithm that solves the \ulp problem on all possible input constraints.
the time complexity is the minimum amount of time needed for any algorithm to solve the \ulp problem, where each query, as in the standard query complexity, costs a unit of time.

Our results are summarized below. In a nutshell, when given a furthest oracle, a polynomial-time algorithm exists to solve LP (under a mild condition of non-degeneracy). On the other hand, if only a worst-case oracle is given, the best time cost is exponential in $n$, the number of variables. Note that it is efficient when $n$ is small, a well-studied scenario called \emph{fixed-dimensional LP}. The exponential dependence on $n$ is unfortunately necessary even for the query complexity. This lower bound, when combined with the positive result for the furthest oracle case, yields an illustration of the boundary of tractable LP.

%\vspace{-.2em}
\paragraph{Furthest oracle.}
%\vspace{-.2em}
The worst-case oracle necessitates an exponential time complexity, but in some practical applications failed trials may reveal more violation information. For instance, in the power control problem all of the distances between the proposed solution point to the half-spaces of violated constraints can be estimated and reported. Is this additional information greatly helpful in reducing the computational cost? In general, what is the least amount of information about violations needed to solve an LP efficiently? Compared to providing all distances, the furthest oracle reveals only a small amount of extra information by returning the index of a most violated constraint. However, this turns out to be sufficient to admit a polynomial-time algorithm.

%In the transmit power control example, for a tested power, each transmitter can estimate the actual SIR as a feedback. In the LP language, this amount to oracle returning an estimate for $Ax$.
As mentioned earlier, returning the indices of all violated constraints makes the model strong enough to admit efficient algorithms. Since our goal is to understand the boundary of tractability, it is desirable to have a model as weak as possible in which tractability is still maintained. The furthest oracle is defined for this purpose: compared to providing all violations, the furthest oracle only reveals one of the indices $i$ among the maximizers in $\max_i (b_i - \langle a_i, x\rangle)/\|a_i\|$. Despite this small amount of information, surprisingly, the furthest oracle turns out to be informative enough to admit a polynomial-time algorithm under a mild condition.

%Next we consider the model with the furthest oracle. In contrast to the worst-case oracle, now the extra information provided by the furthest oracle is enough to admit a polynomial-time algorithm.
\begin{Thm}
The \ulp problem can be solved in time polynomial in the input size in the furthest oracle model, provided that the input is non-degenerate\footnote{The exact definition of non-degeneracy is given in Section \ref{section-furthest}. The condition is mild; actually a random perturbation on inputs yields non-degeneracy, thus the theorem implies that the smooth complexity is polynomial.}.
\end{Thm}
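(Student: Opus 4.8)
The plan is to solve the feasibility problem $Ax>b$ using the furthest oracle by a learning-and-recovery strategy: probe the oracle at cleverly chosen points to reconstruct enough of the hidden geometry to either produce a feasible point or certify infeasibility. The key observation is that the furthest oracle at a point $x$ reveals the index $i$ maximizing $(b_i-\langle a_i,x\rangle)/\|a_i\|$, i.e., the constraint whose bounding hyperplane is (signed-distance) furthest above $x$. Thus, restricted to the infeasible region, the oracle partitions $\mathbb{R}^n$ into cells, one per constraint index, where cell $i$ is the set of points for which constraint $i$ is the ``most violated.'' On the unit sphere of directions, or after a suitable normalization, this partition is exactly a \emph{weighted spherical Voronoi diagram} (equivalently, the furthest-point structure) determined by the normalized constraint data $(a_i,b_i)$ — which is why the abstract advertises those tools. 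The first main step is to make this correspondence precise and to show that the oracle's answers along a probing ray or within a probing region let us recover the boundary hyperplanes between adjacent Voronoi cells, and from those the individual constraints $a_ix=b_i$ (up to the non-degeneracy assumption, which guarantees the cells are full-dimensional and their pairwise boundaries are genuine hyperplane pieces with recoverable normals).

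\textbf{Recovering the constraints.} Concretely, I would fix a generic base point known to be infeasible (e.g., far in a generic direction) and binary-search along line segments to locate points where the returned index changes; at such a breakpoint the two competing constraints satisfy $(b_i-\langle a_i,x\rangle)/\|a_i\| = (b_j-\langle a_j,x\rangle)/\|a_j\|$, which is an affine equation in $x$. Collecting $n$ affinely independent breakpoints for the same pair $(i,j)$ pins down that bisector hyperplane, and hence the ratio information relating constraint $i$ and constraint $j$. Chaining this across the adjacency graph of the Voronoi diagram (which is connected under non-degeneracy) recovers all constraints up to a single global scaling/translation ambiguity — and that residual ambiguity can be resolved by one more probe, because the oracle also tells us \emph{when} a point becomes feasible (it returns \yes), which fixes the absolute position. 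I would then bound the number of probes: each binary search costs $O(\mathrm{poly}(\text{input size}))$ queries to reach the required precision (the breakpoints have polynomial bit-complexity by Cramer's rule on the $a_i,b_i$), and there are at most polynomially many cell adjacencies to explore — here I expect to invoke a combinatorial bound on the complexity of the (spherical/weighted) Voronoi diagram, i.e., McMullen's Upper Bound Theorem, to keep the number of faces, and hence the amount we must learn, polynomial in fixed dimension, or more care to keep it polynomial in $m$ and $n$ together. Once $(A,b)$ is (approximately) known, we solve the LP directly.

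\textbf{The main obstacle} I anticipate is twofold. First, the Voronoi diagram lives only over the \emph{infeasible} region: once a probe lands in the feasible polytope the oracle stops giving index information, so the learning procedure must be confined to, and must navigate within, the complement of an unknown convex set — I would handle this by always probing from a ``deep'' infeasible anchor and by arguing that the bisector hyperplanes extend into regions we can reach, or by using the furthest oracle's monotonicity along rays leaving the polytope. Second, and more delicate, is the precision/degeneracy interplay: the bisectors are hyperplanes only because of non-degeneracy (otherwise cells can be lower-dimensional and their boundaries need not determine a unique normal), and the binary searches return only approximate breakpoints, so I must show that polynomially many bits of accuracy suffice to reconstruct $(A,b)$ exactly via rounding (a standard continued-fraction / simultaneous-Diophantine argument, justified because the true data has polynomial bit-size). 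Verifying that the non-degeneracy condition of Section~\ref{section-furthest} indeed guarantees the geometric regularity needed at every step — full-dimensional cells, a connected adjacency graph, bisectors in ``general position'' — is where the bulk of the careful work lies, and is the step I would write out in full detail in the body of the proof.
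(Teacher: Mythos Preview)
Your approach is genuinely different from the paper's, and it has a real gap.

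\textbf{The different route.} The paper does \emph{not} attempt to reconstruct the Voronoi diagram by direct probing and binary search. Instead it runs the ellipsoid method in the parameter space $\mathbb{R}^{m(n+1)}$ to search for the unknown $(A,b)$: at each step it takes a \emph{candidate} $(A',b')$, computes its (weighted, generalized furthest) Voronoi diagram $\gvor'$, and then merely \emph{checks} whether $\gvor'=\gvor$ by querying $O(n)$ points near each of the $O(m^2)$ facets of $\gvor'$. Any mismatch immediately yields a separating hyperplane between $(A',b')$ and $(A,b)$, and the ellipsoid shrinks. Having a candidate diagram to compare against is what makes the check polynomial; your plan instead tries to \emph{discover} the diagram from scratch, and you have not explained how to find the adjacencies or how to steer the binary searches toward the correct facets without a template. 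This may be fixable, but it is not addressed.

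\textbf{The gap.} The more serious problem is your resolution of the ``single global scaling/translation ambiguity.'' You are right that the Voronoi tessellation $\gvor$ determines the $a_i$ (under non-degeneracy, via the Ash--Bolker/Hartvigsen results the paper cites) and the $b_i$ only up to a common additive shift: $Ax>b$ and $Ax>b+c\mathbf{1}$ induce the \emph{same} $\gvor$ for every $c\in\mathbb{R}$. Your proposed fix---probe until the oracle says \yes, thereby locating the feasibility boundary---works only when the LP \emph{is} feasible. If it is infeasible the oracle never returns \yes, and no amount of probing distinguishes a barely-infeasible instance from a wildly-infeasible one with the same $\gvor$. The paper resolves this with a separate structural lemma (Lemma~\ref{lem:ax>b}): compute the extreme point $p\in\extreme(A',b')$ minimizing the maximum violation, identify the support set $S$ of constraints tight at $p$, and observe that $Ax>b$ is infeasible if and only if the \emph{homogeneous} support system $\{\langle a_i,x\rangle>0:i\in S\}$ is infeasible---a condition that depends only on the $a_i$, not on the unknown shift $c$. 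This reduction to the homogeneous case, followed by the spherical-Voronoi recovery of the $a_i$ near $p$, is the missing idea in your outline; without it you cannot certify infeasibility.
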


The main idea of the algorithm design is as follows. Instead of searching for a solution directly, we consider the unknown matrix $A$ and vector $b$ as a degenerate polyhedron in $\mathbb{R}^{m(n+1)}$, and use the ellipsoid method to find $(A,b)$. In each iteration we consider the center $(A',b')$ of the current ellipsoid in $\mathbb{R}^{m(n+1)}$, and aim to construct a separating hyperplane between $(A,b)$ and $(A',b')$ through queries to the furthest oracle. The main difficulty lies in the case when $(A',b')$ is infeasible, in which a separating hyperplane cannot be constructed explicitly. It can be observed that upon a query $x$, with the help of the furthest oracle, the information returned from the oracle has a strong connection to the Voronoi diagram. Specifically, if $x$ is not a feasible solution, then the returned index is always the furthest Voronoi cell that contains $x$. We can manage to compute the Voronoi diagram, but this does not uniquely determine the constraints that define the LP. To handle this difficulty, we give a sufficient and necessary characterization reducing the input LP to that of a new and homogeneous LP, for which the constraints can be identified using the structure of a corresponding weighted spherical closest Voronoi diagram.

%The approach itself, however, does not lead to an efficient algorithm, as suggested by the lower bound established above. A critical observation in our algorithm is that, upon a query $x$, with the help of the furthest oracle, the returned information from the oracle has a strong connection to Voronoi diagram, which is one of the most fundamental structures studied in computational geometry, and has many applications in various areas. Specifically, if $x$ is not a feasible solution to the LP, then the returned index is always the furthest Voronoi cell that contains $x$. This property greatly helps us to visualize the position relation of the unknown constraints, and even leads to the discovery of the Voronoi diagram itself.

%Unfortunately, knowing the Voronoi diagram may not immediately imply the positions of constraints. To handle this difficulty, for each point in $\mathbb{R}^n$, we give a sufficient and necessary characterization that reduces the feasibility of the given LP to the feasibility of a support LP associated with the point. We then choose a specific point, which has the minimum maximum distance to all constraints, and using the structure of the Voronoi diagram, to conclude that the given LP has no feasible solution.

%\vspace{-.2em}
\paragraph{Worst-case oracle.}
%\vspace{-.2em}
Recall that the worst-case oracle may return the index of an arbitrary violation. In this case, we first establish the following upper bound which is exponential in the number of variables only.

\begin{Thm}
The \ulp problem with $m$ constraints, $n$ variables, and input size $L$ can be deterministically solved in time $(mnL)^{poly(n)}$. In particular, the algorithm is of polynomial time for constant dimensional LP (\ie constant number of variables).
\end{Thm}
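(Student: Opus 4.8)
The plan is to run an outer ellipsoid loop in the $m(n+1)$-dimensional space of possible $(A,b)$'s, exactly as sketched for the furthest-oracle result, but with a recursive subroutine in place of the Voronoi-diagram machinery to certify infeasibility of a candidate $(A',b')$. The starting observation is that the polyhedron of all $(A,b)$ consistent with the history of queries and oracle answers is defined by linear inequalities (each answer ``constraint $i$ is violated at $x$'' is the linear constraint $\langle a_i, x\rangle \le b_i$ in the unknowns $(a_i,b_i)$, plus whatever scaling/normalization we fix), so the ellipsoid method applies and terminates after $\mathrm{poly}(m,n,L)$ iterations provided that in each iteration we can either (i) detect that the current center $(A',b')$ yields a genuine feasible point and output it, or (ii) produce a query $x$ and an oracle answer that is inconsistent with $(A',b')$, hence a separating hyperplane. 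Case (i) is easy: solve the known LP $A'x>b'$; if it is feasible, propose its solution $x^*$ to the true oracle — either we are done, or the oracle names a violated index $i$, giving the separating inequality $\langle a_i',x^*\rangle > b_i'$ that the true $(a_i,b_i)$ must fail. The whole difficulty is concentrated in the case where $A'x>b'$ is infeasible: then there is no single point witnessing the discrepancy, and we must instead verify that the true LP is \emph{also} infeasible (in which case we halt and report ``no feasible solution'') or else extract a contradiction.

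The key step is therefore a recursive infeasibility-verification procedure. When $A'x>b'$ is infeasible, by LP duality there is a small certificate: a subset of at most $n+1$ indices $S$ and nonnegative multipliers exhibiting $\sum_{i\in S}\lambda_i a_i' = 0$, $\sum_{i\in S}\lambda_i b_i' \ge 0$ with not all relations tight. We want to confirm that the true constraints indexed by $S$ are ``close enough'' to $A'_S,b'_S$ that they too are jointly infeasible. The idea is to probe, for each $i\in S$, a grid of query points chosen to localize the hyperplane $\{a_ix=b_i\}$; but a single query only reveals one violated index, so to learn about constraint $i$ we must drive all other constraints in $S$ to be satisfied, which is itself an instance of \ulp on the remaining constraints — hence the recursion on the number of ``active'' constraints, or more precisely on the dimension $n$. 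At the base case $n=O(1)$ (or when $|S|$ is small), a brute-force search over a fine enough grid of query points — fine enough relative to the bit-length $L$ so that distinct feasibility patterns are separated — pins down each relevant half-space up to the precision needed to decide feasibility of the true LP, and the running time of one such base-case call is $(mnL)^{O(n)}$. Unwinding the recursion multiplies the cost by another $(mnL)^{O(n)}$ factor per level and there are $O(n)$ levels, so the total stays $(mnL)^{\mathrm{poly}(n)}$; and for constant $n$ this is polynomial, giving the stated fixed-dimensional consequence.

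The main obstacle, and the part I expect to require the most care, is the infeasibility case: showing that a bounded-depth recursion really does suffice to either discover a feasible point or \emph{certify} infeasibility of an LP all of whose constraints are unknown. One must argue that it is enough to learn the $n+1$ (or so) constraints participating in a dual infeasibility certificate to a precision of $2^{-\mathrm{poly}(n,L)}$, that such precision can be attained by $(mnL)^{O(n)}$ adaptive queries even though each query reveals only one index, and that the recursive calls used to ``silence'' the other active constraints terminate — i.e., that the recursion is well-founded because each level operates on a strictly lower-dimensional (or otherwise smaller) subproblem. Care is also needed to handle the interplay between the numerical tolerances of the outer ellipsoid loop and the discretization grid of the inner search, so that a separating hyperplane of nonnegligible margin is produced whenever the true and candidate LPs genuinely differ, guaranteeing polynomially many outer iterations. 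Once these pieces are in place, the time bound follows by bookkeeping: $\mathrm{poly}(m,n,L)$ outer iterations, each invoking a recursion of depth $O(n)$ whose leaves cost $(mnL)^{O(n)}$, for a grand total of $(mnL)^{\mathrm{poly}(n)}$.
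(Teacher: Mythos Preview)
Your proposal takes a route that is not only different from the paper's but also has a genuine gap.

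The paper does \emph{not} run an ellipsoid loop in $(A,b)$-space for the worst-case oracle; that machinery is used only for the furthest oracle. Instead, the paper works directly in $x$-space. After reducing to the homogeneous form $Ax>0$, it maintains for each $i$ the cone $\region(i)^*$ of directions already certified infeasible via answers of index $i$, and searches the complement $\mathbb{R}^n\setminus\bigcup_i\region(i)^*$ by recursion on dimension: pick an $(n-1)$-dimensional subspace, recursively search it, and if no solution is found there, the subspace is entirely covered by $\bigcup_i\region(i)^*$ and splits the remaining region into two open half-spaces. The algorithm tracks a collection of ``sector cylinders'', each containing at least one connected component (chamber) of the uncovered region, and repeatedly bisects the largest one. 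The running time follows from a volume argument together with the chamber bound $\sum_{i=0}^n\binom{m}{i}$ (Fejes T\'oth's conjecture, for which the paper gives an independent proof). No attempt is made to learn the constraints themselves.

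Your approach, by contrast, tries to learn the constraints in a dual infeasibility certificate $S$ for the current candidate $(A',b')$, and this is where the argument breaks down. To isolate constraint $i\in S$ you propose to ``drive all other constraints in $S$ to be satisfied'' and call this a smaller instance of \ulp. But (a) the worst-case oracle is not restricted to $S$: at any query point the oracle may return the index of some violated constraint \emph{outside} $S$, so even if you satisfy all of $S\setminus\{i\}$ you have no guarantee of hearing about $i$; and (b) forcing the oracle to return $i$ therefore requires satisfying all $m-1$ other constraints at once, which is a \ulp instance of the same size, not a smaller one. Your appeal to ``recursion on the dimension $n$'' is never substantiated: nothing in the scheme reduces the ambient dimension, and the base-case ``brute-force grid search'' still faces an adversarial oracle free to name any violated index, so querying a grid point does not localize the particular hyperplane you are after. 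In short, the recursion is not well-founded, and the mechanism for producing a separating hyperplane in the infeasible case is missing precisely the idea (covering $x$-space rather than learning $(A,b)$) that makes the paper's algorithm go through.
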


%Note that in our model with constraints unknown, it is even unclear if one can find a feasible solution in finite steps. Our upper bound, while exponential, almost matches the lower bound; in this sense, our algorithm is nearly optimal.
%To derive the upper bound, notice that the trial-and-error based approach indicates that upon a query $x$, if $i$ is a violated index returned by the algorithm, then the position of the $i$-th constraint in $\mathbb{R}^n$ cannot be in the halfspace with normal vector $x$. Based on this observation, our algorithm uses induction on dimension and keeps track of the set of possible region of each constraint $i$, which is the polar cone of the convex hull of the queried points with returned index $i$; the algorithm terminates when all possible regions are sufficiently small.
%At the heart of our algorithm lies on
At the heart of the efficiency guarantee of our algorithm is a technical bound of $\sum_{i=0}^n \binom{m}{i}$ on the number of ``holes'' formed by the union of $m$ convex bodies in $\mathbb{R}^n$. This bound was first conjectured by L{\' a}szl{\' o} Fejes T{\' o}th. The 2-dimensional case was proved by Katona~\cite{Kat77} in 1977, based on an analysis of the shape of the convex sets, and the general case was proved by Kovalev~\cite{Kov88} in 1988, by induction on dimension. %Neither proof may not illustrate the insights of the problem.
We give an independent and completely different proof, which is simpler and does not rely on induction. Compared to the previous proofs, ours reveals the nature of the problem and exhibits a clear and simple reason for the bound to hold. (One can see clearly from our proof where each summand in the bound comes from.) The main idea and some technical tools in our proof are inspired by the high-dimensional Gauss-Bonnet theorem, the most important theorem in global differential geometry.
A key concept needed in our proof is a properly defined high-dimensional ``exterior angle'', which connects the convex bodies and the ``holes'' at every boundary point.
Our exterior angle differs from the standard one %definition for high-dimensional exterior angles by
by Banchoff~\cite{Ban67} by dropping all low-dimensional terms, but only in this way does it yield a critical identity that we need: the integral of all exterior angles of any bounded set, convex or non-convex, is 1.% used in the proof.
%Different than the standard definition for high-dimensional exterior angles by Banchoff~\cite{Ban67}, defined as a sum of all lower dimensional angles with proper signs, our exterior angles only take the highest dimensional part.

The above theorem implies a polynomial time algorithm when the number $n$ of variables is a constant. This is a well-studied scenario, called \emph{fixed dimensional LP} in which $n$ is much smaller than the number of constraints $m$; see \cite{Kal92,Cla95,MSW96,Meg84,Dye86} and the survey~\cite{DMW04}.

% and is a well-studied topic~\cite{DMW04}. Polynomial time simplex-type algorithms exist when $d$ is a polynomial of $\log n$ \cite{Kal92,Cla95,MSW96}. Linear-time algorithms for when $d$ is a constant have also been developed \cite{Meg84,Cla86,Dye86}, and the ideas were used for many related problems, such as Euclidean one-center, minimum ball containing balls, and minimum volume ellipsoid problems~\cite{Meg89,Dye86,Dye92}. %See \cite{DMW04} and many references therein.
On the other hand, a natural question is whether the exponential dependence is necessary; at the very least, can we improve the bound to $poly(m,n) + 2^{poly(n)}$, as Matousek et al.~\cite{MSW96} have done, which is still polynomial when $n$ is slowly growing as some $poly\log(m)$? Unfortunately, the next lower bound theorem indicates that this is impossible.

\begin{Thm}
Any \emph{randomized} algorithm that solves the \ulp problem with $m$ constraints and $n$ variables needs $\Omega\big(m^{\lfloor n/2\rfloor}\big)$ queries to the oracle, regardless of its time cost.
\end{Thm}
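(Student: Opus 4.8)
The plan is to prove this as an information-theoretic (query) lower bound via Yao's minimax principle: it suffices to exhibit a distribution $\mathcal{D}$ over \ulp instances with $m$ constraints and $n$ variables on which every deterministic algorithm making at most $c\,m^{\lfloor n/2\rfloor}$ oracle queries (for a suitable constant $c>0$) errs with probability bounded away from $0$. The instances in the support of $\mathcal{D}$ will all share a common ``bulk'' and differ only in where a tiny feasible region is hidden. Concretely, I would fix a full-dimensional polytope $P\subseteq\mathbb{R}^n$ defined by $m-1$ strict linear inequalities which, by the tightness of McMullen's Upper Bound Theorem (take the simple $n$-polytope dual to a cyclic polytope $C(m-1,n)$ realized on the moment curve), has $N=\Omega(m^{\lfloor n/2\rfloor})$ vertices $v_1,\dots,v_N$; put $\rho=\tfrac12\min_{j\ne k}\|v_j-v_k\|>0$. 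For each vertex $v_j$ choose a unit vector $c_j$ in the interior of the normal cone of $P$ at $v_j$, so $v_j$ is the \emph{unique} maximizer of $\langle c_j,\cdot\rangle$ over $P$ with value $M_j$, and let instance $I_j$ consist of the $m-1$ inequalities defining $P$ together with the $m$-th constraint $\langle c_j,x\rangle>M_j-\delta$, where $\delta>0$ is small (of bit-size $\mathrm{poly}(m,n)$). A routine compactness argument shows that the cap $\mathrm{Feas}(I_j)=\mathrm{int}(P)\cap\{\langle c_j,x\rangle>M_j-\delta\}$ is nonempty and, for $\delta$ small enough, has diameter $<\rho$ and lies in the ball $B(v_j,\rho)$; hence these caps are pairwise disjoint. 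Let $\mathcal{D}$ be uniform over $I_1,\dots,I_N$, all of which are feasible.

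The crux is that the oracle can be answered in a way that is independent of $j$ until the algorithm actually queries a point inside the hidden cap. Define a response to a query point $x$ as follows: if $x\notin\mathrm{int}(P)$, return the index in $\{1,\dots,m-1\}$ of a most-violated constraint of $P$ --- this constraint belongs to \emph{every} $I_j$, and this answer is also a legitimate worst-case-oracle answer even when the $m$-th constraint happens to be violated too; if $x\in\mathrm{int}(P)$ but $\langle c_j,x\rangle\le M_j-\delta$, then the only violated constraint is the $m$-th, so return the index $m$ --- and since the algorithm sees only the \emph{index}, not the content of the constraint, this answer does not depend on $j$ either; finally, if $x\in\mathrm{Feas}(I_j)$, return \yes. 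Thus, along any run of a deterministic algorithm the transcript is a fixed function of the query points alone up to the first time a query lands in some cap; in particular the sequence of queries is the same for all $j$ until that moment. Since the caps are disjoint, each query point lies in at most one of them, so a $q$-query run can ``hit'' at most $q$ of the instances; for every other $j$ the algorithm terminates having seen the same fixed transcript $T^{*}$, and must therefore output a fixed answer --- either ``infeasible'', which is wrong on all $I_j$, or a fixed point $p=p(T^{*})$, which lies in $\mathrm{Feas}(I_j)$ for at most one $j$. Hence any deterministic $q$-query algorithm is correct on at most $q+1$ of the $N$ instances, i.e.\ $\Pr_{I\sim\mathcal{D}}[\text{correct}]\le (q+1)/N$; combining with Yao's minimax principle, a bounded-error randomized algorithm solving \ulp must make $q=\Omega(N)=\Omega(m^{\lfloor n/2\rfloor})$ queries, regardless of running time.

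I expect the real work to be in the geometric set-up, not in this counting. The delicate points --- and the main obstacle --- are: (i) extracting from the tightness of the Upper Bound Theorem a polytope with $m-1$ facets and $\Omega(m^{\lfloor n/2\rfloor})$ vertices whose vertices are, moreover, \emph{quantitatively} separated and whose description has polynomial bit-size; (ii) verifying that a \emph{single} extra half-space can simultaneously be tuned to carve out an arbitrarily small, well-separated feasible cap at each prescribed vertex while all remaining $m-1$ constraints stay literally shared across the whole family --- it is precisely this sharing, together with the fact that the oracle discloses only a constraint's \emph{index}, that forces the transcript to be $j$-independent and makes the argument go through; and (iii) confirming that the prescribed responses are consistent with a worst-case oracle, including the uniqueness-of-maximizer and cap-shrinkage facts needed for disjointness. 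None of these is individually deep, but getting all three in place is the substance of the proof; once they hold, the $\Omega(m^{\lfloor n/2\rfloor})$ bound falls out of the $N$-versus-$q$ accounting above.
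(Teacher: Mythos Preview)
Your proposal is correct and follows essentially the same approach as the paper: both hide a tiny feasible cap near each vertex of a simple polytope dual to a cyclic polytope (yielding $\Theta(m^{\lfloor n/2\rfloor})$ vertices via the Upper Bound Theorem), share the polytope's facet inequalities across all instances, and observe that oracle answers are instance-independent until a query actually lands in one of the disjoint caps. Your explicit invocation of Yao's principle to handle randomization and your attention to bit-size are refinements the paper's proof leaves implicit, but the geometric construction and the final counting argument are the same.
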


The lower bound implies that our algorithm, although of an exponential complexity, is close to optimal.
Our proof of the lower bound uses the dual of the seminal Upper Bound Conjecture, proved by McMullen~\cite{McM70,MS71}, which gives a tight upper bound on the number of faces in an $n$-dimensional cyclic polytope with $m$ vertices.
%In our proof, we construct $O\big(m^{\lfloor n/2\rfloor}\big)$ many linear programs with feasible solutions that are disjoint and intersect with each of the vertices of the dual of such a cyclic polytope. We then show that with $o\big(m^{\lfloor n/2\rfloor}\big)$ queries, we are unable to distinguish some of the linear programs, and thus cannot find a feasible solution. \ning{Remove the last two sentences?}

It is worth comparing the exponential hardness of \ulp with the complexities of \nash and \CE, the problems of finding a Nash or correlated equilibrium in a normal-form game, in the trial-and-error model. In our previous work~\cite{BCZ12}, we presented algorithms with \emph{polynomial} numbers of queries for \nash and \CE with unknown payoff matrices in the model with worst-case oracle\footnote{An algorithm proposes a candidate equilibrium and a verification oracle returns the index of an arbitrary better response of some player as a violation.}. \nash and \CE can be written as quadratic and linear programs, respectively, but why is the general \ulp hard while the unknown-input \nash and \CE are easy (especially when all are given unlimited computational power)? The most critical reason is that in normal-form games, there \emph{always exists} a Nash and a correlated equilibrium, but a general linear program may not have feasible solutions. Indeed, if a feasible solution is guaranteed to exist (even for only a random instance), such as when the number of constraints is no more than that of variables, then an efficient algorithm for \ulp does exist: see Appendix~\ref{section-smallconstraint}. (In our algorithms for \ulp, the major effort is devoted to handling infeasible LP instances.) It is interesting to see that the {\em solution-existing} property plays a fundamental role in developing efficient algorithms.

\subsection{Related Work}

A considerable body of work has studied the value of information in various domains. We consider algorithmic computation of linear programming from the perspective of available information.
Papadimitriou and Yannakakis~\cite{PY93} also studied solving linear programming with matrix unknown. However, their setting is very different from ours. They studied a specific class of linear programs, $Ax\le 1$ and $x\ge 0$ where the matrix $A\ge 0$, and considered a set of decision-makers who hold each of the variables and only know all of the constraints containing the variable. In addition, they studied the problem in the distributed decision-making setting, and focused on designing distributed algorithms with the objective of maximizing $\sum_i x_i$.
%Other related work studying the value of information in the distributed setting can be found in, e.g.,~\cite{PY91,DP92,IR93}.
Ryzhov and Powell~\cite{RP12} studied information collection in linear programming, but their unknown is the coefficients of the objective function.

In our previous work~\cite{BCZ12}, we studied the trial-and-error approach to finding a feasible solution for a search problem with unknown input and a verification oracle for a number of combinatorial problems, such as stable matching, SAT, group and graph isomorphism, and the Nash equilibrium.
%Using the language of~\cite{BCZ12}, the number of queries asked to the oracle is the trial complexity.
However, to bypass the computational barrier for some problems (e.g., SAT), \cite{BCZ12} equipped an algorithm with a separate computation oracle, whereas in the present paper we only have the verification oracle.
In addition, we consider not only the worst-case oracle but also a natural furthest oracle.
Finally, our major focus is on the algorithm design in solving the \ulp problem, but the main consideration of~\cite{BCZ12} is the relative complexity of solving a search problem with an unknown input compared to that with a known input.

% token free

\newcommand{\V}{\mbox{\sf {V}}\xspace} % verification oracle

\section{Preliminaries}
Consider the following linear program (LP):
$Ax > b,$
where $A=(a_{ij})_{m\times n}\in \mathbb{R}^{m\times n}$ and $b=(b_1,\ldots,b_m)^T\in \mathbb{R}^m$.
The {\em feasibility} problem asks to find a feasible solution $x\in \mathbb{R}^n$ that satisfies $Ax > b$
(or report that such a solution does not exist). Equivalently, this is to find a point $x\in \mathbb{R}^n$ that satisfies $m$ linear constraints
$\{a_ix > b_i : i\in [m]\}$, where each $a_i=(a_{i1},\ldots,a_{in})$.

In the {\em unknown-constraint LP feasibility} problem, denoted by \ulp, the coefficient matrix $A$ and the vector $b$ are unknown to us, but we still need to determine whether the LP has a feasible solution and find one if it does.
The way of solution finding is through an adaptive interaction with a {\em verification oracle}:
We can propose candidate solutions $x\in \mathbb{R}^n$. If a query $x$ is indeed a feasible solution,
the oracle returns \yes and the job is done. Otherwise, the oracle returns an index $i$ satisfying $a_ix\le b_i$, i.e., the index of a violated constraint.
%For instance, consider the following example with three constraints and two variables.
%\begin{eqnarray*}
%2x - 3y & > 1 & \quad (1) \\
%x + 2y & > 3 & \quad (2) \\
%5x - y & > 2 & \quad (3)
%\end{eqnarray*}
%If we propose a solution $(1,1)$, since both the first two constraints are violated, the oracle can returns us either ``(1)" or ``(2)".
Note that we know only the {\em index} $i$, but not $a_i$ and $b_i$, the content of the constraint.
In addition, if multiple constraints are violated, only the index of \emph{one} of them is returned.

In the present paper, we study the computational complexity of solving the \ulp problem. As in the standard complexity theory with oracles, we assume that each query to the oracle takes unit time. We will analyze the complexity for two types of oracles: the \emph{worst-case oracle} which can return an arbitrary index among those violated constraints %plays as an adversary and always returns the worst possible index of a violated constraint
(Section~\ref{section-worst}), and the \emph{furthest oracle} which returns the index of a ``most" violated constraint (Section~\ref{section-furthest}). % The time complexity of the \ulp problem is the minimum amount of time needed for any algorithm that solves \ulp on a worst-case input and worst-case answers of the oracle.

\medskip \noindent
\textbf{Input size and solution precision.}
A clarification is needed for the size of the input. %There is a potential issue in measuring the complexity of our algorithms.
Since the input LP instance $(A,b)$ is unknown, neither do we know its binary size.
To handle this issue, we assume that we are given the information that there are $m$ constraints\footnote{Indeed, the number of constraints can be unknown to us as well: In an algorithm, we only need to track those violated constraints that have ever been returned by the oracle.},
$n$ variables, and the binary size of the input instance $(A,b)$ is at most $L$. Note that $L$ is $O(mn\log(N))$, where $N$ is the maximum entry (in abstract value) in $A$ and $b$. We say that an algorithm solves \ulp efficiently if its running time is $poly(m, n,L)$.

Given an LP with input size $L=O(mn\log(N))$, it is known~\cite{Kha79} that if the LP has a feasible solution, then there is one whose numerators and denominators of all components are bounded by $(nN)^n$. Hence, an alternative way to describe our assumption is that, instead of knowing the input size bound $L$, there is a required precision for feasible solutions.
That is, we only look for a feasible solution in which the numerators and denominators of all components are bounded by the required precision.
These two assumptions, i.e., giving an input size bound and giving a solution precision requirement, are equivalent,
and it is necessary to have one of them in our algorithms.\footnote{Otherwise, we may not be able to distinguish between cases when there are no feasible solutions (e.g., $x>0,x<0$)
and when there are feasible solutions but the feasible set is very small (e.g., $x > 0, x < \epsilon$).
For any queried solution $y>0$, the oracle always returns that the second constraint is violated. However, we cannot distinguish whether it is $x<0$ in the first LP or $x<\epsilon$ in the second LP,
as $\epsilon$ can be arbitrarily small and we have no information on how small it is.}
In the rest of the paper, we will use the first one, the input size bound, to analyze the running time of our algorithms.

%
%For example, consider the following two sets of linear programming:
%\begin{itemize}
%\item[(a)] $x > 0, x < \epsilon$
%\item[(b)] $x > 0, x < 0$
%\end{itemize}
%
%It is obvious that instance (a) is feasible while instance (b) is not.
%But at any step of any algorithm, suppose that we have queried
%candidate solutions $x_1, x_2, \ldots, x_k$. If we choose
%$\epsilon = \frac{\min_i|x_i|}{2}$, then the oracles for instances (a)
%and (b) will always return the same answer for each query, which
%makes these two instances indistinguishable for our algorithm.
% NO ONE IS TAKING THE TOKEN

\medskip \noindent
\textbf{Geometric background.}
The geometric concepts, notation and facts that we will use are summarized as follows. The unit sphere in $\mbR^n$ is denoted by $S^{n-1} = \{x\in \mbR^n: \|x\|= 1\}$, where, throughout this paper, $\|\cdot\|$ refers to the $\ell_2$-norm.

\begin{Def}
  A set $C\subseteq \mathbb{R}^n$ is a {\em convex cone} if for any $x, y \in C$
  and any $\alpha, \beta > 0$, $\alpha x + \beta y$ is also in $C$.
  The \emph{normalized volume} (also called volumetric modulus) of a convex cone $C$ is defined as the ratio
  $$v(C) = \frac{\vol_{n}(C \cap B^{n})}{\frac{1}{2}\cdot \vol_{n}(B^{n})}$$
  where $B^{n}$ is the closed unit ball in $\mathbb{R}^n$ and $\vol_n$ refers to the $n$-dimensional volume.
\end{Def}

\begin{Def}
  For any set $C\in \mbR^n$, its \emph{polar cone} $C^{*}$ is the set
  $$C^{*} = \big\{y \in \mathbb{R}^n : \langle x, y \rangle \leq 0, \forall x \in C\big\}.$$
\end{Def}

\begin{Def}
  For any point set $P$, its \emph{convex hull} $conv(P)$
  is the intersection of all convex sets that contain $P$.
   In particular, for any points $p_1, p_2, \ldots, p_m \in \mathbb{R}^n$,
   $$conv\big(\{p_1,p_2, \ldots, p_m\}\big) = \left\{\sum_{i=1}^{m}\lambda_ip_i
   : \lambda_i \geq 0, \sum_{i=1}^m\lambda_i = 1\right\}.$$
\end{Def}

We will use the following technical lemmas.

\begin{Lem}[\cite{San54}] \label{lem:polar}
  Let $C_1, C_2, \ldots, C_k$ be $k$ closed convex cones, then $(\bigcap_i C_i)^* = conv(\bigcup_i C_i^*)$.
\end{Lem}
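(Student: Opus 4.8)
The plan is to derive the identity from the bipolar theorem for closed convex cones, $C^{**}=C$, together with two elementary facts about the polar operation: (i) it reverses inclusions, $S\subseteq T\Rightarrow T^*\subseteq S^*$; and (ii) for any family of sets, $\big(\bigcup_i S_i\big)^* = \bigcap_i S_i^*$, and $(conv(S))^* = S^*$ (the defining inequalities $\langle x,y\rangle\le 0$ are preserved under convex combinations of $x$). I will prove the two inclusions separately.

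For ``$\supseteq$'': since $\bigcap_i C_i\subseteq C_j$ for every $j$, fact (i) gives $C_j^*\subseteq\big(\bigcap_i C_i\big)^*$; as $\big(\bigcap_i C_i\big)^*$ is a closed convex cone (in particular a convex set containing $0$), it contains $conv\big(\bigcup_j C_j^*\big)$. For ``$\subseteq$'' I pass to polars: by fact (ii) and the bipolar theorem applied to each closed convex cone $C_i$,
$$\Big(conv\big(\textstyle\bigcup_i C_i^*\big)\Big)^* = \bigcap_i (C_i^*)^* = \bigcap_i C_i^{**} = \bigcap_i C_i .$$
Taking polars once more and using the bipolar theorem for the closed convex cone $\overline{conv}\big(\bigcup_i C_i^*\big)$ (which contains $0$), one gets $\big(\bigcap_i C_i\big)^* = \overline{conv}\big(\bigcup_i C_i^*\big)$; combined with the first inclusion this is the claimed equality. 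An equivalent route, if one prefers to avoid the bipolar theorem as a black box, is a direct separation argument: if $y\notin\overline{conv}\big(\bigcup_i C_i^*\big)$, separate $y$ from that closed convex cone by a vector $x$ with $\langle x,y\rangle>0\ge\langle x,z\rangle$ for all $z$ in the cone; then $x\in (C_i^*)^*=C_i$ for every $i$, so $x\in\bigcap_i C_i$ while $\langle x,y\rangle>0$, hence $y\notin\big(\bigcap_i C_i\big)^*$.

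The only point requiring care — and the sole ``obstacle'' worth flagging — is the closure operation: $conv\big(\bigcup_i C_i^*\big)$ coincides with the Minkowski sum $C_1^*+\cdots+C_k^*$ (group the coefficients of a convex combination by which cone each summand lies in, using that each $C_i^*$ is a cone), and a sum of closed convex cones need not be closed in general, so in full generality the right-hand side should be read as the \emph{closed} convex hull. In every application in this paper the cones involved are polyhedral, and the Minkowski sum of finitely many polyhedral cones is again polyhedral, hence closed; thus $conv\big(\bigcup_i C_i^*\big)$ is already closed and the identity holds verbatim. I will state the lemma with this understanding.
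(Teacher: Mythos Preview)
The paper does not prove this lemma at all; it is simply quoted from Sandgren~\cite{San54} as a background fact, so there is no ``paper's own proof'' to compare against. Your argument is the standard one and is correct: the inclusion $conv\big(\bigcup_i C_i^*\big)\subseteq\big(\bigcap_i C_i\big)^*$ is immediate from order reversal, and the reverse inclusion follows from $\big(conv(\bigcup_i C_i^*)\big)^*=\bigcap_i C_i^{**}=\bigcap_i C_i$ together with the bipolar theorem, yielding the closed convex hull on the right.

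Your caveat about closure is exactly right and worth keeping. As stated (without closure), the identity can fail for general closed convex cones, so strictly speaking the lemma as the paper records it is only literally true when $conv\big(\bigcup_i C_i^*\big)$ is already closed. You correctly observe that in the paper's main application (the analysis of $\region(i)^*$ in Section~\ref{section-worst}) the cones are polyhedral and the issue does not arise. One small remark: the lemma is also invoked in the proof of Lemma~\ref{lem:cham:dege}, where the $C$'s are arbitrary convex sets and the associated polar cones need not be polyhedral; however, the conclusion drawn there only uses the inclusion $\big(\bigcap_{C\in S} C\big)^*\subseteq \overline{conv}\big(\bigcup_{C\in S} C^*\big)$ followed by a Carath\'eodory-type covering, so the closure does no harm to that argument either.
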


It was shown in~\cite{PS98} (Lemma~8.14) that if an LP has a feasible solution, then the set of solutions within the ball $\big\{x\in \mathbb{R}^n : \|x\|\le n2^L\big\}$ has volume at least $2^{-(n+2)L}$. Given this lemma, we can easily derive the following claim.

\begin{Lem}\label{lem:feasiblesize}
  If a linear program $Ax > 0$ %with $m$ constraints, $n$ variables, and input size $L$
  has a feasible solution, then the feasible region is a convex cone in $\mathbb{R}^n$ and
  has normalized volume no less than $2^{-(2n+3)L}$.
\end{Lem}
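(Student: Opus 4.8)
The plan is to treat the statement's two parts separately. The first is immediate from the definitions: the feasible region $C = \{x\in\mathbb{R}^n : \langle a_i,x\rangle > 0 \text{ for all } i\in[m]\}$ is a convex cone, since for $x,y\in C$ and $\alpha,\beta>0$ we have $\langle a_i,\alpha x+\beta y\rangle = \alpha\langle a_i,x\rangle + \beta\langle a_i,y\rangle > 0$ for every $i$, so $\alpha x+\beta y\in C$. (Strictness of the constraints is used only here; $C$ is then an open cone with apex $0$ on its boundary, which does not matter for volume.)

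For the normalized volume, the single idea is that a cone scales homogeneously against balls centred at its apex: since $C$ is a cone, $C\cap B(0,R) = R\cdot(C\cap B^n)$, hence $\vol_n(C\cap B(0,R)) = R^n\,\vol_n(C\cap B^n)$ for every $R>0$. I would then invoke the cited estimate of \cite{PS98}: the homogeneous LP $Ax>0$ has binary size at most $L$ and, by hypothesis, a feasible solution, so its feasible points inside the ball of radius $R=n2^L$ occupy volume at least $2^{-(n+2)L}$; dividing by $R^n = n^n2^{nL}$ gives $\vol_n(C\cap B^n)\ge 2^{-(2n+2)L}/n^n$. Dividing further by $\tfrac12\vol_n(B^n)$ and using the crude bound $\vol_n(B^n)\le\vol_n([-1,1]^n)=2^n$ yields
$$v(C)\;\ge\;\frac{2\cdot 2^{-(2n+2)L}}{n^n\cdot 2^n}\;=\;\frac{2^{\,1-(2n+2)L-n}}{n^n},$$
and it then remains only to check that the right-hand side is at least $2^{-(2n+3)L}$, which rearranges to $2^{\,L-n+1}\ge n^n$.

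I do not expect a genuine obstacle here. The last inequality holds whenever $L$ is not pathologically small relative to $n$ (say $L\ge n\log_2 n + n - 1$); and since the given size bound $L$ may always be replaced by a larger value without weakening the hypothesis (the \cite{PS98} estimate only gets weaker), one may simply assume $L\ge n^2$ for $n\ge2$, while the case $n=1$ is trivial because $C$ is then a half-line or all of $\mathbb{R}$. The only points worth a careful second look are that the \cite{PS98} lemma, although phrased for a general LP $Ax>b$, applies verbatim to the homogeneous instance $b=0$, and that the cone-scaling identity is used with exactly the radius $n2^L$ that appears in that lemma, so that the exponents line up.
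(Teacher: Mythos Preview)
Your approach is exactly the intended one: the paper gives no proof beyond the remark that the claim ``can easily be derived'' from the cited \cite{PS98} estimate, and cone--scaling $C\cap B(0,R)=R\cdot(C\cap B^n)$ together with $\vol_n(B^n)\le 2^n$ is the natural derivation. The convex-cone part and the scaling computation are both fine.

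Where I would push back is on your handling of the leftover inequality $2^{L-n+1}\ge n^n$. Your proposed fix---``replace $L$ by a larger value''---does not do what you want. Enlarging $L$ to some $L'\ge L$ indeed leaves the hypothesis intact and lets the \cite{PS98} bound apply with $L'$, but then your chain of inequalities yields only $v(C)\ge 2^{-(2n+3)L'}$, which is \emph{weaker} than the stated conclusion $v(C)\ge 2^{-(2n+3)L}$ whenever $L'>L$. So this move does not close the gap for small $L$; it moves the goalposts. The honest statement your argument proves is $v(C)\ge 2^{-(2n+3)L - O(n\log n)}$, or equivalently the lemma as written under the standing assumption $L\ge c\,n\log n$ for a suitable constant $c$. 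That assumption is harmless in context (under any reasonable encoding of an $m\times n$ matrix one has $L\ge mn$, and the lemma is only used to fix a volume threshold in the algorithm, where any $2^{-\mathrm{poly}(n,L)}$ bound suffices), but you should say so explicitly rather than claim to have recovered the exact constant.
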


% No one is taking the token

\section{Furthest Oracle}\label{section-furthest}

In this section, we will consider the \ulp problem $Ax> b$ with %a different oracle, which we will refer to as
the \emph{furthest oracle}, formally defined as follows. For a proposed candidate solution $x$, if $x$ is not a feasible solution, instead of returning the index of an arbitrary (worse case) violated constraint, the oracle returns the index of a ``most violated" constraint, measured by the Euclidean distance from the proposed solution $x$ and the half-space defined by the constraint. More precisely, the oracle returns the index of a constraint which, among all $i$ with $\langle a_i, x \rangle \leq b_i$, maximizes $\frac{b_i - \langle a_i, x \rangle}{\|a_i\|}$, the distance from $x$ to the half-space $\{z\in \mathbb{R}^n: \langle a_i, z \rangle \geq b_i\}$.
%(Throughout this section, all vector norms $\|\cdot \|$ refer to the $\ell_2$-norm.)
If there are more than one maximizer, the oracle returns an arbitrary one.

Compared to the worse-case oracle, the furthest oracle reveals more information about the unknown LP system, and
indeed, it can help us to derive a more efficient algorithm. Our main theorem in this section is the following.

\begin{Thm}
The \ulp problem $Ax>b$ with a non-degenerate matrix $A$ in the furthest oracle model can be solved in time polynomial in the input size.
\end{Thm}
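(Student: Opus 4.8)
The plan is to run the ellipsoid method not in the solution space $\mathbb{R}^n$ but in the parameter space $\mathbb{R}^{m(n+1)}$ of the unknown data $(A,b)$, using the size bound $L$ to set up an initial bounding box. We maintain an ellipsoid $E$ that is guaranteed to contain the point $(A,b)$, and at each step examine its center $(A',b')$. If the surrogate LP $A'x>b'$ has a feasible point $x'$ (found in the usual way, and searched for only inside a ball of radius $\sim n2^L$), we feed $x'$ to the true furthest oracle: either it answers \yes and $x'$ solves the original LP, or it returns an index $i$ with $\langle a_i,x'\rangle\le b_i$, which is a linear inequality in the unknowns $(A,b)$ that $(A',b')$ strictly violates, hence a valid separating hyperplane; we cut and continue. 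Each such cut is valid for the true $(A,b)$, and when the original LP is feasible the set of parameters $(A',b')$ whose surrogate has a feasible point lying inside the true feasible region is full-dimensional — its volume is bounded below by Lemma~\ref{lem:feasiblesize} together with a perturbation argument — so this loop terminates after $poly(m,n,L)$ iterations whenever the original LP is feasible.

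The crux is the remaining case, where the center's surrogate LP $A'x>b'$ is infeasible: there is then no point to query and no explicit separating hyperplane. Here we abandon the ellipsoid and decide feasibility of the true LP directly from oracle answers. The first observation is that the furthest oracle's reply is invariant under positive rescaling of each constraint, so we may assume $\|a_i\|=1$, and then on every infeasible query $x$ the oracle returns $\arg\max_i\,(b_i-\langle a_i,x\rangle)$; consequently the infeasible points of $\mathbb{R}^n$ are partitioned into polyhedral cells whose pairwise boundaries are the hyperplanes $\{x:\langle a_i-a_j,x\rangle=b_i-b_j\}$ — a furthest-site Voronoi-type diagram of the unknown constraints. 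The second ingredient is the sufficient-and-necessary reduction of $Ax>b$ to a homogeneous system: $Ax>b$ is feasible iff $\{(x,t):\langle a_i,x\rangle-b_it>0\ \forall i,\ t>0\}$ is nonempty, and a query $x/t$ with $t>0$ in the original problem simulates a query of $(x,t)$ to this homogeneous system, now with per-site weights $\|(a_i,-b_i)\|$. On the sphere this turns the oracle answers into a \emph{weighted spherical closest Voronoi diagram} of the normalized sites $(a_i,-b_i)$, and the extra constraint $t>0$ supplies the reference direction needed to make the recovery below well-posed.

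From this diagram we recover the constraints. A site has a full-dimensional Voronoi cell iff $(a_i,-b_i)$ is extreme in $conv\{(a_1,-b_1),\dots,(a_m,-b_m)\}$ together with the reference, and — crucially — in the homogeneous system every non-extreme constraint is implied by the others, so it suffices to recover the extreme ones. The non-degeneracy hypothesis guarantees precisely that these cells are full-dimensional and in general position, hence each is detectable and can be delimited to any required precision by a polynomial number of oracle queries (binary search along lines to locate the bisectors). The pairwise great-sphere bisectors yield the scaled differences of the extreme sites, the normalization $\|a_i\|=1$ together with the reference pins down the remaining degrees of freedom, and we thereby reconstruct a system $A''x>b''$ equivalent to the homogeneous reduction of $Ax>b$. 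Solving $A''x>b''$ exactly — no further oracle calls — then decides feasibility of the original LP and, if feasible, outputs a witness. Combining the two cases: the procedure halts in $poly(m,n,L)$ time, since if it ever reaches an infeasible center it finishes via the recovery subroutine, otherwise the ellipsoid loop finishes as above, and if neither has finished within the iteration bound then the original LP is infeasible.

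I expect the main obstacle to be exactly this infeasible-center step: showing that the oracle's answers genuinely determine all extreme constraints of the homogeneous system (the bookkeeping of the weights and the careful use of the reference direction coming from $t>0$), and that the weighted spherical Voronoi diagram can be reconstructed to the accuracy dictated by $L$ using only polynomially many queries under the non-degeneracy assumption. The ellipsoid portion is then comparatively routine, modulo the standard care about precision, bounding the norm of surrogate solutions, and the volume lower bound that drives convergence of the outer loop.
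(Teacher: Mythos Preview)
Your ellipsoid-in-parameter-space skeleton matches the paper's, and the feasible-center step is essentially the same. The gap is in the infeasible-center branch, where your route diverges from the paper's and runs into a real obstacle.

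You propose to homogenize $Ax>b$ to $\{\langle a_i,x\rangle-b_it>0,\ t>0\}$ and simulate a furthest oracle for it by querying $x/t$. But as the paper explicitly warns (start of Section~3), this reduction does \emph{not} respect the furthest oracle: querying $x/t$ returns $\arg\max_i(b_i-\langle a_i,x/t\rangle)=\arg\max_i(b_it-\langle a_i,x\rangle)$, whereas the true furthest oracle for the homogeneous system would return $\arg\max_i\frac{b_it-\langle a_i,x\rangle}{\|(a_i,-b_i)\|}$. You absorb this into ``per-site weights $\|(a_i,-b_i)\|$'', but these weights are \emph{unknown} (they depend on the hidden $b_i$), so what you can observe is a weighted spherical diagram with unknown weights. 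The uniqueness theorem you implicitly rely on (Ash--Bolker) is for \emph{unweighted} closest Voronoi diagrams; your sentence ``the normalization $\|a_i\|=1$ together with the reference pins down the remaining degrees of freedom'' is where the argument is missing. Concretely, the bisectors only give you the hyperplanes $\langle(a_i,-b_i)-(a_j,-b_j),\cdot\rangle=0$, i.e.\ each difference up to an unknown scalar, and it is not clear that $\|a_i\|=1$ plus the $t>0$ hemisphere suffice to resolve all these scalars and the global translation simultaneously.

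The paper avoids this entirely: in the infeasible-center case it does \emph{not} abandon the ellipsoid. It keeps $(A',b')$ in play, computes its generalized furthest-Voronoi tessellation $\gvor'$, and runs a consistency check against the hidden $\gvor$ by probing near the facets of $\gvor'$ --- any mismatch is a separating hyperplane and the ellipsoid continues. If $\gvor=\gvor'$, the paper still cannot recover $(A,b)$ globally (it notes that $Ax>b$ and $Ax>b+c\mathbf 1$ induce the same $\gvor$). The key device you are missing is Lemma~\ref{lemma-furthest}: one computes an extreme point $p\in\extreme(A',b')$ and restricts attention to a small ball around $p$, where the oracle's behaviour is governed by the \emph{homogeneous, unweighted} support system $\{\langle a_i,x\rangle>0:i\in S\}$. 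Only there does the $Ax>0$ machinery (unweighted spherical Voronoi diagram, Hartvigsen recovery, Ash--Bolker uniqueness under non-degeneracy) apply cleanly; infeasibility of this local support system certifies infeasibility of $Ax>b$, and feasibility of it yields yet another separating hyperplane. Your global homogenization tries to shortcut this localization step, and that is precisely where the weights intrude and the recovery argument breaks down.
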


We call a matrix $A=(a_1,\ldots,a_m)^T$ {\em non-degenerate} if for each point $p\in S^{n-1}=\{x\in \mathbb{R}^n : \|x\|=1\}$, at most $n$ points in $\big\{\frac{a_1}{\|a_1\|},\ldots,\frac{a_m}{\|a_m\|}\big\}$ have the same spherical distance to $p$ on $S^{n-1}$.
This assumption is with little loss of generality; it holds for almost all real instances and can be derived easily by a small perturbation.

Note that in the worst-case oracle setting, we can easily reduce the general LP $Ax>b$ to $Ax-by>0$ by adding a new variable $y$.
However, the same trick does not apply to the furthest oracle setting. This is because for a given query, the furthest violated constraint in $Ax>b$ can be different from that in $Ax-by>0$. %That is, the oracle may behave in a different manner in the two linear systems.
Next we will first describe our algorithm for the special case $Ax>0$, then generalize the algorithm to the $Ax>b$ case.
The formal proof of the algorithm is deferred to Appendix~\ref{appendix-furthest-formal}.

\subsection{Algorithm Solving {\large $Ax>0$}}\label{sec:Ax>0}

%For the (unknown) matrix $A=(a_1,\ldots,a_m)^T$, as $b=0$,
We assume without loss of generality that $\|a_i\| = 1$ for all $i$. Furthermore, we can also always propose points in $S^{n-1}$ for the same reason.

\medskip \noindent
\textbf{Ellipsoid method and issues.}
The main approach of the algorithm is to use the ellipsoid method to find the unknown matrix $A = (a_{ij})_{m\times n}$,
which can be viewed as a point in the dimension $\mathbb{R}^{mn}$, i.e., a degenerate polyhedron in $\mathbb{R}^{mn}$.
Initially, for the given input size information $m,n$ and $L$, we choose a sufficiently large ellipsoid that contains the candidate region of $A$, and pick the center $A'\in \mathbb{R}^{mn}$ of the ellipsoid. To further the ellipsoid method, we need a hyperplane that separates $A'$ from the true point $A$.

Consider the linear system $A'x > 0$. If it has a feasible solution $x$, then $\{x:A'x>0\}$ is a full-dimensional cone.  We query an $x$ in this cone to the oracle.
If the oracle returns an affirmative answer, then $x$ is a feasible solution of $Ax > 0$ as well, and the job is done.
Otherwise, the oracle returns an index $i$, meaning that $\langle a_i, x\rangle\le 0$.
Hence, we have $\langle a'_i, x\rangle >0 \ge \langle a_i,x\rangle$, which defines a separating hyperplane between $A$ and $A'$
(note that we know the information of $A'$ and $x$).
Thus, we can cut the candidate region of $A$ by a constant fraction and continue with the ellipsoid method.
%\ning{Below is from a reviewer, is it necessary? BTW I forgot why we need to perturb $A'$; see `putting things together'.

%{\em During each iteration in the above algorithm, we try to find a feasible solution for $A' x >0$ where $A'$ is the center of the ellipsoid. However, $A' x \geq 0$ may not be a full-dimensional cone and therefore, we may need to perturb $A'$ and find a feasible solution for the perturbed cone. The authors do not discuss this issue.}}

%\shengyu{need discussion here}

Note that there is a small issue: In our problem, the solution polyhedron degenerates to a point $A\in \mathbb{R}^{mn}$ and has volume 0.
As the input $A$ is unknown, we cannot use the standard approach in the ellipsoid method to introduce a positive volume for the
polyhedron by adding a small perturbation. This issue can be handled by a more involved machinery developed by Gr{\"o}tschel,
Lov{\'a}sz, and Schrijver~\cite{GLS84,GLS88}, which solves the strong nonemptiness problem for well-described polyhedra given by a strong separation oracle, as long as a \emph{strong separation oracle} exists. In the algorithms described below, we will construct such oracles, thereby circumventing the issue of perturbation of the unknown point $A$.
The same idea has been used in \cite{BCZ12} to find a Nash equilibrium when the payoff matrix is unknown and degenerates to a point in a high-dimensional space. More discussions refer to~\cite{GLS84,GLS88,BCZ12}.

%\ning{For the comments below, see the changes above

%{\em In the ellipsoid algorithm, we assume there is polynomial lower bound on the volume of the polytope if it is feasible. Here, we are trying to find a single point $A$. The authors mention that using appropriate perturbations, we can find a neighborhood, ${\cal N}$ of $A$ such that $Ax >0$ is feasible if and only if $A' x > 0$ for any $A' \in {\cal N}$. However, the authors do not give the details of how to compute the perturbation. In standard literature, the perturbations are computed from a given inequality description of the polytope. Here, $A$ is unknown and therefore, the inequality description whose only feasible solution is $A$ is also unknown.}}

The main difficulty is when the LP $A'x > 0$ is infeasible. %That is, for any $x\in \mathbb{R}^n$, there is an $a'_i$ such that $\langle a'_i,x\rangle\le 0$.
In the following part of this section we will discuss how to find a proper separating hyperplane in this case.

\medskip \noindent
\textbf{Spherical (closest) Voronoi diagram.}
Note that $Ax > 0$ is equivalent to $-Ax < 0$, and $i$ minimizes $\langle a_i, x \rangle$ if and only if it maximizes $\langle -a_i, x \rangle$. In the rest of this subsection, for  notational convenience, we use $x\in S^{n-1}$ to denote a proposed solution point, and let $y=-x$.
Since the distance from a proposed solution $x$ to a half-space
$\{z\in \mathbb{R}^n :  \langle a_i, z \rangle \geq 0\}$ is $-\langle a_i, x \rangle = \langle a_i, y \rangle$, the oracle returns us an index
$i \in \arg\max_i \big\{\langle a_i, {y} \rangle : \langle a_i, y \rangle \ge 0\big\}$
%&=& \arg\min_i \big\{ \langle a_i, x \rangle  : \langle a_i, x \rangle \le 0\big\}
if $x$ is not feasible. Note that $\|z - a_i\| \leq \|z - a_j\|$ if and only if $\langle a_i, z \rangle \geq \langle a_j, z \rangle$ for any $z\in S$;
thus, $\langle a_i, y \rangle$ is closely related to the distance between $a_i$
and $y$ on $S^{n-1}$. That is, the oracle actually provides information about the closest Voronoi diagram of $a_1, \ldots, a_m$ on $S^{n-1}$.

Recall that the (closest) {\em Voronoi diagram} (also called Dirichlet tessellation) of a set of points $\{a_i\}_i$ in a space $S$ is
a partition of $S$ into cells, such that each point $a_i$ is associated with the cell $\{z\in S : d(z,a_i) \leq d(z,a_j), \forall j\}$, where $d$ in our case is the spherical distance on $S^{n-1}$.
%It is one of the most fundamental structures studied in computational geometry, and has many applications in various areas.
We denote by $\vor$ the spherical (closest) Voronoi diagram of the points $a_1,\ldots,a_m$ on $S^{n-1}$ and denote by $\vor(i)$ the cell in the diagram associated with $a_i$, i.e.,
\begin{eqnarray}
  \vor(i) &=& \big\{z\in S^{n-1} :  \langle a_i,z \rangle \ge \langle a_j,z \rangle, \ \forall j\in [m]\big\} \label{eq:vor} \\
  &=& \big\{z\in S^{n-1} :  \|z - a_i\| \leq \|z - a_j\|, \ \forall j\in [m]\big\}. \nonumber
\end{eqnarray}
%Note that $\vor(i)$ defines a closed convex set in the unit sphere.
If the oracle returns $i$ upon a query $x=-y\in S^{n-1}$, then $y\in \vor(i)$.

\medskip \noindent
\textbf{Representation.} Note that for a general (spherical) Voronoi diagram formed by $m$ points, it is possible that some of its cells contain exponential number of vertices, which is unaffordable for our algorithm. However, in the $H$-representation of a convex polytope, every cell can be represented by at most $m$ linear inequalities, as shown in Formula~(\ref{eq:vor}). In the following, we will see that the information of these linear inequalities is sufficient to implement our algorithm efficiently.

%A critical observation is that, for any query $x=-y\in S^{n-1}$, if $x$ is not a feasible solution to the LP $Ax>0$, then for the index $i$ that the oracle returns, $y\in \vor(i)$. %Otherwise, since $a_i$ is the closest point to $y$ in the unit sphere, we will have $\langle a_j,y \rangle > 0$, for any $j$; this would imply that $y$ is a feasible solution to the LP. This characterization of the oracle is crucial to our algorithm.

\medskip \noindent
\textbf{Weighted spherical (closest) Voronoi diagram.}
For the presumed matrix $A'$, note that it can be an arbitrary point in the space $\mbR^{mn}$ and may not necessarily fall into $S^{n-1}$.
Our solution is to consider a {\it weighted spherical Voronoi diagram}, denoted by $\vor'$, of points $\frac{a'_1}{\|a'_1\|}, \ldots, \frac{a'_m}{\|a'_m\|}$ on $S^{n-1}$ as follows: for each point $\frac{a'_i}{\|a'_i\|}$, its associated cell is defined as
\[
\vor'(i)=\big\{z\in S^{n-1} : \langle a'_i, z \rangle \geq \langle a'_j, z \rangle, \forall j\in [m]\big\}.
\]
Note that $\vor'$ is a partition of $S^{n-1}$; and if we assign a weight $\|a'_i\|$ to each point $\frac{a'_i}{\|a'_i\|}$, then for each point $p\in\vor'(i)$, the site among $\frac{a'_1}{\|a'_1\|},\ldots,\frac{a'_m}{\|a'_m\|}$ that has the smallest {\em weighted} distance to $p$ is $\frac{a'_i}{\|a'_i\|}$.\footnote{The reason of defining such a weighted spherical Voronoi diagram is that we want to have a separating hyperplane between $A$ and $A'=(a'_1,\ldots,a'_m)^T$, rather than $\big(\frac{a'_1}{\|a'_1\|},\ldots,\frac{a'_m}{\|a'_m\|}\big)^T$.}
Note that each cell of $\vor'$ is defined by a set of linear inequalities (other than the unit norm requirement) and each of them can be computed efficiently.% in some sense that will be clear later.

Now we have two diagrams: $\vor$, which is unknown, and $\vor'$, which can be represented efficiently using the $H$-representation. If $\vor\neq \vor'$, then there exists a point $y\in S^{n-1}$ such that $y \in \vor(i)$ and $y \notin \vor'(i)$. Suppose that $y \in \vor'(j)$ for some $j \neq i$. According to the definition, we have $ \langle a_i, y \rangle \geq \langle a_j, y \rangle$ and $\langle a'_i, y \rangle < \langle a'_j, y \rangle$; this gives us a separating hyperplane between $A$ and $A'$. The questions are then (1) how to find such a point $y$ when $\vor \neq \vor'$, and (2) what if $\vor=\vor'$.

\medskip \noindent
\textbf{Consistency check.}
In this part we will show how to check whether $\vor=\vor'$, and if not equal, how to find a $y$ as above. Although we know neither the positions of points $a_1, \ldots, a_m$, nor the corresponding spherical
Voronoi diagram $\vor$, we can still efficiently compare it with $\vor'$, with the help of the oracle.

For each cell $\vor'(i)$, assume that it has $k$ facets (i.e., $(n-1)$-dimensional faces). Note that $k\le m$ and that $\vor'(i)$ is uniquely determined by these facets. Further, each facet is defined by a hyperplane $H'_{ij}=\{z\in S^{n-1}: \langle a'_i, z \rangle = \langle a'_j,z \rangle\}$ for some $j \neq i$. %That is, $H'_{ij}$ contains all those points in the unit sphere that have equal distance to $a'_i$ and $a'_j$.
To decide whether $\vor=\vor'$, for each $i$ and $j$ such that $\vor'(i) \cap \vor'(j) \neq \emptyset$, we find a sufficiently small $\epsilon_y$ and three points $y$, $y+\epsilon_y$, $y-\epsilon_y$, such that
\[
y\in \vor'(i) \cap \vor'(j)\subset H'_{ij}, \ \ y + \epsilon_y \in \vor'(i)-\vor'(j),\ \ y - \epsilon_y\in \vor'(j)-\vor'(i).
\]
Notice that such $y$ and $\epsilon_y$ exist and can be found efficiently. %\textbf{(More explanations...)}
% We assume that matrix $A'$ is non-degenerate (if $A'$ is degenerate, we can slightly move each $a_i'$ to let them be in general positions, without affecting the overall Ellipsoid algorithm). Thus each cell $\vor'(i)$ has a positive volume. Then we can require that $y+\epsilon_y$ is $1/exp$ away from any boundary $\vor'(i)$. When $\epsilon_y$ is small enough (but still singly exponential), projection of $y+\epsilon_y$ onto the unit sphere still remains inside $\vor'(i)$.
We now query points $y + \epsilon_y$ and $y - \epsilon_y$ to the oracle. If the oracle does return us the expected answers, i.e., $i$ and $j$, respectively, then, with $\|\epsilon_y\|$ sufficiently small (up to $2^{-poly(L)}$), we can conclude that $y$ must also be in the facet of $\vor(i)$ and $\vor(j)$ of the hidden diagram $\vor$. That is, $y \in H_{ij} = \{z\in S^{n-1}: \langle a_i,z \rangle = \langle a_j,z \rangle\}$. We implement the above procedure $n-1$ times to look for $n-1$ linearly independent points $y_1, \ldots, y_{n-1} \in \vor'(i) \cap \vor'(j)$. If the oracle always returns the expected answers $i$ and $j$, respectively, for all $k=1,\ldots,n-1$, then we know that $H_{ij} = H'_{ij}$.

The procedure described above can be implemented in polynomial time. Now we can use this approach to check all facets of all of the cells of $\vor'$. If none of them returns us an unexpected answer, we know that every facet of every cell $\vor'(i)$ is also a facet of cell $\vor(i)$, i.e., the set of linear constraints that defines $\vor'(i)$ is a subset of those that define $\vor(i)$.
Thus, we have $\vor(i) \subseteq \vor'(i)$ for each $i$. Together with the fact that both $\vor$ and $\vor'$ are tessellations of $S^{n-1}$, we can conclude that $\vor = \vor'$.
%Note that if the oracle returns us an unexpected answer at any point of the above procedure, that is, for some query $y \in \vor'(i)$, the oracle returns us an index $j \neq i$, then we immediately get a hyperplane that separates $A$ and $A'$. This can be summarized by the following lemma.
%\textbf{We never defined what we mean by $\vor = \vor'$. The most natural explanation seems to be that $\forall i$, $\vor(i) = \vor'(i)$. If this is agreed, then we don't seem to need to focus on the intersection $H_{ij}$. I think we could just directly check/prove $\vor(i) = \vor'(i)$. Will explain more tomorrow.}
\begin{Lem}
  For the hidden matrix $A \in \mbR^{mn}$ with spherical Voronoi diagram
  $\vor$ and proposed matrix $A' \in \mbR^{mn}$ with weighted spherical
  Voronoi diagram $\vor'$, we can in polynomial time
  \begin{itemize}
  \addtolength{\itemsep}{-0.5\baselineskip}
  \item either conclude that $\vor = \vor'$, or
  \item find a separating hyperplane between $A$ and $A'$.
  \end{itemize}
\end{Lem}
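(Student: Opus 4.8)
The plan is to turn the informal "consistency check" discussion that precedes the statement into a rigorous procedure with two exhaustive outcomes. The overall strategy is: enumerate every candidate facet of the known diagram $\vor'$, and for each one use a small number of well-chosen queries to the furthest oracle to test whether that facet is also a facet of the hidden diagram $\vor$. If any test fails, the failing query will by construction exhibit a separating hyperplane between $A$ and $A'$; if every test succeeds, I will argue that $\vor=\vor'$.

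First I would set up the geometry. Since $\|a_i\|=1$ is assumed and we only propose points $x\in S^{n-1}$, Equation~(\ref{eq:vor}) tells us that on a query $x=-y$ the furthest oracle returns the index $i$ of the (closest, spherical) Voronoi cell $\vor(i)$ containing $y$. On the $A'$ side we have the explicit $H$-representation: $\vor'(i)=\{z\in S^{n-1}:\langle a'_i,z\rangle\ge\langle a'_j,z\rangle,\ \forall j\}$, with at most $m$ defining inequalities, and each candidate facet sits on a hyperplane $H'_{ij}=\{z: \langle a'_i,z\rangle=\langle a'_j,z\rangle\}$. For each ordered pair $(i,j)$ with $\vor'(i)\cap\vor'(j)$ of dimension $n-2$ (a genuine facet), I pick $n-1$ affinely independent points $y_1,\dots,y_{n-1}$ in the relative interior of that facet, and for each $y_k$ a perturbation $\epsilon_{y_k}$ with $y_k+\epsilon_{y_k}\in\vor'(i)\setminus\vor'(j)$ and $y_k-\epsilon_{y_k}\in\vor'(j)\setminus\vor'(i)$. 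All of this is computed from the explicit inequalities of $\vor'$ in polynomial time, and $\|\epsilon_{y_k}\|$ can be taken as small as $2^{-\mathrm{poly}(L)}$ because Lemma~\ref{lem:feasiblesize}-style volume bounds guarantee that the hidden facets of $\vor$, when they exist, have enough "room". I then query $y_k+\epsilon_{y_k}$ and $y_k-\epsilon_{y_k}$ (as $x=-(y_k\pm\epsilon_{y_k})$) for all $k$ and all facets.

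Next I would analyze the two cases. Suppose some query returns an unexpected index. Concretely, if querying $-(y_k+\epsilon_{y_k})$ returns some index $\ell\ne i$, then $y_k+\epsilon_{y_k}\in\vor(\ell)$, so $\langle a_\ell,y_k+\epsilon_{y_k}\rangle\ge\langle a_i,y_k+\epsilon_{y_k}\rangle$, whereas by construction $\langle a'_i,y_k+\epsilon_{y_k}\rangle>\langle a'_\ell,y_k+\epsilon_{y_k}\rangle$; subtracting gives a hyperplane in $\mathbb{R}^{mn}$ (linear in the unknowns $a_i,a_\ell$, with all other coordinates free) that strictly separates $A$ from $A'$ — this is the second bullet, and we are done. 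If instead every query returns the expected answer, then for each facet $H'_{ij}$ the $n-1$ independent points $y_k$ all lie in $\vor(i)\cap\vor(j)\subseteq H_{ij}=\{z:\langle a_i,z\rangle=\langle a_j,z\rangle\}$; since $H_{ij}$ and $H'_{ij}$ are both hyperplanes through the origin in $\mathbb{R}^n$ containing the same $n-1$ independent points, $H_{ij}=H'_{ij}$. Doing this for every facet of every cell shows that each defining inequality of $\vor'(i)$ is also valid-with-equality-on-a-facet for $\vor(i)$, hence $\vor(i)\subseteq\vor'(i)$ for all $i$; since both families tile $S^{n-1}$, equality $\vor=\vor'$ follows — the first bullet. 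The total number of queries is $O(mn)$ per cell and $O(m^2 n)$ overall, all other computation being polynomial in $m,n,L$.

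The main obstacle, and the step requiring the most care, is the quantitative control of the perturbation size $\|\epsilon_{y_k}\|$: I must show there is a single threshold, polynomial in $2^{-L}$, small enough that for every facet the perturbed points stay inside the correct cells of the \emph{hidden} diagram $\vor$ whenever the facet really is shared, yet the argument must not assume knowledge of $\vor$. This requires a lower bound on the "angular width" of the hidden Voronoi cells and on the separation between non-adjacent hidden sites $a_i$, which I would derive from the bit-complexity bound $L$ on $A$ (entries are rationals of bounded height, so inner products $\langle a_i-a_j, z\rangle$ that are nonzero are bounded away from $0$ by $2^{-\mathrm{poly}(L)}$ on the polynomially-describable test points $z$). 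A secondary subtlety is handling degenerate pairs $(i,j)$ where $\vor'(i)\cap\vor'(j)$ has dimension less than $n-2$ (or is empty): these contribute no facet constraint and can simply be skipped, and the non-degeneracy hypothesis on $A$ ensures the hidden diagram $\vor$ has no facets that $\vor'$ fails to "see" either. Once these estimates are pinned down the rest is bookkeeping.
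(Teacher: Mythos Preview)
Your proposal is correct and follows essentially the same approach as the paper: enumerate the facets of the known diagram $\vor'$ via its $H$-representation, probe each with $n-1$ independent points perturbed to either side, use an unexpected oracle answer to produce a separating hyperplane, and otherwise conclude $\vor(i)\subseteq\vor'(i)$ for all $i$ and hence equality by the tessellation property. One small remark: your final worry that $\vor$ might have facets $\vor'$ ``fails to see'' is unnecessary, and the appeal to non-degeneracy of $A$ there is misplaced --- the containment-plus-tessellation argument you already gave handles this without any extra hypothesis.
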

A formal and detailed description of this consistency check procedure and its correctness proof can be found in the Appendix~\ref{appendix-furthest-formal}.

\medskip \noindent
\textbf{Voronoi diagram recognization.}
If the above process concludes that $\vor=\vor'$, we have successfully found the Voronoi diagram $\vor$ (in its $H$-representation) for the hidden points $a_1,\ldots,a_m$. It was shown by Hartvigsen~\cite{Hart92} that given a Voronoi diagram with its $H$-representation, %represented by a set of linear constraints,
a set of points that generates the diagram can be computed efficiently. Further, Ash and Bolker~\cite{AB85} showed that the set of points that generates a non-degenerate Voronoi diagram is unique. Therefore, by coupling these two results and the assumption that the input matrix $A$ is non-degenerate, we are able to identify the positions of $a_1,\ldots,a_m$ given the computed Voronoi diagram $\vor$, and easily determine if the LP $Ax>0$ has a feasible solution, and compute one if it exists.

\subsection{The General {\large $Ax>b$}}\label{sec:Ax>b}

In this section, we extend our algorithm to the general case $Ax > b$. Due to space limit, we will only give the main ideas in this section and leave the formal proof to Appendix~\ref{appendix-furthest-formal}.

The idea is still to use the ellipsoid method to find the unknown
point $(A, b)$, which is a degenerate polyhedron in $\mbR^{m(n+1)}$. Our goal is, for any
considered point $(A', b') \in \mbR^{m(n+1)}$ of the center of the ellipsoid, to compute a hyperplane that separates it from the true point
$(A, b)$. Again, if $x$ is a feasible solution of $A'x > b'$, we can
query it to the oracle and using the returned index to get a separating hyperplane. Thus, in the following,
we assume that the LP $A'x > b'$ is infeasible.

\medskip \noindent
\textbf{Generalized furthest Voronoi diagram.}
Similar to the previous case, for the hidden LP $Ax > b$,
we assume without loss of generality that $\|a_i\| = 1$, for all $1\le i\le m$.
We denote by $\gvor$ the tessellation of $\mbR^n$ into polyhedra $\gvor(1), \ldots,
\gvor(m)$, where
\begin{equation}\label{eq:GV}
	\gvor(i) = \big\{x \in \mbR^n : b_i - \langle a_i, x \rangle \geq b_j - \langle a_j, x \rangle, \ j\in [m]\big\}.
\end{equation}
Note that different from the $Ax>0$ case, $\gvor$ is no longer a spherical closest Voronoi diagram; it can be seen as a generalized {\em furthest} Voronoi diagram over $\mbR^n$ where each source site is a half-space. It follows that for any query $x \in \mbR^n$, if $x$ is not a feasible solution to the LP $Ax > b$, the oracle always returns an index $i$ where $x \in \gvor(i)$. %This is because if there a violated constraint $j$, i.e., $\langle a_j, x \rangle \le b_j$, then $\langle a_i, x \rangle \le b_i$, and by the rule of the furthest oracle, index $i$ is returned.

For any presumed point $(A', b') \in \mbR^{m(n+1)}$, we compute
\begin{equation}\label{eq:GV'}
	\gvor'(i) =\big\{x \in \mbR^n : b'_i - \langle a'_i, x \rangle \geq b'_j - \langle a'_j, x \rangle, \ j\in [m]\big\}
\end{equation}
and denote $\gvor' = \big\{\gvor'(1), \ldots, \gvor'(m)\big\}$. Similar to the
previous case, if $\gvor \neq \gvor'$, we want to find a point $x \in
\mbR^n$ such that $x \in \gvor(i)$ and $x \in \gvor'(j)$ for some $i
\neq j$; this gives us $b_i - \langle a_i, x \rangle \geq b_j -
\langle a_j, x \rangle$ and $b'_i - \langle a'_i, x \rangle \leq b'_j
- \langle a'_j, x \rangle$, with at least one of the inequalities
strict. Thus, we have a separating hyperplane between $(A, b)$ and
$(A', b')$. Again because each cell $\gvor'(i)$ (as well as
$\gvor(i)$) is a polytope defined by a set of linear inequalities, and $\gvor'$ and $\gvor$ are a tessellation of the space $\mbR^n$,
we can apply the same procedure as the $Ax > 0$ case to check whether $\gvor = \gvor'$ in polynomial time (even if $\gvor'(i)$ or $\gvor(i)$ is empty).

However, even after finding the tessellation $\gvor$ for the hidden LP $Ax
> b$, i.e., $\gvor = \gvor'$, we still cannot recover the actual instance $A$
and $b$ from $\gvor$ even all half-spaces are non-degenerate. For instance,
for any scale $c \in \mbR$, the two systems $Ax > b$ and $Ax > b + \{c, c, \ldots,
c\}^T$ have the exactly same tessellation $\gvor$, but they may
have different feasible solutions. In the following, we will show that when
$\gvor = \gvor'$,
% we are able to reduce $Ax>b$ to a another linear system of the form $A^*x>0$,
% called \emph{support linear system}, and use the techniques developed in the previous section to solve the problem.
we are able to focus on a particular point in $\mbR^n$, and use the claims proved in the previous section to solve the problem.

\medskip \noindent
\textbf{Extreme point.}
Given a matrix $A$ and a vector $b$, define
\begin{equation}\label{eq:vertexdist}
	d(A, b) = \min_{x\in \mathbb{R}^n}\big\{\max_i\{b_i - \langle a_i, x \rangle\}\big\}, \mbox{ and}
\end{equation}
\begin{equation}\label{eq:extreme}
	\extreme(A, b) = \big\{x \in \mathbb{R}^n: \max_i\{b_i - \langle a_i, x \rangle\} = d(A,b)\big\}.
\end{equation}
%$$\extreme(A, b) = \arg\min_{x \in \mathbb{R}^n}\Big\{\max_i\{b_i - \langle a_i, x \rangle\}\Big\}.$$
Notice that $d(A, b)$ may be unbounded for general $A$ and $b$, e.g., when $Ax>b$ has an unbounded feasible region. For such a case, we define $d(A,b)$ to be $-\infty$ and $\extreme(A,b)=\emptyset$. But when $Ax > b$ is infeasible, $\extreme(A, b)$ is always nonempty and $d(A,b)$ is bounded by 0 from below.
%\textbf{(More explanations on why the minimum is achieved instead of an infimum?)}
%(Indeed, it can be seen that an LP $Ax>b$ is infeasible if and only if $\extreme(A, b)$ exists and $d(A,b)\ge 0$.)
The definition of $\extreme(A, b)$ gives the set of points that minimizes the maximal distance from which to all half-spaces $\{z\in \mathbb{R}^n : \langle a_i, z \rangle \ge b_i\}$. Note that given matrix $A$ and vector $b$, both $d(A, b)$ and a point in $\extreme(A,b)$ can be computed efficiently through linear programming. (The value $d(A, b)$ is an LP and $\extreme(A,b)$ is the union of the feasible regions of $m$ LP's.)

% What we know is only relative distances from $\extreme(A,b)$ to all half-spaces.

The next lemma links the homogeneous and non-homogeneous forms and will be later used to decide the infeasibility.
\begin{Lem} \label{lem:ax>b}\label{lemma-furthest}
  A linear system $Ax>b$ is infeasible if and only if there is a point $p\in \mathbb{R}^n$ such that
  \begin{itemize}[$\bullet$]
  \addtolength{\itemsep}{-0.5\baselineskip}
  \item $p$ is not a feasible solution of $Ax>b$, and
  \item the linear system $\{\langle a_i, x \rangle > 0 : i \in S\}$, called the {\em support linear system} of $Ax>b$ at $p$, is infeasible, where $S$ is the set of the indices of the half-spaces in the LP \textbf{$Ax>b$} that has the (same) maximal distance to point $p$.
  \end{itemize}
  Further, if $Ax>b$ is infeasible, then any point in $\extreme(A,b)$ satisfies the above two conditions.
\end{Lem}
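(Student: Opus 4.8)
The plan is to prove the two directions of the equivalence, and then handle the "Further" clause, which will actually fall out of the argument for the "only if" direction. Throughout, recall that $\extreme(A,b)$ is nonempty whenever $Ax>b$ is infeasible (since then $d(A,b)\ge 0$ is finite), so the "Further" clause is the substantive content of the forward direction.

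For the "if" direction, suppose such a point $p$ exists. I would argue by contradiction: assume $Ax>b$ has a feasible solution $x^*$, so $\langle a_i, x^*\rangle > b_i$ for all $i$. Consider the direction $v = x^* - p$. For each index $i\in S$, we have $b_i - \langle a_i, p\rangle = d(A,b) \ge b_j - \langle a_j,p\rangle$ for all $j$ (this is what it means for $i$ to achieve the maximal distance at $p$), and in particular $b_i - \langle a_i,p\rangle \ge 0$ since $p$ is infeasible. From $\langle a_i,x^*\rangle > b_i$ we get $\langle a_i, v\rangle = \langle a_i,x^*\rangle - \langle a_i,p\rangle > b_i - \langle a_i,p\rangle \ge 0$, so $\langle a_i,v\rangle > 0$ for every $i\in S$. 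Hence $v$ is a feasible solution of the support linear system $\{\langle a_i,x\rangle > 0 : i\in S\}$, contradicting its infeasibility. Therefore $Ax>b$ is infeasible.

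For the "only if" direction together with the "Further" clause, suppose $Ax>b$ is infeasible, let $p\in\extreme(A,b)$ be arbitrary, and let $S = S(p)$ be the set of indices attaining $\max_i\{b_i-\langle a_i,x\rangle\}$ at $p$. Since $Ax>b$ is infeasible, $p$ is certainly not feasible, so the first bullet holds. It remains to show the support system $\{\langle a_i,x\rangle > 0 : i\in S\}$ is infeasible. Suppose for contradiction it has a solution $v$, i.e. $\langle a_i,v\rangle > 0$ for all $i\in S$. The idea is to move from $p$ a small step in direction $v$ and show this strictly decreases $\max_i\{b_i - \langle a_i,x\rangle\}$, contradicting $p\in\extreme(A,b)$. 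For $i\in S$, the quantity $b_i - \langle a_i, p+tv\rangle = d(A,b) - t\langle a_i,v\rangle$ strictly decreases in $t>0$. For $j\notin S$, we have $b_j - \langle a_j,p\rangle < d(A,b)$ strictly, so by continuity there is $t_0>0$ such that $b_j - \langle a_j,p+tv\rangle < d(A,b)$ for all $0\le t\le t_0$ and all $j\notin S$ (using that there are finitely many constraints). Picking $t>0$ small enough that $t\max_{i\in S}\langle a_i,v\rangle$ is less than the gap $d(A,b) - \max_{j\notin S}(b_j - \langle a_j,p\rangle)$ and also $t\le t_0$, we get $\max_i\{b_i - \langle a_i,p+tv\rangle\} < d(A,b)$, contradicting the minimality defining $d(A,b)$. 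Hence the support system is infeasible, establishing both the existence of a point $p$ satisfying the two conditions and the stronger claim that every point of $\extreme(A,b)$ works.

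I expect the main obstacle to be bookkeeping in the "only if" direction rather than any deep idea: one must be careful that the index set $S$ is exactly the active set at $p$, that the non-active constraints stay non-active for a small step (which needs the strict inequality $b_j - \langle a_j,p\rangle < d(A,b)$ and finiteness of $[m]$), and that the step direction $v$ strictly improves all active constraints simultaneously. A minor subtlety worth checking is the degenerate situation where $d(A,b)$ might a priori be $-\infty$; but as noted, infeasibility of $Ax>b$ forces $d(A,b)\ge 0$, so $\extreme(A,b)\ne\emptyset$ and the argument is not vacuous. No appeal to Lemma \ref{lem:polar} or the volume bounds is needed here; this lemma is purely a convexity/continuity statement about the minimax value.
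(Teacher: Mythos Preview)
Your proof is correct and follows essentially the same approach as the paper's: both directions proceed by contradiction, the forward direction via the perturbation $p+\epsilon v$ to beat the minimax value $d(A,b)$, and the converse by showing that $v=x^*-p$ would feasibly solve the support system. One minor notational slip: in your ``if'' direction you write $b_i-\langle a_i,p\rangle=d(A,b)$, but since $p$ there is an arbitrary witness point, not assumed to lie in $\extreme(A,b)$, this quantity is merely $\max_j\{b_j-\langle a_j,p\rangle\}$ and need not equal $d(A,b)$; fortunately you only use that it is $\ge 0$, which follows from $p$ being infeasible, so the argument stands.
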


The proof of the lemma is deferred to Appendix~\ref{appendix-lemma-furthest}.
We will use the lemma to infer that the hidden linear system $Ax > b$ is infeasible.
For the considered LP $A'x>b'$, as it is infeasible, $\extreme(A', b')\neq \emptyset$; we then compute an extreme point $p\in \extreme(A', b')$.

Now, if we focus on the region around $p$ and limit our queries within the ball $B = \{z\in \mathbb{R}^n : \|z - p\| \leq \epsilon^*\}$ for some small enough $\epsilon^* > 0$, then for each query $x = p + x'$ within the ball $B$, where $\|x'\|\le \epsilon^*$, the oracle returns an index
\begin{eqnarray*}
i &\in & \arg\max_{i\in [m]}\big\{b_i - \langle a_i, x \rangle\big\} = \arg\max_{i\in S}\big\{b_i - \langle a_i, x \rangle\big\} \\
  &=& \arg\max_{i \in S}\big\{b_i - \langle a_i, p \rangle -\langle a_i, x' \rangle\big\} = \arg\max_{i \in S}\big\{-\langle a_i, x' \rangle\big\},
\end{eqnarray*}
where the first equality is because $\epsilon^*$ is sufficiently small, and the last equality follows from the fact that all half-spaces in $S$ have the same maximal distance to $p$.
Thus, for any queried point within this ball $B$, the set of indices possibly returned by the oracle for $Ax > b$ is the same as that by the furthest oracle for the support linear system $\{\langle a_i, x \rangle > 0 : i\in S\}$.
This means that the Voronoi structure $\gvor$ in $B$ is exactly the same as the weighted spherical Voronoi diagram $\gvor'$ for the corresponding support system in $B$. Then, by the results in previous section, we are able to identify the exact values for all $a_i$ in $S$; that is, the support system $\{\langle a_i, x \rangle > 0 : i\in S\}$ is revealed. %\textbf{(Need to say that the non-degenerency goes to any particular point?)}
The last step is straightforward: if the support system is infeasible, then by Lemma~\ref{lem:ax>b}, we can conclude that LP $Ax > b$ is also infeasible. If the support system has a feasible solution $x^*$, since we know $x^*$ is not a feasible solution to $\{\langle a'_i, x \rangle > 0 : i\in S\}$ (by Lemma~\ref{lem:ax>b}), we must have $\langle a'_i, x^* \rangle \leq 0 < \langle a_i, x^* \rangle$ for some $i \in S$, which is a separating hyperplane between $(A, b)$ and $(A', b')$.

% Therefore, we can apply our previous algorithm for $Ax > 0$ to solve the feasibility problem of the support system.\footnote{Specifically, we consider the sphere of the ball of radius $\epsilon^*$ centered at the origin in $\mathbb{R}^n$. Since the support system is scalable, it has a feasible solution in $\mathbb{R}^n$ if and only if it has a feasible solution in the sphere. For each considered point $x'$ in the sphere,
% we query $x=p + x'$ to the oracle for $Ax>b$, and the returned information from the oracle is precisely the one for the support system.
% Hence, we can solve the feasibility problem of the unknown support system.}

\medskip \noindent
\textbf{Putting things together.} To summarize the above discussions, in our algorithm we employ the ellipsoid method to search for $(A,b)\in \mbR^{m(n+1)}$. In each iteration of the ellipsoid method, we propose the center
$(A', b') \in \mbR^{m(n+1)}$ of the current ellipsoid, and apply the following procedure:

\begin{enumerate}[1.]
\addtolength{\itemsep}{-0.5\baselineskip}
\item If $A'x > b'$ is feasible, find a solution $x$ to it and query $x$ to the oracle, then use the returned index to construct a separating hyperplane.
\item Otherwise (i.e., $A'x>b'$ is infeasible), do the following.
\begin{itemize}
\addtolength{\itemsep}{-0.4\baselineskip}
\item Compute the generalized Voronoi diagram $\gvor'$ of $A'$ and $b'$, and confirm that $\gvor = \gvor'$.
\item Compute a point $p\in \extreme(A', b')$ and the corresponding support $S$, then confirm that $p$ is an infeasible solution to $Ax > b$.
\item Focus on a small ball $B$ centered at $p$ and use $\gvor = \gvor'$ in $B$ to recover the support linear system at $p$. Then
  confirm that the support system is infeasible, and, by Lemma~\ref{lem:ax>b}, conclude that the hidden LP $Ax > b$ is also infeasible and terminate the whole program.
\end{itemize}
\end{enumerate}

If we get an ``unexpected'' answer from the oracle at any step of the above procedure, then we either receive a \yes from the oracle and thus solve the problem, or get a hyperplane that separates the unknown $(A,b)$ and current center $(A', b')$, in which case we jump out of the current iteration and continue with the ellipsoid method with a smaller ellipsoid. From the above discussions, we know that every step can be implemented in polynomial time. Hence, the problem can be solved efficiently.

A complete description of the algorithm and its formal proof can be found in Appendix~\ref{appendix-furthest-formal}.

% token free

\section{Worst-Case Oracle}\label{section-worst}

In this section, we consider the worst-case oracle. Recall that in this setting, the oracle plays as an
adversary by giving the worst-case violation index to force an algorithm to use the maximum amount of time to solve
the problem.

For any linear program $Ax > b$, we can introduce another variable $y$ and transform the linear program into the
following form:
\begin{eqnarray*}
  Ax - by > 0 \\
  y > 0
\end{eqnarray*}
It is easy to check that $Ax > b$ is feasible if and only if the new LP is feasible, and the solutions of these two linear systems can be easily transformed to each other. Given the oracle for $Ax>b$, one can also get another oracle for the new LP easily. (On a query $(x,y)$, if $y \leq 0$, return the index $m+1$; otherwise, query $x/y$ to the oracle for $Ax>b$.) This means that the \ulp problem of the homogeneous form $Ax > 0$ is no easier than the problem of the general form. In all the analysis of this section, we will therefore only consider the problem of form $Ax > 0$.

\medskip \noindent
\textbf{Geometric explanations.}
Let us consider the problem from a geometric viewpoint. Any matrix $A=(a_{ij})_{m\times n}$
can be considered as $m$ points $\ap_1, \ap_2, \ldots, \ap_m$ in the $n$-dimensional space $\mathbb{R}^n$,
where each $\ap_i = (a_{i1}, a_{i2}, \ldots, a_{in})$. The positions of these points are unknown to us.
%Without loss of generality, we can assume that $|a_i| = 1$
%for all $1 \leq i \leq m$, i.e., all these $m$ points are on the
%sphere of the unit ball $|\x| = 1$.
Finding a feasible solution $x\in \mathbb{R}^n$ that satisfies $Ax > 0$ is equivalent to finding an open half-space
\[
H_x = \big\{y\in \mbR^n :\  \langle x, y \rangle \triangleq x_1y_1 + x_2y_2 + \cdots + x_ny_n > 0\big\}
\]
containing all points $a_i$. % Here, $x$ is the normal vector of the half-space and a solution to the LP problem. If such a half-space does not exist, then we know that the LP has no feasible solution.

%In an algorithm we propose a sequence of candidate solutions $x'$, on which the oracle return either \yes, indicating that it is indeed a valid solution, or the index $i$ of a violated constraint, meaning that point $p_i$ is not contained in the halfspace $H_{x'}$.

In an algorithm, we propose a sequence of candidate solutions. When a query $x\in \mathbb{R}^n$ violates a constraint $i$, we know that $\langle \ap_i, x \rangle \le 0$. Hence, $\ap_i$ cannot be contained in the half-space $H_x$, and we are able to cut $H_x$ off from the possible region of $\ap_i$. Based on this observation, we maintain a set $\region(i)$, the region of
possible positions of point $\ap_i$ consistent with the information obtained from the previous queries. Initially, no information is known about the position of any point; thus, $\region(i) = \mathbb{R}^n$ for all $1 \leq i \leq m$.

Let us have a closer look at these regions. For each $i$, suppose that $x^i_1, x^i_2, \ldots, x^i_k$ are the queried points we have made so far for which the oracle returns index $i$. Then all information we know about $\ap_i$ till this point
is that the possible region is $\region(i) = \bigcap_{j=1}^k \{y\in \mbR^n : \langle x_j^i, y \rangle \leq 0\}$. Since $\region(i)$ is the intersection of $k$ closed half-spaces, it is a convex set. Equivalently, this means that any feasible solution to the LP, if existing, cannot be in $\region(i)^*$, the polar cone of $\region(i)$. Since the polar cone of a half-space
$\{y\in \mathbb{R}^n \mid \langle x^i_j, y \rangle \leq 0\}$ is the ray along its normal vector, i.e., $\{\lambda x^i_j \mid \lambda \geq 0\}$,  we have by Lemma~\ref{lem:polar} that
\begin{eqnarray*}\small
\region(i)^* = conv\Big(\bigcup_j{\big\{y \mid \langle x^i_j, y \rangle \leq 0\big\}^*}\Big) = conv\Big(\big\{\lambda x^i_j \mid 1 \leq j \leq k, \lambda \geq 0\big\}\Big).
\end{eqnarray*}
\normalsize
Since $\region(i)^*$'s are the forbidden areas for any feasible solution, we can conclude that the LP has no feasible solution if $\bigcup_i\region(i)^* = \mathbb{R}^n$.
%On the other hand, if $\cup_i\region(i)^* \neq \mathbb{R}^n$, then for any $a \notin\cup_i\region(i)^*$, there always exists an LP instance (corresponding to points $p_1,\ldots,p_m$) that is consistent with our query history and $\langle a, x \rangle > 0$ is a feasible half-space that contains all these points.

% \begin{proof}
%   First it is easy to see that if $Ax > 0$ has a solution, then the
%   solution space of it must be fully dimensional, i.e., we can find
%   $n+1$ linearly independent vertices $v_1, v_2, \ldots, v_{n+1}$ in
%   the solution space of $Ax > 0$. Further more, we can assume that

%   it has a solution $x$ such that $x_i (1 \leq i \leq n)$ are all
%   rational numbers, both the numerators and the denominators of which
%   are no more than $2^L$. Also because the solutions to $Ax > 0$ are
%   all scalable.
% \end{proof}

\medskip \noindent
\textbf{Convex hull covering algorithms.}
Based on above observations, we now sketch a framework of \emph{convex hull covering algorithms} that solves the \ulp problem. The algorithm maintains a list of $m$ convex cones
$$\region(1)^*, \region(2)^*, \ldots, \region(m)^* \subseteq \mathbb{R}^n.$$
Initially, $\region(i)^* = \emptyset$ for all $1 \leq i \leq m$. On each query $x \in \mathbb{R}^n$, the oracle either returns \yes, indicating that the problem is solved, or returns us an index $i$, in which case we update $\region(i)^*$ to $conv\left(\region(i)^*, \{\lambda x \mid \lambda > 0\}\right)$.
%(Note that the oracle will always guarantee that the updated $\region^*(i)$ does not take the whole space $\mathbb{R}^n$.)
The algorithm terminates when either the oracle returns \yes, or when $\mathbb{R}^n - \bigcup_i\region(i)^*$ does not contain a convex cone with normalized volume at least $2^{-(2n+3)L}$, which indicates that the given instance has no feasible solution. The above discussion can be formalized into the following theorem.
\begin{Thm}\label{thm:chca}
  Any algorithm that falls into the convex hull covering algorithm framework solves the \ulp problem.
\end{Thm}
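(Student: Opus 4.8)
\textit{Proof proposal.}
The plan is to show that each of the two ways the framework can halt produces a correct verdict; the \yes case is immediate from the definition of the verification oracle, and the infeasibility case rests on the ``forbidden region'' observation already sketched in the text, combined with Lemma~\ref{lem:feasiblesize}. By the reduction given at the beginning of this section it suffices to treat the homogeneous form $Ax>0$, so I work with that throughout.

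First I would record the invariant that, at every point in the run, $\region(i)^* = conv\big(\{\lambda x : \lambda \ge 0,\ x \text{ a past query answered } i\}\big)$, and that this set is a closed polyhedral convex cone. This follows by induction on the queries: the update rule replaces $\region(i)^*$ by $conv\big(\region(i)^* \cup \{\lambda x : \lambda > 0\}\big)$, and the convex hull of a union of finitely many rays through the origin is again a closed convex cone whose generating rays are exactly those accumulated so far; this is justified by Lemma~\ref{lem:polar} (in dual form) together with the elementary theory of polyhedral cones. The second and key step is the claim that no feasible solution of $Ax>0$ ever lies in $\bigcup_i \region(i)^*$. Suppose $x^*$ satisfies $Ax^*>0$ and $x^*\in\region(i)^*$ for some $i$. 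Writing $x^* = \sum_j \mu_j x^i_j$ with all $\mu_j\ge 0$, where $x^i_1,\dots,x^i_k$ are the past queries for which the oracle returned $i$, and recalling that by definition of the oracle each such query satisfies $\langle a_i, x^i_j\rangle\le 0$, we obtain $\langle a_i, x^*\rangle = \sum_j \mu_j\langle a_i, x^i_j\rangle \le 0$, contradicting $\langle a_i, x^*\rangle>0$ (note that $x^*\ne 0$, since $0$ is never feasible). Since by homogeneity the feasible region of $Ax>0$ is, whenever nonempty, a convex cone, it is therefore contained in $\mathbb{R}^n\setminus\bigcup_i\region(i)^*$ at every step.

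It then remains to assemble soundness. If the oracle answers \yes to a query $x$, then by definition $x$ is feasible for $Ax>0$, so declaring feasibility is correct. If instead the algorithm declares infeasibility because $\mathbb{R}^n\setminus\bigcup_i\region(i)^*$ contains no convex cone of normalized volume at least $2^{-(2n+3)L}$, then by the previous step together with Lemma~\ref{lem:feasiblesize}---which guarantees that a nonempty feasible region of $Ax>0$ is a convex cone of normalized volume at least $2^{-(2n+3)L}$---the feasible region must be empty, so the verdict is again correct. Translating back through the reduction to the general form $Ax>b$ finishes the argument.

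The only slightly delicate point is the use, in the second step, of the closed-form description of $\region(i)^*$: one must be sure that the incrementally maintained set really is the convex hull of \emph{all} generators seen so far, and hence that membership of $x^*$ can be witnessed by a single finite nonnegative combination of those generators---this is precisely where the polyhedral-cone structure and Lemma~\ref{lem:polar} enter. Everything else is bookkeeping. I would also remark that the statement is one of partial correctness of the framework: a concrete instantiation that always queries a point of $\mathbb{R}^n\setminus\bigcup_i\region(i)^*$ lying in some convex cone of normalized volume at least $2^{-(2n+3)L}$ additionally terminates, but a quantitative bound on the number of queries needed is exactly the content of the next theorem and is deferred.
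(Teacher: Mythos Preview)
Your proposal is correct and follows essentially the same approach as the paper: the paper's ``proof'' is the discussion immediately preceding the theorem, which establishes that $\region(i)^*$ is a forbidden region for feasible points and then invokes Lemma~\ref{lem:feasiblesize} for the infeasibility termination criterion. You have simply made explicit the conic-combination step showing $x^*\in\region(i)^*\Rightarrow\langle a_i,x^*\rangle\le 0$, and your remark that the theorem asserts only partial correctness (termination being deferred to the concrete instantiation) matches the paper's own observation that the framework ``does not specify how to make queries to control complexity.''
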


Though the framework guarantees the correctness, it does not specify how to make queries to control complexity. Next we will show an algorithm with nearly optimal complexity.

\subsection{Warmup: 2-Dimension}

To illustrate the idea of our algorithm, we consider the simplest case in which the number of variables is 2. In a 2-dimensional plane, using the polar coordinate system, every closed convex cone can be represented as an interval
of angles $[\alpha, \beta]$ where $0 \leq \alpha, \beta \le 2\pi$ (e.g.,
%$[0,2\pi]$ represents the whole plane and $[\frac{3\pi}{2},\frac{\pi}{4}]$ represents the fourth quadrant and the first half of the first quadrant).
$[0,\pi/2]$ represents the first quadrant).

Our algorithm sets $\region(1)^* = \cdots = \region(m)^* = \emptyset$ initially. At each step of the algorithm, assume that $[0, 2\pi] - \bigcup_i\region(i)^* = \bigcup_{i=1}^k (\alpha_i, \beta_i)$, where $(\alpha_i, \beta_i) \cap (\alpha_j, \beta_j) = \emptyset$ for any $1 \leq i < j \leq k$. We pick an interval $(\alpha_t, \beta_t)$ with the maximum $\beta_t-\alpha_t$, and query the oracle on point $(\cos(\gamma), \sin(\gamma))$ where $\gamma = (\alpha_t+\beta_t)/2$.
Suppose the oracle returns an index $i$, we then update $\region(i)^*$ to $conv(\region(i)^*, \gamma)$, where $conv(S)$ here is the smallest sector containing all points in $S$.
The algorithm iteratively runs the above procedure, until at some point we have $(\beta_t - \alpha_t)/\pi < 2^{-(2n+3)L}$, when we can conclude that the LP has no feasible solution.

%
%\vspace{-0.08in}
%\begin{center}
%\begin{minipage}{9cm}
%
%\end{minipage} \hfill
%\begin{minipage}{6cm}
%
%\end{minipage}
%\end{center}

\begin{figure}[ht]
\begin{center}
\includegraphics[scale = 0.8]{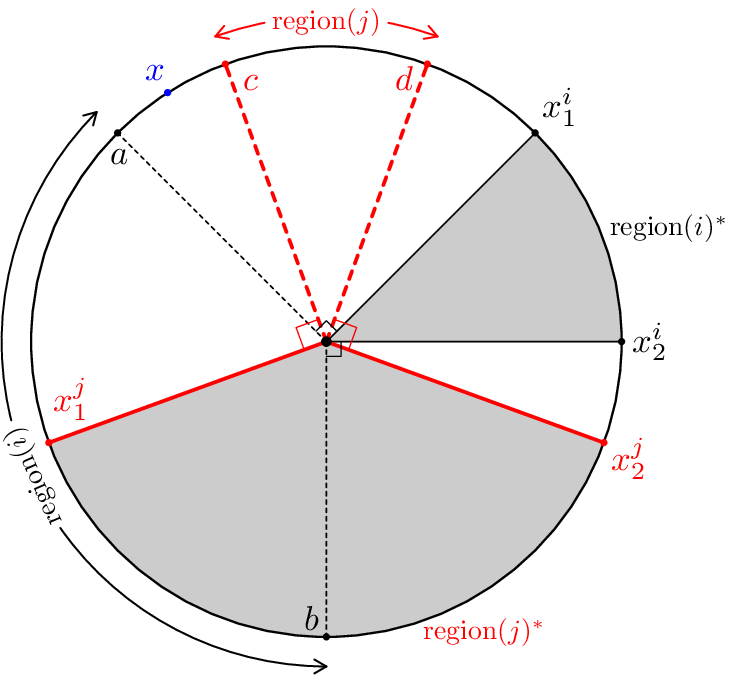}
\end{center}
\end{figure}

Consider the above figure for an example. Let us say that when we query $x_1^i$ and $x_2^i$, the oracle returns index $i$. Then the $i$-th constraint can only be in $[a,b]$, where $a$ and $b$ are perpendicular to $x_1^i$ and $x_2^i$, respectively, and no feasible solution of the LP can be in $[x_2^i,x_1^i]$. That is, $\region(i)=[a,b]$ and its polar cone $\region(i)^*=[x_2^i,x_1^i]$,
which is the convex hull of $x_1^i$ and $x_2^i$. Similarly, if $x_1^j$ and $x_2^j$ are two queries in which the oracle returns another index $j$, we have $\region(j)^*=[x_1^j,x_2^j]$ and $\region(j)=[c,d]$.
Now given obtained information, we have
$[0,2\pi]-\bigcup_i\region(i)^* = (x_1^i,x_1^j)\cup (x_2^j,x_2^i)$. Among the two, $(x_1^i,x_1^j)$ has the maximum length and will be picked by the algorithm.
Then the middle point ($x$ in the figure) of the interval $(x_1^i,x_1^j)$ will be picked and queried to the oracle. If the oracle returns index $i$,
then $\region(i)^*$ becomes $[x_2^i,x]$; in which case the length of the interval $(x_1^i,x_1^j)$ is cut into half. In this example, the oracle cannot return index $j$ since the entire candidate region $\region(j)$ has positive inner product with $x$.
If the oracle returns a new index $k$, then we have $\region(k)^*=[x,x]$, and in the next iteration of the algorithm, we have $[0,2\pi]-\bigcup_i\region(i)^* = (x_1^i,x)\cup (x,x_1^j)\cup (x_2^j,x_2^i)$.

Note that first, the number of candidate intervals is at most $n$. Second, each query either increases the number of intervals or cut the length of a current interval into half. As there is a lower bound on the length of each interval in the algorithm (Lemma~\ref{lem:feasiblesize}), the algorithm terminates with a feasible solution or claims that no feasible solution exists.

\subsection{Algorithm}

In this section, we generalize our algorithm from 2-dimensional to $n$-dimensional. The basic idea is to use induction on dimension. That is, we pick an $(n-1)$-dimensional subspace and recursively solve the problem on the subspace. The subroutine either finds a point $x$ in the subspace that satisfies $Ax>0$ (in which case the algorithm ends), or finds out that there is no feasible solution in the entire subspace. In the latter case, the whole space of candidate solutions can be divided into two open half-spaces, and we will work on each of them separately. In general, we have a collection of connected regions that can still contain a valid solution. These regions are the ``holes'', formally called \emph{chambers}, separated by $\bigcup_i \region(i)^*$ (recall that points in $\region(i)^*$ cannot be a feasible solution). We can then pick a chamber with the largest volume, and cut it into two balanced halves by calling the subroutine on the hyperplane slicing the chamber.

There are several issues for the above approach. The main one is that there may be too many chambers: \emph{a priori}, the number can grow exponentially with $m$. There are also other technical issues to be handled, such as how to represent chambers (which are generally concave), how to compute (even approximately) the volume of chambers, how to find a hyperplane to cut a chamber into two balanced halves, etc.

For the first and main issue, it can be shown that the number of chambers cannot be too large. In general, as the later Theorem~\ref{thm:chambers} shows, any $m$ convex sets in $\mbR^n$ cannot form more than $\sum_{i=1}^n\binom{m}{i}$ chambers.
For the rest technical issues, we deal with them in the following way. Instead of keeping track of all actual chambers, in our algorithm, we maintain a collection of disjoint \emph{sector cylinders}, which can be shown to be supersets of chambers. This greatly simplifies the main algorithm (i.e., the induction part) to a procedure which is very similar to the 2-dimensional case as described in the previous section. Furthermore, we only keep those cylinders that contain at least one chamber, thus, the bound for the number of chambers also bounds the number of cylinders from above.

%\paragraph{$(n-1)$-dimensional subproblem}
%In space $\mathbb{R}^n$, any hyperplane $H: \langle a, y \rangle = 0$ that passes the origin is a subspace of dimension $n-1$. We are interested in the following task: can we design a procedure within the convex hull covering algorithm framework, such that in polynomial time we get either
%\begin{itemize}
%\item the oracle returns \yes for some query, or
%\item $H \subseteq \bigcup_i\region(i)^*$.
%\end{itemize}

%In fact, if we restrict our queries within the hyperplane $H$, notice that for any convex set $C \in \mathbb{R}^n$ and $p \in H$, we have $conv(C, p) \cap H = conv(C \cap H, p)$. Thus we are essentially facing the unknown LP problem in a $(n-1)$-dimensional space if we use $\region(\cdot)^* \cap H$ as the variables $\region(\cdot)^*$ for the subproblem. This means that we can use the same algorithm for the $(n-1)$-dimensional case to cover a hyperplane in space $\mathbb{R}^n$ (notice that along the way of solving this subproblem, each variable $\region(i)^*$ may cover some additional space in $\mathbb{R}^n$ that does not belong to $H$, but this will not affect the correctness of the algorithm), and this subproblem will be used as a procedure repeatedly in our algorithm for the $n$-dimension case.

The algorithm is formally given as below. We call the program \AlgUnknownLP$(\mathbb{R}^n)$ to get a solution of the \ulp problem. Note that each subroutine has its own local variables, and all subroutines share the same global variables.

\begin{algorithm}[H]
	\caption{\AlgUnknownLP$(V)$} %\{c^1, ..., c^{d}\},
  {\bf Input}:
	$V$: A subspace of $\mbR^n$. \\
	{\bf Output}: A feasible solution $x\in V$, or \no (solution in $V$). \\[-0.12in]
	
	Global variables: \mbox{$\region(1)^*, \ldots, \region(m)^*$ (initially all $\emptyset$).} \\
	Local variables: \Cylinders. \\[-0.12in]
	
  \begin{algorithmic}[1]
    \STATE Let $d$ be the dimension of $V$, and $\{b_1, b_2, \ldots, b_d\}$ be an orthonormal basis of $V$.
		% *** The following is induction base of dimension 1 ***
		\IF {$d = 1$}
				\STATE Make two queries $x = b_1$ and $x = -b_1$ to the oracle. \label{alg:propose}
				\IF {the oracle returns \yes on an $x$}
					\RETURN $x$ and halt the whole program \AlgUnknownLP$(\mbR^n)$.
				\ENDIF
				\STATE Update $\region(i)^*=conv\big(\region(i)^*,\{\lambda x : \lambda\ge 0\}\big)$ for each returned $i$ in the above queries. \label{alg:update}
                                \STATE Halt the current program \AlgUnknownLP$(V)$.
		\ENDIF
		% *** end of induction base of dimension 1 ***
		\STATE Run \AlgUnknownLP$\big(\big\{x = \sum_i x_ib_i \in V: \atan(x_1, x_2) = 0\big\}\big)$. \label{alg:induction}
		\STATE Let $\Cylinders = \{(0,2\pi)\}$.
    \WHILE {\TRUE}%{$\bigcup_i\region(i)^* \neq \mathbb{R}^n$}
			%\STATE // give more details on how to check the condition ...
      \STATE Pick $(\alpha,\beta) \in \Cylinders$ with the max $\beta - \alpha$. %\\($(\alpha,\beta)$ can be any combination of open and close ends.)
      \IF {$(\beta-\alpha)/\pi < 2^{-(2n+3)L}$} \label{alg:no}
				\RETURN \no and terminate the current program \AlgUnknownLP$(V)$.
      \ELSE
        \STATE Let $\gamma = (\alpha + \beta) / 2$ and \mbox{run \AlgUnknownLP$\big(\big\{x = \sum_i x_ib_i \in V:\ \atan(x_1, x_2) = \gamma\big\}\big)$.} \\[0.02in] \label{alg:subspace}
		\STATE Replace $(\alpha,\beta)$ in $\Cylinders$ by $(\alpha, \gamma)$ and $(\gamma, \beta)$. \label{alg:half}
					\FOR {each element $(\delta,\lambda)$ in $\Cylinders$}
						\IF {$\big\{x = \sum_i x_ib_i \in V : \atan(x_1, x_2)\in (\delta,\lambda)\big\} \subseteq \bigcup_i\region(i)^*$} \label{alg:contain}
							\STATE Remove $(\delta,\lambda)$ from $\Cylinders$ \label{alg:remove}
						\ENDIF	
					\ENDFOR
      \ENDIF
    \ENDWHILE
  \end{algorithmic}
\end{algorithm}

In the algorithm, $\atan$ function is a two-argument variant of the arctangent function: $\atan(x, y)$ is
the counter-clockwise angle between the positive horizonal axis and the point $(x, y)$ on the plane.
The induction step is in line~\ref{alg:induction} and~\ref{alg:subspace}, where the subspaces called are defined by $\atan(\cdot)$ of the first two coordinates. The algorithm either returns a feasible solution $x$ in $V$, or discovers that the entire subspace of $V$ does not contain any feasible solution by keeping track of $\region(i)^*$. The algorithm thus falls into the framework of convex hull covering algorithms and always returns a correct answer.

To implement induction in the algorithm, the information of an orthonormal basis of a called subspace can be derived from the current considered space.
%Hence, for simplicity, we do not specify these parameters explicitly in the algorithm.
In particular, the subspace $\big\{x = \sum_i x_ib_i \in V: \atan(x_1, x_2) = 0\big\}$ in line~\ref{alg:induction} is equivalent to
the space expanded by the basis $(\cos(0)b_1 + \sin(0)b_2, b_3, \ldots, b_d)$, and
the subspace given by $\atan(x_1, x_2) = \gamma$ in line~\ref{alg:subspace} is equivalent to the space expanded by $(\cos(\gamma)b_1 + \sin(\gamma)b_2, b_3, \ldots, b_d)$.
Thus, the subspace with parameter $V$ in the algorithm can be implemented efficiently.

Within the program with parameter $V$, each element $(\alpha,\beta)\in \Cylinders$ corresponds to
$\big\{x = \sum_i x_ib_i\in V: \atan(x_1,x_2)\in (\alpha,\beta)\big\},$
which is a sector cylinder in subspace $V$. It can be shown that a cylinder surviving at the end of each while-loop iteration
must contain at least one entire chamber; thus, the upper bound for the number of chambers, which is at most $O(m^{n})$, also bounds the number of cylinders.
The volume of a maximum cylinder is cut by half (line \ref{alg:half}) and a cylinder is disqualified if it is smaller than $2^{-(2n+3)L}$ (line \ref{alg:no}), the algorithm takes at most $m^{poly(n)}(2n+3)L$ iterations in the while loop. Taking the recursion into consideration, there are at most $O\big(m^{poly(n)} poly(n)\cdot L)\big)$ iterations executed.

% the information of an orthonormal basis of a considered space $V$ can
% be derived from the program that calls the subroutine
% \AlgUnknownLP$(V)$. Hence, for simplicity, we do not specify these
% parameters explicitly in the algorithm.

%we have $0 \leq \atan(x, y) \leq \pi$ when $y \geq 0$ and $\pi < \atan(x, y) < 2\pi$ when $y < 0$.
% Line~\ref{alg:contain} determines if $(\delta,\lambda)$ in \Cylinders is contained in $\bigcup_i\region(i)^*$;
% if that is the case, then the sector cylinder $(\delta,\lambda)$ cannot contain any feasible solution, and thus, is removed from \Cylinders.

The above analysis leads to the following theorem.
%whose formal proof is deferred to Appendix~\ref{appendix-algorithm}.

\begin{Thm} \label{thm:general}
  \AlgUnknownLP$(\mbR^n)$ solves the \ulp problem in time $O\big((mnL)^{poly(n)}\big)$. In particular, the problem can be solved in polynomial time if the LP has constant variables.
\end{Thm}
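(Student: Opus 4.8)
The plan is to show that \AlgUnknownLP$(\mbR^n)$ always halts with a correct answer and that it does so within $O\big((mnL)^{poly(n)}\big)$ oracle queries and time. The argument splits into correctness, the while-loop count of a single call, and folding in the recursion.

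\textbf{Correctness.} I would first record that the algorithm fits the convex hull covering framework: the only update it ever makes is $\region(i)^*\gets conv\big(\region(i)^*,\{\lambda x:\lambda\ge0\}\big)$ upon a query $x$ answered $i$ (line~\ref{alg:update}), and it outputs a point only after a \yes. So Theorem~\ref{thm:chca} already certifies that any returned point is feasible, and what remains is the \no output and termination. For the \no output I would invoke the structural property (established in the next step) that the sector cylinders kept by a call on subspace $V$ cover $V\setminus\bigcup_i\region(i)^*$ and, being pairwise disjoint while each chamber is connected, each chamber of $V$ sits inside one cylinder; since a sector cylinder of angular width $\theta$ is a cone of normalized volume $\theta/\pi$, a \no by the top-level call means every chamber of $\mbR^n$ has normalized volume below $2^{-(2n+3)L}$. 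As the feasible region of $Ax>0$ is a convex cone, hence connected, it lies in a single chamber, so Lemma~\ref{lem:feasiblesize} forces it to be empty; the \no is correct.

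\textbf{Running time of one call.} The crux is the invariant that a cylinder surviving to the end of a while-loop iteration contains at least one entire chamber of $V$. Granting it, disjointness of the cylinders injects them into the set of chambers, which by Theorem~\ref{thm:chambers} has size $\sum_{i=1}^{\dim V}\binom{m}{i}=O(m^n)$. Each iteration halves the widest surviving cylinder (line~\ref{alg:half}), and a cylinder obtained by $h$ halvings from width $2\pi$ is discarded once $2\pi/2^{h}<2^{-(2n+3)L}\pi$, i.e. after $h=O(nL)$ halvings; a potential argument charging each iteration either to a halving along one of the $O(m^n)$ cylinder lineages or to the creation of a brand-new surviving cylinder then bounds the number of iterations of a single call by $m^{poly(n)}(2n+3)L$. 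All bookkeeping inside an iteration is polynomial: a cylinder is stored as two angles, so picking the widest and comparing to the threshold is immediate, and the test of line~\ref{alg:contain} — whether the sector cylinder $\{x=\sum_i x_ib_i\in V:\atan(x_1,x_2)\in(\delta,\lambda)\}$ is covered by $\bigcup_i\region(i)^*$, each $\region(i)^*$ being a cone described by finitely many generating rays — is a fixed computational-geometry task solvable in time $poly(m,n,L)$.

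\textbf{Folding in the recursion.} A call with $\dim V=d$ issues one recursive call in line~\ref{alg:induction} and one per while-loop iteration in line~\ref{alg:subspace}, each onto a $(d-1)$-dimensional subspace (whose orthonormal basis is obtained from the current one through the $\atan$ reparametrization noted after the pseudocode); hence the number of calls at recursion depth $k$ is at most $\big(O(m^n nL)\big)^{k}$, the recursion depth is $n$, and a depth-$(n-1)$ leaf call ($d=1$) makes exactly two oracle queries and $O(1)$ extra work. Summing the geometric progression over the $\le n$ levels gives $O\big((mnL)^{poly(n)}\big)$ total queries, and multiplying by the $poly(m,n,L)$ cost of an iteration gives the claimed time bound; for constant $n$ this is $poly(m,L)$.

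\textbf{Expected obstacle.} The genuinely delicate point is the invariant that, after splitting a cylinder $C$ at its mid-angle $\gamma$, running \AlgUnknownLP on the slice $\{\atan(x_1,x_2)=\gamma\}$, and executing the removal loop, at least one half of $C$ still contains a complete chamber (and, relatedly, that the surviving cylinders keep covering $V\setminus\bigcup_i\region(i)^*$). This requires a careful account of how $\bigcup_i\region(i)^*$ enlarges during the nested call and of why it cannot erase every chamber from both halves of a cylinder known at the start of the iteration to contain one. Once this ``cylinders are supersets of chambers'' invariant is in place, the $O(m^n)$ cap on cylinders, the halving-depth bound, and the recursion arithmetic are routine; pinning down the polynomial-time implementation of the line~\ref{alg:contain} containment test is the only other thing to verify.
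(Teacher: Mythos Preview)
Your plan matches the paper's proof in structure: correctness via the convex-hull-covering framework (Theorem~\ref{thm:chca}), the cylinder-contains-a-chamber invariant together with Theorem~\ref{thm:chambers} to cap the number of surviving cylinders, halving to bound the iterations of one call, and a product over the $n$ levels of recursion. Two points deserve correction.

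The assertion that the test at line~\ref{alg:contain} is ``a fixed computational-geometry task solvable in time $poly(m,n,L)$'' is the one place you are actually wrong, not merely incomplete. Deciding whether a cone is covered by a union of $m$ convex cones $\region(i)^*$ is not known to be polynomial in $n$; the complement can already have $\Theta(m^{\lfloor n/2\rfloor})$ pieces. The paper makes no such claim. It argues that each $\region(i)^*$ has at most $m^n$ facets, so the arrangement of the $\region(i)^*$'s has at most $\binom{m^n}{n}\le m^{n^2}$ vertices and hence at most $\binom{m^{n^2}}{n+1}=O(m^{n^3})$ candidate simplices; checking an interior point of each against $\bigcup_i\region(i)^*$ decides the containment in $O(m^{n^3})$ time. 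That is exponential in $n$, and it is only because the target bound $(mnL)^{poly(n)}$ is already exponential in $n$ that the theorem survives. Your later hedge (``pinning down the polynomial-time implementation \ldots is the only other thing to verify'') contradicts the earlier flat claim; the correct statement is that the test costs $m^{poly(n)}$, not $poly(m,n,L)$.

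On the obstacle you flag, the paper's resolution is simpler than the dynamical account you anticipate. You worry about how the nested call at angle $\gamma$ enlarges $\bigcup_i\region(i)^*$ and whether it might wipe out every chamber in both halves of a split cylinder. The paper instead argues statically from the boundaries: whenever a cylinder $(\alpha,\beta)$ sits in the list, both of its boundary half-subspaces $V_\alpha,V_\beta$ have already been processed by an earlier recursive call (line~\ref{alg:induction} or line~\ref{alg:subspace}), and the induction hypothesis says those calls either ended the whole program or certified that $V_\alpha,V_\beta$ contain no feasible solution. A cylinder that survives line~\ref{alg:remove} intersects some chamber (otherwise it would have been removed), and that chamber cannot cross either boundary, so it lies entirely inside the cylinder. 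There is no need to track how $\region(i)^*$ evolves during the nested call; the argument only uses the state after the call returns. This also handles your correctness worry for the top-level \no directly: the feasible cone of $Ax>0$, being connected and disjoint from every processed boundary, sits in a single cylinder, hence has normalized volume below $2^{-(2n+3)L}$, and Lemma~\ref{lem:feasiblesize} finishes.
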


\begin{proof}
We prove the following property by induction: when called on a parameter $V$, the program either returns a feasible solution $x$ in $V$, or discovers that the entire subspace $V$ does not contain any feasible solution. Once this is proved, applying it to the case $dim(V)=1$ gives the correctness of the algorithm.
%%% 2d base %%%
% The induction base is guaranteed by Lemma \ref{lem:2d}.
%%% 1d base %%%
The induction base is trivially true.
%%% 1d base %%%
Suppose that the claim is true for $dim(V) = d-1$, and we consider the case for $dim(V) = d$. It is not hard to see that the algorithm falls into the framework of convex hull covering algorithms: Starting from $\region(i)^* = \emptyset$, the algorithm proposes queries (line \ref{alg:propose}) and uses the returned violations to update $\region(i)^*$'s (line \ref{alg:update}).
Thus, Theorem~\ref{thm:chca} guarantees the correctness of the algorithm, as long as the algorithm always terminates. We will show this together with the complexity analysis below.

Within the program with parameter $V$ with an orthonormal basis $\{b_1, b_2, \ldots, b_d\}$, each element $(\alpha,\beta)\in \Cylinders$ corresponds the set
$$\bigg\{x = \sum_i b_ix_i\in V ~\Big|~ \atan(x_1,x_2)\in (\alpha,\beta)\bigg\},$$
which is a sector cylinder in subspace $V$. Each cylinder surviving at the end of each while-loop iteration intersects with some chamber, because otherwise it is in $\bigcup_i$ $\region(i)^*$ and thus has been eliminated in line \ref{alg:remove}. Note that the boundary of the sector cylinder $(\alpha,\beta)$ are the two $(d-1)$-dimensional subspaces
%$$V_\alpha = span\Big(\cos(\alpha) b_1 +\sin(\alpha) b_2, b_3, \ldots, b_d\Big)$$ and
$$V_\alpha = \bigg\{x = \sum_ib_ix_i \in V ~\Big|~ \atan(x_1, x_2) = \alpha\bigg\}$$ and
$$V_\beta = \bigg\{x = \sum_ib_ix_i \in V ~\Big|~ \atan(x_1, x_2) = \beta\bigg\},$$
%$$V_\beta = span\Big(\cos(\beta) b_1 +\sin(\beta) b_2, b_3, \ldots, b_d\Big),$$
both of which have been searched in the previous loop iterations for feasible solutions. By induction hypothesis, either a feasible solution has been found and the whole program has ended, or we have known at this point that the two subspaces do not contain a feasible solution. Since a chamber intersects with the cylinder, but not with the cylinder's boundary, we can conclude that the cylinder contains at least one entire chamber. Therefore, when all cylinders are too small to contain any feasible region (line \ref{alg:no}), we can conclude that the current subspace $V$ does not contain any feasible solution. To be more precise, first, note that the normalized volume of a sector cylinder
\[
\{x\in \mbR^n: \|x\| = 1, \atan(x_1,x_2)\in (\alpha, \beta)\}
\]
is the same as that of a 2-dimensional sector
\[
\{x\in \mbR^2: \|x\| = 1, \atan(x_1,x_2)\in (\alpha, \beta)\},
\]
both equal to $(\beta - \alpha)/2\pi$. Second, in general it is not true that a set $S$ of small volume cannot contain a solution. But the set in our case is the sector cylinder, and the induction hypothesis guarantees that the boundary (the two half-subspaces corresponding to $\atan(x_1,x_2) = \alpha$ and $\beta$) do not contain any feasible solution. Thus the cylinder either contain no feasible solution, or the entire convex cone of the feasible region in Lemma \ref{lem:feasiblesize}.

Since each cylinder contains at least one entire chamber, and different cylinders clearly do not intersect, the upper bound for the number of chambers also holds for the number of cylinders. As each time the volume of the maximum cylinder shrinks by half (line \ref{alg:half}) until it is smaller than $2^{-(2n+3)L}$ (line \ref{alg:no}), by Theorem~\ref{thm:chambers}, the algorithm takes at most $\sum_{i=0}^{d}\binom{m}{i} (2n+3)L$ iterations in the while loop. Taking the recursion into consideration, there are at most
$$\prod_{d=0}^n \ \left(\sum_{i=0}^{d}\binom{m}{i} (2n+3)L \right) = O\big(m^{n^2}n^nL^n\big)$$
iterations executed.

Inside each loop, the only steps other than the subroutines that cost
more than a constant amount of time is at line \ref{alg:contain},
which is to check whether $\bigcup_i \region(i)^*$ contains some set. Here we briefly argue how to do it in $O\big(m^{n^3}\big)$ time. Each $\region(i)^*$ has at most $m^n$ facets and thus $\bigcup_i \region(i)^*$ has at most $\binom{m^n}{n} \leq m^{n^2}$ vertices. These vertices can form at most $\binom{m^{n^2}}{n+1}=O\big(m^{n^3}\big)$ simplexes, which are convex. For each of these potential simplexes, pick an arbitrary interior point $p$ and check whether it is outside $\bigcup_i \region(i)^*$. %Report $\bigcup_i \region(i)^* \neq \mbR$ if some $p\notin \bigcup_i \region(i)^*$.

Therefore, the whole algorithm \AlgUnknownLP$(\mathbb{R}^n)$ has running time $O\big(m^{n^3}n^nL^n\big)$.
\end{proof}

\subsection{Counting the Number of Chambers}

Consider the union of $m$ polytopes in $\mathbb{R}^n$, their complement divides the whole space $\mathbb{R}^n$
into a number of disconnected components, called {\em chambers}. To analyze the running time of our algorithm,
we need to count the number of chambers formed by $m$ polytopes (or more generally, convex sets).
This question was first raised by L{\' a}szl{\' o} Fejes T{\' o}th as an open problem. The 2-dimensional case was proved by Katona~\cite{Kat77} and the general case was proved by Kovalev~\cite{Kov88}.

\begin{Thm}[Kovalev~\cite{Kov88}]\label{thm:chambers}
The complement of the union of $m$ open or closed convex sets in $\mathbb{R}^n$ can have at most $\sum_{i = 0}^n{m \choose i}=O(m^{n})$ chambers.
\end{Thm}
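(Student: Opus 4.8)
The plan is to prove the bound by induction on the number $m$ of convex sets, keeping the dimension $n$ fixed, which is the strategy behind Kovalev's original argument. Let $f(m,n)$ denote the maximum possible number of chambers (connected components of the complement) formed by $m$ convex sets in $\mathbb{R}^n$. The base case $m=0$ gives one chamber, namely all of $\mathbb{R}^n$, matching $\sum_{i=0}^n\binom{0}{i}=1$; and the case $m=1$ gives at most $1$ chamber as well since the complement of a single convex set is connected (in fact $\binom{1}{0}+\binom{1}{1}=2$ is a safe over-count, but one checks the complement of one convex body is connected, so the slack is fine). I would also record the two-dimensional case and more generally handle small dimensions by Katona's result as a sanity check rather than as a logical ingredient.

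For the inductive step, suppose the bound holds for $m-1$ convex sets in every dimension. Given convex sets $C_1,\dots,C_m$ in $\mathbb{R}^n$, first consider the complement $U$ of $C_1\cup\cdots\cup C_{m-1}$, which has at most $f(m-1,n)=\sum_{i=0}^n\binom{m-1}{i}$ chambers. Now add $C_m$. Adding one more convex set can only split existing chambers (it never merges them), so I must bound the total increase in the number of connected components caused by intersecting with $C_m$'s complement. The key geometric claim is that each chamber of $U$ that meets $\partial C_m$ contributes, and the number of \emph{new} components created is controlled by how $\partial C_m$ — an $(n-1)$-dimensional convex hypersurface — is cut up by the sets $C_1,\dots,C_{m-1}$ restricted to it. More precisely, the number of extra chambers is at most the number of connected components into which $C_1\cap\partial C_m,\dots,C_{m-1}\cap\partial C_m$ divide the $(n-1)$-dimensional object $\partial C_m$; since $\partial C_m$ is (topologically) a sphere $S^{n-1}$ and the traces $C_j\cap\partial C_m$ are "convex on the sphere" in the appropriate sense, one applies the inductive hypothesis in dimension $n-1$ with $m-1$ sets to get an upper bound of $\sum_{i=0}^{n-1}\binom{m-1}{i}$. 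Adding $\sum_{i=0}^n\binom{m-1}{i}+\sum_{i=0}^{n-1}\binom{m-1}{i}$ and using Pascal's identity $\binom{m-1}{i}+\binom{m-1}{i-1}=\binom{m}{i}$ collapses the sum to $\sum_{i=0}^n\binom{m}{i}$, which is exactly the desired bound; this is also where one sees transparently that the summand $\binom{m}{i}$ arises from combining an $i$-dimensional contribution in $\mathbb{R}^n$ with an $(i-1)$-dimensional contribution on $\partial C_m$.

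The main obstacle, and the place where care is genuinely required, is making the "one convex set splits the complement into at most (something on its boundary sphere)" step rigorous. Two subtleties arise. First, the sets $C_j\cap\partial C_m$ are not convex subsets of a Euclidean space — they are intersections of convex bodies with a convex hypersurface — so one needs the right notion of "convex on $S^{n-1}$" (e.g. geodesically convex, or the trace of a convex body, which are intersections of the sphere with half-spaces at each boundary point) for which the inductive hypothesis, stated for $\mathbb{R}^n$, actually transfers; handling this typically requires either reformulating the theorem for spheres or a local/affine-chart argument near each point of $\partial C_m$. Second, one must carefully argue the combinatorial bookkeeping: that a chamber of $U$ which is disjoint from $C_m$ survives intact, that a chamber meeting $\partial C_m$ is split into pieces counted by the components of its intersection with the complement of $C_m$, and that these newly created pieces are in bijection (or at least injected) with components of the dissection of $\partial C_m$ — some pieces lie "inside" $C_m$ and are killed, some lie outside and survive, and the accounting of which is which needs a connectedness argument on $\partial C_m$. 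Once these two points are pinned down, the Pascal-identity collapse is routine, and the induction closes. I expect the cleanest route is to prove the statement simultaneously for $\mathbb{R}^n$ and for $S^{n-1}$ (with "convex" meaning intersection with a hemisphere / geodesic ball), so that the dimension-drop in the inductive step lands exactly on a statement of the same form.
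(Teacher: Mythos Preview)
Your inductive plan is plausible and the Pascal-identity arithmetic is correct, but it is \emph{not} the paper's proof; it is essentially a reconstruction of Kovalev's approach (the paper explicitly remarks that Kovalev argues by induction, and presents its own proof precisely as an alternative that avoids induction altogether). The paper's argument is integral-geometric, inspired by the Gauss--Bonnet theorem. For a bounded set $C$ and direction $v\in S^{n-1}$, set $f_v(p,C)=1$ if $p$ is the unique maximizer of $\langle v,\cdot\rangle$ on $C$ and $0$ otherwise; then $\sum_{p\in\partial C}f_v(p,C)=1$ for generic $v$, so $\sum_{p\in\partial C}\int_{S^{n-1}}f_v(p,C)\,d\omega = V$ for every bounded $C$. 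The key pointwise inequality is that at any boundary point $p$ of a chamber $D_j$, the ``exterior angle'' $\int f_v(p,D_j)\,d\omega$ is dominated by that of $\bigcap_{i:\,p\in\partial C_i}C_i$ at $p$; a Carath\'eodory-type lemma via polar cones then reduces to intersections of at most $n$ of the $C_i$. Summing gives $kV\le V+\sum_{0<|S|\le n}V$, i.e.\ $k\le\sum_{i=0}^n\binom{m}{i}$, with the term $\binom{m}{i}$ coming \emph{directly} from $i$-fold intersections rather than via Pascal's identity. Unbounded sets are handled by intersecting with a large ball and controlling its boundary contribution in the limit. What each route buys: your induction is conceptually elementary and parallels the hyperplane-arrangement recurrence, but --- as you correctly flag --- the traces $C_j\cap\partial C_m$ are not convex in any $\mathbb{R}^{n-1}$, and $\partial C_m$ need not be a sphere when $C_m$ is unbounded or lower-dimensional, so closing the induction simultaneously for $\mathbb{R}^n$ and $S^{n-1}$ is real work; the paper's proof sidesteps all of that with no induction and no boundary hypersurface.
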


The bound described in the theorem is tight: consider, e.g., when all convex sets are hyperplanes.
The proof of Katona was based on the analysis of the shape of the convex sets, and the proof of Kovalev used induction on the dimension. %; their proofs may not illustrate the insights of the problem.
Next, we give an alternative and simpler proof to this theorem. Our proof does not rely on induction and is independent to~\cite{Kat77,Kov88}. From our proof, we can see clearly where each binomial coefficient in the summation comes from.
%At the heart of our proof is the basic fact that the polar cones of vertices of a convex polytope exactly cover all directions. (EXPAND MORE? ANGLE?)
(We recommend readers read all these proofs for comparison.)

We will use bounded polygons in a 2-dimensional plane $\mathbb{R}^2$ to illustrate the idea of our proof,
which is completely elementary and much simpler than the one in \cite{Kat77}. %Our proof uses the concept of {\it exterior angle} extensively.
Actually, we identify the measure of exterior angles\footnote{An \emph{exterior angle} of a polygon at a vertex $v$ is defined as the angle formed by one side adjacent to $v$ and a line extended from the other side. The value is $\pi$ minus the interior angle. For a concave vertex, its exterior angle is negative.}
to naturally bridge the number of polygons and that of chambers, and we only use the well-known exterior angle theorem which says that the sum of exterior angles of a 2-dimensional polygon is $2\pi$.
For a given polygon $C$, let $V(C)$ denotes the set of vertices of $C$ and let $\alpha(v, C)$ denote the exterior angle at vertex $v$,
then $\sum_{v \in V(C)}\alpha(v, C) = 2\pi$.
%The well-known exterior angle theorem says that

Consider $m$ bounded and closed convex polygons $C_1, C_2, \ldots,$ $C_m \subseteq \mbR^2$. For simplicity, we assume that no three edges intersect at the same point. Assume that the complement of their union, $\mbR^2 \backslash \bigcup_iC_i$, has $k+1$ chambers $D_0, D_1, \ldots, D_k$, where $D_0$ is unbounded and $D_1, \ldots, D_k$ are bounded polygons.
Then for any vertex $v \in V(D_j)$, there are two possibilities (see Figure \ref{figure-chamber-2d} for an illustration):
\begin{itemize}[$\bullet$]
\addtolength{\itemsep}{-0.5\baselineskip}
\item $v$ is a vertex of some polygon $C_i$. In this case we have $\alpha(v, D_j) < 0 < \alpha(v, C_i)$.
\item $v$ is the intersection point of two edges from two polygons $C_i$ and $C_{i'}$. In this case we have $\alpha(v, D_j) = \alpha(v, C_i \cap C_{i'})$.
\end{itemize}

\begin{figure}[ht]
\begin{center}
\includegraphics[scale = 0.7]{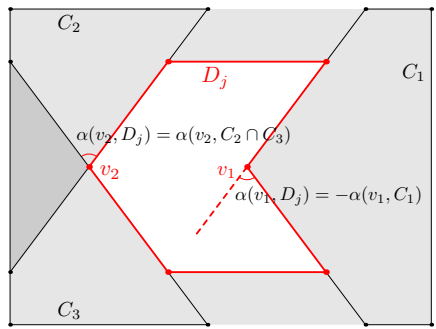}
\caption{Illustration of the 2D proof.}\label{figure-chamber-2d}
\end{center}
\end{figure}

%\vspace{-0.5in}
Hence, all exterior angles of $D_j$ are ``contributed" by the $m$ polygons and their intersections.
Adding up all these exterior angles, we have\footnote{The second inequality only considers the intersection of two but not more $C_i$'s, because we assumed that no three edges intersect at the same point.
%When there are such three edges, the total sum of exterior angles becomes smaller, which is to the advantage of our analysis.
When there are such three edges, it is actually to the advantage of our analysis.} %\\[-0.3in]
\small
\begin{eqnarray*}
  2\pi \cdot k & = & \sum_{j=1}^k\sum_{v \in V(D_j)}\alpha(v, D_j) \leq \sum_{j=1}^k\sum_{v \in V(D_j)}\alpha\bigg(v, \bigcap_{v \in C_i}C_i\bigg) \\
  & \leq & \sum_{i=1}^m\sum_{v \in V(C_i)}\alpha(v, C_i) + \sum_{1 \leq i < j \leq m}\sum_{v \in V(C_i \cap C_j)}\alpha(v, C_i \cap C_j) = 2\pi \cdot \left({m \choose 1} + {m \choose 2}\right)
\end{eqnarray*}
\normalsize
Thus, the total number of chambers is at most $\sum_{i=0}^2{m \choose i}$.

In order to generalize the proof to general convex sets in the $n$-dimensional space $\mathbb{R}^n$, one needs to
define exterior angles for general convex sets in higher dimensions, so that it can still capture the intrinsic relationship between convex sets and chambers. To this end, we adapt the concept of ``extreme directions'' from Banchoff's seminal work~\cite{Ban67} and define the exterior angle of a boundary point as the set of all extreme directions at that point. Our definition of exterior angles is different from that of in~\cite{Ban67}, in that we drop all low dimensional terms. The reason is that Gauss-Bonnet-type theorems relate total curvature to Euler characteristic, but for our purpose, the invariant of total curvature (i.e., exterior angle in our definition) in the highest dimension precisely links the convex sets and chambers. The formal proof is deferred to Appendix~\ref{appendix-chamber}.

    % No one is taking the token

\subsection{Lower Bound}

In this section, we establish an exponential lower bound for the \ulp problem. We will need McMullen's celebrated proof ~\cite{McM70,MS71} of the seminal Upper Bound Conjecture by cyclic polytopes.

Consider the moment curve $c: \mathbb{R} \rightarrow \mathbb{R}^n$ that defines
$c(t) = (t, t^2, \ldots, t^n)$ for $t \in \mathbb{R}$.
For any distinct $m>n$ points $c(t_1),\ldots,c(t_m)$ on the moment curve,
its convex hull $conv(c(t_1),\ldots,c(t_m))$ is called a {\it cyclic polytope} with $m$ vertices.
It is known that its combinatorial structure (including the number of faces of any dimension) is uniquely determined by $n$ and $m$, and is independent of the points chosen.
We use $C(n, m)$ to denote such a {\it cyclic polytope} with $m$ vertices in the $n$-dimensional space.

\begin{Thm}[Upper Bound Theorem (in dual form)]\label{thm:upperb}
  Let $f_k(P)$ denote the number of $k$-dimensional faces of a polytope $P$.
  For any polytope $P$ in $\mathbb{R}^n$ with $m$ facets (i.e., $(n-1)$-dimensional faces) and any $0
  \leq k \leq n-2$, we have $f_k(P) \leq f_{n-k-1}(C(n, m)).$
  In particular,
 $$f_0(P) \leq f_{n-1}(C(n, m)) = {{m - \lfloor \frac{n+1}{2}   \rfloor} \choose {m - n}} + {{m - \lfloor \frac{n+2}{2} \rfloor}
      \choose {m - n}} = \Theta\big(m^{\lfloor n/2\rfloor}\big).$$
  \normalsize
\end{Thm}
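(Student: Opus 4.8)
The plan is to obtain this statement from McMullen's Upper Bound Theorem in its original \emph{primal} form (the cyclic polytope maximizes the number of faces of every dimension among polytopes with a prescribed number of vertices) via polar duality, and then to read off the closed form and the asymptotics from the classical face count of cyclic polytopes. The genuinely hard content --- that cyclic polytopes are extremal --- is McMullen's theorem, which I will cite rather than reprove; everything else is bookkeeping.

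First I would set up polar duality. Given a polytope $P\subseteq\mathbb{R}^n$ with $m$ facets, we may assume the origin lies in its interior, and form the polar dual $P^{\ast}=\{y\in\mathbb{R}^n:\langle x,y\rangle\le 1\ \text{for all }x\in P\}$. Then $P^{\ast}$ is again an $n$-polytope, its face lattice is that of $P$ with the order reversed, and a $k$-dimensional face of $P$ corresponds to an $(n-k-1)$-dimensional face of $P^{\ast}$; in particular $f_k(P)=f_{n-k-1}(P^{\ast})$, and since facets of $P$ become vertices of $P^{\ast}$, the dual $P^{\ast}$ has exactly $m$ vertices. Now I apply McMullen's theorem to $P^{\ast}$ with face dimension $n-k-1$: among all $n$-polytopes with $m$ vertices, $C(n,m)$ maximizes the number of faces of that dimension, so $f_{n-k-1}(P^{\ast})\le f_{n-k-1}(C(n,m))$. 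Combining the two displays gives $f_k(P)\le f_{n-k-1}(C(n,m))$ for every $0\le k\le n-2$, and the case $k=0$ is the inequality $f_0(P)\le f_{n-1}(C(n,m))$.

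It then remains to evaluate $f_{n-1}(C(n,m))$ and estimate its growth. The number of facets of a cyclic polytope is determined combinatorially: a set of $n$ vertices spans a facet exactly when it satisfies the Gale evenness condition, and counting such $n$-subsets of the $m$ points (equivalently, applying the Dehn--Sommerville relations to the simplicial polytope $C(n,m)$) yields $f_{n-1}(C(n,m))=\binom{m-\lfloor (n+1)/2\rfloor}{m-n}+\binom{m-\lfloor (n+2)/2\rfloor}{m-n}$, which is the displayed expression. For the asymptotics, rewrite each term using $\binom{a}{b}=\binom{a}{a-b}$: the first term equals $\binom{m-\lfloor(n+1)/2\rfloor}{\lfloor n/2\rfloor}$, whose complementary index is $\lfloor n/2\rfloor$ in both parity cases, so with $n$ fixed it is $\Theta(m^{\lfloor n/2\rfloor})$; the second term is $\binom{m-\lfloor(n+2)/2\rfloor}{\lfloor (n-1)/2\rfloor}=O(m^{\lfloor n/2\rfloor})$. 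Hence $f_{n-1}(C(n,m))=\Theta(m^{\lfloor n/2\rfloor})$.

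I expect no real obstacle: the only step requiring care is matching the two standard forms of the cyclic-polytope facet count so that the index $m-n$ appears exactly as in the statement, which is the elementary identity $\binom{a}{b}=\binom{a}{a-b}$ applied to each summand; and, of course, the weight of the argument rests entirely on citing McMullen's Upper Bound Theorem \cite{McM70,MS71} together with the fact that polar duality reverses the face lattice.
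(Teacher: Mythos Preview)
Your derivation is correct, but note that the paper does not actually prove this theorem: it is stated as a known result and attributed directly to McMullen \cite{McM70,MS71}, then used as a black box in the lower-bound argument that follows. What you have written is the standard justification for why the ``dual form'' follows from McMullen's primal statement via polar duality, together with the classical facet count of $C(n,m)$ and its asymptotics; this is exactly the background the citation is meant to encapsulate, so your proposal is consistent with (indeed, a fleshing-out of) the paper's treatment rather than an alternative proof.
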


The upper bound theorem, in its dual form, implies an upper bound on the number of vertices of a polytope with $m$ facets,
and the maximum is achieved at the dual of a cyclic polytope with $m$ vertices.

\begin{Thm}
  Any algorithm that solves the \ulp problem with $m$ constraints and $n$ variables %problem with $n$ variables, $m$ constraints and problem size $L$ polynomial in $n$ and $m$
  in the worst case needs at least $\Omega\big(m^{\lfloor n/2\rfloor}\big)$ queries to the oracle.
\end{Thm}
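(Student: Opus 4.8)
The plan is to construct an adversary argument based on the dual cyclic polytope. Fix $n$ and think of the hidden matrix $A$ as $m$ points $a_1,\dots,a_m$ on the unit sphere $S^{n-1}$ (we may work with the homogeneous form $Ax>0$, which by the reduction at the start of Section~\ref{section-worst} is no easier than the general problem). A query $x$ that is infeasible and receives answer $i$ tells the algorithm only that $\langle a_i,x\rangle\le 0$, i.e.\ it cuts the halfspace $H_x=\{y:\langle x,y\rangle>0\}$ out of $\region(i)$, equivalently adds the ray through $x$ to $\region(i)^*$. By Theorem~\ref{thm:chca}, the algorithm may only declare infeasibility once $\bigcup_i\region(i)^*=\mathbb{R}^n$ (up to the normalized-volume slack of Lemma~\ref{lem:feasiblesize}, which we will make irrelevant by scaling the instance so that any genuine feasible cone would be far larger than $2^{-(2n+3)L}$). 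So the adversary's goal is to answer queries consistently while keeping some point of $S^{n-1}$ uncovered by $\bigcup_i\region(i)^*$ for as long as possible, forcing the algorithm either to keep querying or to wrongly halt.

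The key idea is that the regions $\region(i)^*$ are convex cones, so $\bigcup_i\region(i)^*$ being all of $\mathbb{R}^n$ is a covering-of-the-sphere-by-$m$-convex-pieces event, and covering $S^{n-1}$ by $m$ spherically convex sets requires, in the extremal configuration, many pieces per "witness direction." Concretely, I would have the adversary commit (secretly) to the instance $A$ being the set of $m$ vertices of the dual of a cyclic polytope $C(n,m)$ suitably realized on the sphere: its polar/normal-fan structure is that of $C(n,m)$, which by the Upper Bound Theorem (Theorem~\ref{thm:upperb}) has $f_{n-1}(C(n,m))=\Theta(m^{\lfloor n/2\rfloor})$ facets, equivalently the arrangement has $\Theta(m^{\lfloor n/2\rfloor})$ full-dimensional cells. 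For each query $x$ the adversary returns the index $i$ of whichever hidden constraint is "most comfortably" violated in a way consistent with $A$ being this polytope-instance; one then argues that each answer the algorithm receives can eliminate only the single ray through $x$ from at most one region, and that to certify $\bigcup_i\region(i)^*=\mathbb{R}^n$ the algorithm's queried directions must hit every one of the $\Theta(m^{\lfloor n/2\rfloor})$ cells of the normal fan of $C(n,m)$ — otherwise an uncovered cell corresponds to a consistent completion of $A$ with a feasible solution, or an alternative instance with the same transcript but opposite answer, contradicting correctness. A standard adversary/indistinguishability wrap-up then gives the $\Omega(m^{\lfloor n/2\rfloor})$ bound, and since randomization only helps by at most a constant factor in such information-theoretic arguments (Yao's principle: put the uniform distribution over a family of $\Theta(m^{\lfloor n/2\rfloor})$ hard instances, one per cell, and any deterministic algorithm errs on most of them before making that many queries), the bound holds for randomized algorithms as stated.

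The main obstacle, and where the real work lies, is making precise the claim that "the algorithm must probe every cell of the fan." The subtlety is that a single query can in principle receive an answer that, together with earlier answers, rules out many instances at once; one must show that against the cyclic-polytope family this cannot happen — that the transcript after $q<\Omega(m^{\lfloor n/2\rfloor})$ queries is consistent both with some instance for which the LP is feasible (a direction in an un-probed cell stays in no $\region(i)^*$) and, by a small perturbation, with an infeasible one, so the algorithm cannot correctly decide. This requires exhibiting a family of instances $\{A_C\}$ indexed by the cells $C$ of the normal fan of $C(n,m)$, all pairwise indistinguishable under fewer than $|{\rm cells}|$ queries because the adversary can route every answer away from the "live" cell; the combinatorial heart is exactly the Upper Bound Theorem, which guarantees the fan has $\Theta(m^{\lfloor n/2\rfloor})$ cells and hence that many instances to hide among. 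I would also need to double-check the boundary bookkeeping (open vs.\ closed constraints, the volume slack in Lemma~\ref{lem:feasiblesize}, and that the chosen realization of $C(n,m)$ has all $a_i$ of roughly unit norm so the query/answer semantics match), but these are routine once the cell-probing lower bound is in place.
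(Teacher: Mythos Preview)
Your high-level instinct—use the Upper Bound Theorem on cyclic polytopes to manufacture $\Theta(m^{\lfloor n/2\rfloor})$ indistinguishable situations—is the same as the paper's, but the execution has two real gaps. First, a combinatorial slip: the dual of $C(n,m)$ has $m$ \emph{facets} and $\Theta(m^{\lfloor n/2\rfloor})$ \emph{vertices}, not ``$m$ vertices''; and if you instead mean the $m$ vertices of $C(n,m)$ itself as the rows of $A$, then the normal fan of $C(n,m)$ has only $m$ full-dimensional cones (one per vertex), not $\Theta(m^{\lfloor n/2\rfloor})$. So the object whose cells you want to ``probe'' never actually gets $\Theta(m^{\lfloor n/2\rfloor})$ cells in your setup. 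Second, and more importantly, Theorem~\ref{thm:chca} is not a lower-bound tool: it says a particular algorithmic \emph{framework} is correct once $\bigcup_i\region(i)^*$ covers $\mathbb{R}^n$; it does not say an arbitrary algorithm must achieve this covering before it may halt. A lower bound needs an explicit family of instances that are information-theoretically indistinguishable under few queries, and that is precisely the part you label ``where the real work lies'' without actually constructing it.

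The paper's construction sidesteps both issues with a trick you are missing: it uses $m{+}1$ constraints, not $m$, and varies only the last one. Let $P$ be the dual of $C(n,m)$, so $P$ has $m$ facets and $k=\Theta(m^{\lfloor n/2\rfloor})$ vertices $v_1,\dots,v_k$. The first $m$ constraints are fixed once and for all as the facet inequalities of $P$; the hard family $LP_1,\dots,LP_k$ differ only in a single extra $(m{+}1)$-th constraint $H'_i$, chosen so that $H'_i\cap P$ is a small full-dimensional neighborhood of $v_i$ and these neighborhoods are pairwise disjoint. Every $LP_i$ is feasible with feasible region $\mathrm{int}(P)\cap H'_i$, and the regions are disjoint. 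The adversary is then trivial: on query $p$, if $p\notin \mathrm{int}(P)$ return some violated facet of $P$ (the same answer across all $LP_i$); if $p\in \mathrm{int}(P)$ return index $m{+}1$ (again the same answer for every $LP_i$ except possibly the one whose feasible region contains $p$). Hence each query distinguishes at most one instance from the rest, and any correct algorithm needs $k-1=\Omega(m^{\lfloor n/2\rfloor})$ queries. No fan-cell-probing argument, no covering of $S^{n-1}$, and no reasoning about $\region(i)^*$ is required.
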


\begin{proof}
  By Theorem~\ref{thm:upperb} and the duality of polytopes, we know
  that the dual polytope $P$ of a cyclic polytope $C(n, m)$ has $m$
  facets and $k = \Theta\big(m^{\lfloor n/2\rfloor}\big)$ vertices. Equivalently, $P$ can be defined
  by those $m$ facet-defining half-spaces. Note that since the combinatorial structure of a cyclic polytope $C(n, m)$ is only determined by $m$ and $n$,
  the combinatorial structure of $P$ is also fixed.

  Assume that the vertices of polytope $P$ are $v_1, v_2, \ldots, v_k$. It is easy to see that for any $v_i$, $1
  \leq i \leq k$, there exists a half-space $H_i =\{y\in \mathbb{R}^n : \langle c_i, y \rangle \geq d_i\}$ which intersects $P$ only at $v_i$, i.e., $v_i = P \cap H_i$.
  Thus, $\langle c_i, v_i \rangle = d_i$ and $\langle c_i, v_j \rangle < d_i$ for all $j \neq i$.
  We now slightly move each half-space $H_i$ towards the polytope $P$ such that (i) $H_i\cap P$ has a positive volume, and (ii) $(H_i\cap P)\cap (H_j\cap P) = \emptyset$ for any $i\neq j$; denote the resulting open half-space by $H'_i$.
%  That is, the intersection of each halfspace $H_i$ and $P$ have a nonempty interior, yet the area is small enough such that $H_i\cap P$ still does not contain any vertex $v_j$, $j\neq i$.

%  Formally, for any $1 \leq i, j \leq k$, let $\epsilon_{ij} = d_i - \langle c_i, v_j \rangle$,
%  and pick $\epsilon = \min_{i \neq j}{\epsilon_{ij}}$. Now we loose each halfspace $H_i$ to an open half-space
%  $$H'_i = \Big\{x :\  \langle c_i, x \rangle > d_i - \frac{\epsilon}2\Big\}.$$
%  Since $\epsilon > 0$, the intersection of $H'_i$ and $P$ for each
%  $i$ will now have a nonempty interior. Further, for any point $v \in
%  P$, $v$ belongs to some half-space $H'_t$ means that $\langle c_t, v
%  \rangle > d_t - \frac{\epsilon}2$. Note that $P = conv(\{v_1, v_2, \ldots,
%  v_k\})$, which means point $v$ can be presented as $v = \sum_i\lambda_iv_i$,
%  where $\lambda_i \geq 0$ and $\sum_i\lambda_i = 1$. Thus, we have
%  \begin{eqnarray*}
%    \langle c_t, v \rangle &=& \bigg\langle c_t, \sum_i\lambda_iv_i
%    \bigg \rangle = \sum_i \lambda_i \langle c_t, v_i \rangle = \sum_i\lambda_i(d_t - \epsilon_{ti}) \\
%    &=& d_t - \sum_i\lambda_i\epsilon_{ti} \leq d_t - \sum_{i \neq t}\lambda_i\epsilon = d_t - (1 - \lambda_t)\epsilon .
%  \end{eqnarray*}
%  Combine the above fact $\langle c_t, v \rangle > d_t -
%  \frac{\epsilon}2$ gives us $\lambda_t > \frac12$. Since among all
%  $\lambda_i$, $1 \leq i \leq k$, at most one of them can be
%  greater than $\frac12$. We conclude that $v$ can be in at most one
%  half-space $H'_i$.

  We construct our unknown LP instances as follows. Let $P'$ be the set of interior points of $P$.
  %the open convex polytope derived from $P$ with strict inequalities.
  Consider the following family of LP systems $LP_i: \big\{P' \cap H'_i\big\}$, for $i=1,\ldots,k$.
  From the above analysis, we know that
  \begin{itemize}[$\bullet$]
  \addtolength{\itemsep}{-0.5\baselineskip}
  \item every $LP_i$ consists of $n$ variables and $m+1$ constraints: the first $m$ ones by $P'$ and the last by $H'_i$;
  \item every $LP_i$ has a nonempty feasible region;
  \item the feasible regions of $LP_i$ and $LP_j$ are disjoint, $\forall i \neq j$.
  \end{itemize}

  Next we define the adversary oracle:
  For any queried point $p$, if $p \notin P'$,
  the oracle returns the index of an arbitrary violated constraint among the first $m$ constraints that define $P'$.
  If $p \in P'\setminus \bigcup_{i}H'_i$, the oracle just returns the index $m+1$, meaning that the last constraint (i.e., the one corresponds to $H'_i$) is violated.
  If $p\in P'\cap H_i'$ for some $i$,
  then the oracle also returns the index $m+1$ if the algorithm has not made $k$ queries yet.

  Now for any two systems $LP_i$ and $LP_j$, the oracle will return us a different answer only if we
  propose a point in the feasible region of one of them. Because overall there
  are $k$ LP systems, for any $\ell \leq k-2$ queries, there are at least two linear systems $LP_i$
  and $LP_j$ from which we cannot distinguish. Thus, an
  algorithm has to query on at least $k-1 = \Omega\big(m^{\lfloor n/2\rfloor}\big)$ candidates in
  order to solve the \ulp problem in the worse case. Hence, the theorem follows.
\end{proof}

\section{Concluding Remarks}

We consider solving linear programs when the input constraints are unknown,
and show that different kinds of violation information yield different computational complexities.
Linear programs are powerful tools employed in real applications dealing with objects that are largely unknown.
%In addition to the examples discussed in Section \ref{sec:intro}, as another example,
For example, in the node localization of sensor networks where the locations of targets are unknown~\cite{DPE01},
the computation of the locations in some settings can be formulated as
a linear program with constraints that
measure partial information obtained from data~\cite{Gen05}.
However, the estimation usually has various levels of error, which may lead to violations of the presumed constraints.
Interesting questions that deserve further explorations are what can
be theoretically analyzed there, and in general, what other natural
formats of violations there are in linear programming and what complexities they impose.

\section{Acknowledgments}

We thank Christos Papadimitriou for the helpful discussions and bringing~\cite{PY93} to our attention,
and Yingjun Zhang for providing a nice overview of transmit power control problems.

\small

\newpage
\normalsize
\appendix

\section{Small Number of Constraints}\label{section-smallconstraint}

In this section, we consider those LP instances $Ax>b$ in which the number of constraints $m$ is less than or equal to the number of variables $n$.
Note that such LP instances possess an important property that a feasible solution almost surely exists.
In the following we will see that, in contrast to the exponential lower bound established for the general setting,
the solution-existing (almost everywhere) property yields an efficient algorithm for the \ulp problem.

\begin{Thm}
  The \ulp problem with $m$ constraints and $n$ variables can be solved in polynomial time with respect to the input size when $m\le n$.
\end{Thm}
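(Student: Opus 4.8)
The plan is to exploit the ``solution-existing almost everywhere'' property and reduce the problem, essentially, to learning the matrix $A$ (and vector $b$) well enough to certify feasibility. When $m\le n$, for a generic instance the feasible region of $Ax>b$ is a nonempty full-dimensional polyhedron, so the only way the oracle can ever answer is by returning an index $i$ with $\langle a_i,x\rangle\le b_i$; it never returns \no, and a feasible point is guaranteed to exist. So the task is to locate such a point efficiently. I would run the ellipsoid method in the parameter space $\mathbb{R}^{m(n+1)}$ to search for the unknown $(A,b)$, exactly as in the furthest-oracle algorithm of Section~\ref{section-furthest}: maintain an ellipsoid whose center $(A',b')$ is the current guess; if $A'x>b'$ has a feasible solution $x$ (which it generically does, since $m\le n$), query $x$ to the oracle. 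If the oracle returns \yes we are done; otherwise it returns some index $i$ with $\langle a_i,x\rangle\le b_i<b'_i-\langle a'_i,x\rangle+\langle a_i,x\rangle$—more precisely, $\langle a_i,x\rangle - b_i \le 0 < \langle a'_i,x\rangle - b'_i$, which is a hyperplane in $\mathbb{R}^{m(n+1)}$ separating $(A,b)$ from $(A',b')$. We then shrink the ellipsoid and iterate.

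The key steps, in order, would be: (1) argue that generically (hence ``almost surely,'' or under a mild non-degeneracy assumption) the LP $Ax>b$ with $m\le n$ has a full-dimensional feasible region of normalized volume at least $2^{-\mathrm{poly}(L)}$—this is where the $m\le n$ hypothesis enters, and it can be seen by noting that $m$ generic halfspaces in $\mathbb{R}^n$ with $m\le n$ always have a common interior point, together with the volume lower bound of Lemma~\ref{lem:feasiblesize} (suitably adapted to the non-homogeneous $Ax>b$ via the standard homogenization); (2) set up the ellipsoid method over the bounded region of $\mathbb{R}^{m(n+1)}$ in which $(A,b)$ must lie (using the input-size bound $L$), handling the degeneracy of the target (a single point, volume $0$) by the Gr\"otschel--Lov\'asz--Schrijver strong-separation-oracle machinery exactly as invoked in Section~\ref{section-furthest} and in~\cite{GLS84,GLS88,BCZ12}; (3) in each iteration, when the center $(A',b')$ yields a feasible $A'x>b'$, pick an interior point $x$ of $\{x:A'x>b'\}$ and query it—either the oracle says \yes (done), or it returns an index giving the separating hyperplane described above; (4) when $A'x>b'$ happens to be infeasible (a measure-zero event among centers, but one we must still handle for correctness), observe that infeasibility of $m\le n$ generic halfspaces is itself non-generic, so we may simply perturb $(A',b')$ slightly to restore feasibility of the center's LP, or invoke the Voronoi/extreme-point argument of Section~\ref{sec:Ax>b} as a fallback; (5) bound the number of iterations by the standard ellipsoid bound $\mathrm{poly}(m(n+1),L)$, and each iteration costs $\mathrm{poly}(m,n,L)$ for the interior-point computation, so the total is polynomial.

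The main obstacle I anticipate is step (1) together with the edge case in step (4): making precise in what sense ``a feasible solution almost surely exists'' and turning it into a worst-case (not just average-case) algorithmic guarantee. If one only has average-case existence, one gets an algorithm that works for almost all inputs; to get a clean theorem one wants either a non-degeneracy condition on $(A,b)$ (analogous to the one in Section~\ref{section-furthest}) under which feasibility is guaranteed, or a way to detect the rare degenerate/infeasible instances. I would handle this by stating the theorem for non-degenerate instances and remarking that a random perturbation yields non-degeneracy, so that a full-dimensional feasible region of inverse-exponential volume is guaranteed; the ellipsoid then provably terminates with a feasible point. A secondary technical point is ensuring the separating hyperplanes we extract are \emph{strong} separators (bounded-description, with a margin), which again follows the template already used in the furthest-oracle section and does not require new ideas.
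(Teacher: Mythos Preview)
Your overall architecture is the same as the paper's: run the ellipsoid method in the parameter space $\mathbb{R}^{m(n+1)}$ searching for $(A,b)$, use the GLS machinery to handle the zero-volume target, and at each center $(A',b')$ query a feasible point of $A'x>b'$ to obtain either a \yes or a separating hyperplane. So in that sense the proposal is on target.

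Where you diverge from the paper is in your diagnosis of the main obstacle. You spend most of the proposal worrying about feasibility of the \emph{hidden} system $Ax>b$ (your step~(1)), and you end up suggesting a non-degeneracy hypothesis on $(A,b)$ to guarantee it. The paper does not need, and does not impose, any such assumption: the theorem is proved for \emph{all} instances, feasible or not. The reason is that the only place $m\le n$ is used is at the \emph{center} $(A',b')$, not at the hidden instance. When $m\le n$, any matrix $A'$ can be perturbed entrywise by an arbitrarily small amount to a full-row-rank matrix $A''$, and then $A''x>b'$ is automatically feasible (since $A''$ is surjective onto $\mathbb{R}^m$). Querying a feasible point of $A''x>b'$ yields a separating hyperplane between $(A'',b')$ and $(A,b)$; because $(A'',b')$ is arbitrarily close to the center, the ellipsoid still shrinks at the usual rate. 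This is exactly the ``perturb $(A',b')$ slightly'' option you list in your step~(4), but you offer it only as a fallback without noticing that it is the whole point and that it removes any need for assumptions on the hidden LP. In particular, your step~(1) and the volume argument there are not used, and the non-degeneracy caveat in your conclusion weakens the statement unnecessarily. The Voronoi/extreme-point machinery from Section~\ref{sec:Ax>b} is also not needed here.
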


\begin{proof}
  We use the ellipsoid method to find the unknown matrix $A$ and vector
  $b$, which can be together viewed as a point (also a degenerate polyhedron)
  in dimension $\mbR^{m(n+1)}$. Initially, we choose a sufficiently
  large ellipsoid that contains the candidate region of the point $(A,
  b)$. During each iteration of the algorithm, we pick the center
  $(A', b') \in \mbR^{m(n+1)}$ of the current ellipsoid.

  If the LP $A'x > b'$ has a feasible solution $x$, then we simply query
  $x$ to the oracle. If the oracle returns \yes, then $x$ is also a feasible solution to $Ax>b$, and the job is done.
  Otherwise, suppose that the oracle returns an index $i$, then we know that
  $\langle a_i, x \rangle < b_i$ and $\langle a'_i, x \rangle > b'_i$,
  which gives a separating hyperplane.

  A problem arises when $A'x > b'$ is infeasible, in which case we cannot find a separating
  hyperplane directly. However, notice that when $m \le n$, a linear
  program $A'x > b'$ is always feasible if the matrix $A'$ is full rank.
  Also, for any matrix $A'$ and any $\epsilon > 0$, we can easily find
  a full rank matrix $A''$ such that the difference between any entry of $A'$ and $A''$ is at most $\epsilon$.
  Then, by querying a feasible solution of $A''x > b'$, we actually get a
  separating hyperplane between $(A'', b')$ and $(A, b)$. Since
  $(A'', b')$ can be arbitrarily close to the center point $(A', b')$
  of the current ellipsoid, we are still be able to use the original ellipsoid argument to
  claim that the volume of the ellipsoid uniformly decreases at every step.

  Finally, notice that the solution polyhedron degenerates to a point and has volume 0.
  We can use a machinery developed by Gr{\"o}tschel, Lov{\'a}sz, and Schrijver~\cite{GLS84,GLS88} to handle this issue.
  The same idea has been used in solving \ulp in the furthest oracle model.

  Thus, the \ulp problem can be solved in polynomial time when $m\le n$.
\end{proof}

\newcommand\ctn{{\sf continue}}
\newcommand\term{{\sf terminate}}

\section{Furthest Oracle Algorithm: Formal Specifications and Proofs}\label{appendix-furthest-formal}

We give formal description of algorithm and proof of its correctness and complexity in this section. In Section \ref{appendix-lemma-furthest}, we show a proof of Lemma~\ref{lemma-furthest}. In Section \ref{sec:cons-check}, we give a procedure to check the consistency of two furthest Voronoi diagrams, which will be used in the main algorithm given in Section \ref{sec:bMainAlg}.

\subsection{Proof of Lemma~\ref{lemma-furthest}}\label{appendix-lemma-furthest}

\begin{proof}
  On the one direction, assume that $Ax > b$ is infeasible. For any point $p\in \extreme(A, b)$, by definition, $p$ is not a feasible solution of $Ax > b$. Suppose for the sake of contradiction that the support linear system $\{\langle a_i, x \rangle > 0 \mid i \in S\}$ contains a feasible solution $x \in \mbR^n$. Consider $x' = p + \epsilon x$ for a small $\epsilon>0$. We have
  \[
  \max_i\big\{b_i - \langle a_i, x' \rangle\big\} = \max_i\big\{b_i - \langle a_i, p \rangle - \epsilon\cdot \langle a_i,x \rangle\big\} < d(A,b).
  \]
  This contradicts the fact that $p\in \extreme(A, b)$. Thus, we know that $\{\langle a_i, x \rangle > 0 \mid i \in S\}$ is infeasible.

  On the other direction, let $$d' = \max_{i\in [m]} \{b_i - \langle a_i, p \rangle\}.$$ Since $p$ is not a feasible solution of $Ax>b$, we have $d'\ge 0$.
  By the definition of $S$, we know that for any $i \in S$, $b_i - \langle a_i, p
  \rangle = d' \geq 0$.
  Now consider any point $x \in \mbR^n$, let $x' = x-p$.
  Since there exists an index $i \in S$ such that $\langle a_i, x'
  \rangle \leq 0$, we have
  \[
    b_i - \langle a_i, x \rangle  =  b_i - \big\langle a_i, p \big\rangle - \big\langle a_i, x' \big\rangle \geq d' + 0 \geq 0
  \]
  This means that $x$ is not a feasible solution to $Ax > b$. Thus, we conclude that $Ax > b$ is infeasible.
\end{proof}

\subsection{Consistency Check between Voronoi Diagrams}\label{sec:cons-check}

To specify the main algorithm, we need a subprocedure \SameVor to check whether the furthest Voronoi diagram of the unknown $(A,b)$ is the same as that of a proposed $(A',b')$. 
We will describe the procedure with respect to the general $Ax>b$ case, and the discussions in Section~\ref{sec:Ax>0} for $Ax>0$ can be considered as a special case.
(Recall that the input instance has $m$ constraints, $n$ variables, and binary size $L$.)

\begin{algorithm}[H]
  \floatname{algorithm}{Procedure}
  \caption{\SameVor$(A', b')$}
  {\bf Input:} Presumed point $(A', b')$ \\
	{\bf Output:} \yes, or a separating hyperplane (between $(A,b)$ and $(A',b')$)
  \begin{algorithmic}[1]
    \FOR {each pair $i,j\in [m]$ and $i \neq j$}
      \IF {$\gvor'(i)\cap \gvor'(j) \neq \emptyset$}
        \STATE Solve the following LP feasibility problem and get a solution $y^{(1)}$:
        \begin{align}
          \label{eq:sameVor1} b_i' - \langle a_i', y\rangle = b_j' - \langle a_j', y\rangle &> b_k' - \langle a_k', y\rangle + 2^{-10L}, \forall k\neq i,j
				\end{align}
        \STATE Take an orthonormal basis $z^{(2)}, \ldots, z^{(n)}$ of the \mbox{affine subspace $H'_{ij}\supseteq \gvor'(i)\cap \gvor'(j)$.}
        \FOR {each $k \in \{2, ..., n\}$}
          \STATE Let $y^{(k)} = y^{(1)} + 2^{-10L} \cdot z^{(k)}$.
          \STATE Find $\epsilon^{(k)}$ (the absolute value of each of its component is bounded by $2^{-2L}$) to satisfy the following inequalities Eq.\eqref{eq:sameVor2} and \eqref{eq:sameVor3}.
          \begin{align}
						\label{eq:sameVor2} b_i' - \langle a_i', y^{(k)} + \epsilon^{(k)} \rangle &> b_\ell' - \langle a_\ell', y^{(k)} + \epsilon^{(k)} \rangle , \forall \ell\neq i \\
						\label{eq:sameVor3} b_j' - \langle a_j', y^{(k)} - \epsilon^{(k)} \rangle &> b_\ell' - \langle a_\ell', y^{(k)} - \epsilon^{(k)} \rangle , \forall \ell\neq j
%						\label{eq:sameVor4} (\epsilon_k)_t &< 2^{-2L}, \forall t\in [n]
					\end{align}
        \ENDFOR
        \FOR {each $k \in [n]$}
          \STATE Query $y^{(k)} + \epsilon^{(k)}$ and $y^{(k)} - \epsilon^{(k)}$ to the oracle, and get answers $i'$ and $j'$, respectively.
          \IF {$i' \neq i$ or $j' \neq j$}
						\STATE Output the corresponding separating hyperplane.
					\ENDIF
        \ENDFOR \label{step:intersection}
      \ENDIF
    \ENDFOR
    \STATE Output \yes
  \end{algorithmic}
\end{algorithm}

\begin{Lem}\label{lem:const-check}
  If $\gvor'(i)\neq \emptyset$ for all $i$, then the procedure \SameVor either finds a separating hyperplane between $(A', b')$ and $(A, b)$, or confirms that $\gvor = \gvor'$.
\end{Lem}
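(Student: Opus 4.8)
The plan is to analyze the procedure \SameVor step by step and show that whenever it outputs \yes, the two tessellations must coincide. First I would argue that the procedure is well-defined: for every pair $i \neq j$ with $\gvor'(i) \cap \gvor'(j) \neq \emptyset$, the LP feasibility problem in \eqref{eq:sameVor1} has a solution $y^{(1)}$ strictly in the relative interior of the common facet $H'_{ij} \cap \gvor'(i) \cap \gvor'(j)$ (this uses that $\gvor'(i)$ and $\gvor'(j)$ are full-dimensional cells of a tessellation, so their intersection, if non-empty and a genuine facet, has relative interior points, and the slack $2^{-10L}$ is small enough to be absorbed given the input size bound $L$). Similarly I would check that the perturbed points $y^{(k)} \pm \epsilon^{(k)}$ can be chosen to lie in $\gvor'(i) \setminus \gvor'(j)$ and $\gvor'(j) \setminus \gvor'(i)$ respectively, by picking $\epsilon^{(k)}$ with a component along the outward normal direction separating the two cells; the magnitude bound $2^{-2L}$ together with the $2^{-10L}$ displacement along the facet keeps everything consistent.

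Next I would handle the two exit cases. If at any query the oracle returns an answer $i' \neq i$ or $j' \neq j$, then since $y^{(k)} + \epsilon^{(k)} \in \gvor'(i)$ but the true furthest index at that point is $i'$, we have $b_i' - \langle a_i', y^{(k)}+\epsilon^{(k)}\rangle \geq b_\ell' - \langle a_\ell', y^{(k)}+\epsilon^{(k)}\rangle$ for all $\ell$ while $b_{i'} - \langle a_{i'}, y^{(k)}+\epsilon^{(k)}\rangle \geq b_i - \langle a_i, y^{(k)}+\epsilon^{(k)}\rangle$; comparing the $i$ versus $i'$ inequalities for $(A',b')$ and $(A,b)$ yields a genuine separating hyperplane, because the two linear functionals order these two indices oppositely (with at least one strict). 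This is the same mechanism already used in Section~\ref{sec:Ax>b}. So in this case the stated conclusion holds trivially.

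The substantive case is when the procedure runs to completion and outputs \yes, i.e., for every relevant pair $(i,j)$ and every $k \in [n]$ the oracle returns exactly $i$ and $j$. Here I would argue: the $n$ points $y^{(1)}, \ldots, y^{(n)}$ are affinely independent (they span $H'_{ij}$, an $(n-1)$-dimensional affine subspace, since $y^{(k)} - y^{(1)} = 2^{-10L} z^{(k)}$ and the $z^{(k)}$ form a basis of the direction space). For each such point $y^{(k)}$, the oracle answers $i$ on $y^{(k)}+\epsilon^{(k)}$ and $j$ on $y^{(k)}-\epsilon^{(k)}$, and since $\epsilon^{(k)}$ is small (bounded by $2^{-2L}$, far below the feature scale $2^{-poly(L)}$ of the unknown diagram $\gvor$ dictated by the input size), $y^{(k)}$ itself must lie on the boundary between $\gvor(i)$ and $\gvor(j)$, i.e., on the hyperplane $\{x : b_i - \langle a_i,x\rangle = b_j - \langle a_j,x\rangle\}$. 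Having $n$ affinely independent points on this hyperplane pins it down to equal $H'_{ij}$. Doing this for all pairs $(i,j)$ with $\gvor'(i) \cap \gvor'(j) \neq \emptyset$ shows every defining facet of every cell $\gvor'(i)$ is also a facet-hyperplane of $\gvor(i)$, hence $\gvor(i) \subseteq \gvor'(i)$ for all $i$; since both $\gvor$ and $\gvor'$ tessellate $\mathbb{R}^n$ (the cells cover the whole space and overlap only on lower-dimensional sets), the containments $\gvor(i) \subseteq \gvor'(i)$ for all $i$ force equality. The hypothesis $\gvor'(i) \neq \emptyset$ for all $i$ is used to guarantee that no cell is spuriously empty so that the facet-by-facet comparison indeed covers all of $\gvor'$.

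The main obstacle I expect is the quantitative ``scale separation'' argument: making precise why a displacement of size $2^{-2L}$ (or $2^{-10L}$) is small enough that the oracle's answers on the perturbed points force $y^{(k)}$ onto the true bisector of $\gvor$. This requires a lemma, relying on the input-size bound $L$ and the bound from \cite{Kha79}/Lemma~\ref{lem:feasiblesize} on the bit-complexity of the unknown $(A,b)$, that any non-empty face of $\gvor$ or any gap between bisectors has ``width'' at least some $2^{-poly(L)}$ that dominates the perturbation. I would isolate this as a separate claim about the geometry of $\gvor$ given the input size, then feed it into the completeness argument above. The rest — affine independence, the separating-hyperplane extraction, and the tessellation closure argument — is routine once this quantitative fact is in hand.
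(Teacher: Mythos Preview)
Your proposal is correct and follows essentially the same approach as the paper's own proof: pin down each bisector $H'_{ij}$ by certifying that $n$ affinely independent points on it also lie on the true bisector $H_{ij}$ (via the paired queries $y^{(k)}\pm\epsilon^{(k)}$ and the precision assumption on $(A,b)$), conclude $\gvor(i)\subseteq\gvor'(i)$ for every $i$, and then use that both diagrams tessellate $\mathbb{R}^n$ with all $\gvor'(i)$ nonempty to force equality. You flesh out the well-definedness of the queries and the separating-hyperplane extraction more explicitly than the paper does, and you correctly identify the quantitative scale-separation step (which the paper dispatches with a one-line appeal to ``the assumed precision for $(A,b)$'') as the place requiring real care.
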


\begin{proof}
  In the procedure \SameVor, for each $(i,j)$ with $\gvor'(i)\cap \gvor'(j) \neq \emptyset$, we first find $n-1$ linearly independent points in $\gvor'(i)\cap \gvor'(j)$, and then check that they are also in $\gvor(i)\cap \gvor(j)$. We achieve this by finding a pair of points $y^{(k)}\pm \epsilon^{(k)}$ and verify that they are in $\gvor(i)$ and $\gvor(j)$, respectively. Then because of the assumed precision for $(A,b)$, we know that $y^{(k)} \in \gvor(i)\cap \gvor(j)$. Hence, after Step \ref{step:intersection} in \SameVor$(A',b')$, we can conclude that the hyperplane $H'_{ij}$ also contains $\gvor(i)\cap \gvor(j)$.
	
Now consider each nonempty $\gvor'(i)$. It is the intersection of $k\leq m-1$ half-spaces with boundary $H'_{ij}$, and each of these $H'_{ij}$'s is equal to the corresponding boundary hyperplane $H_{ij}$ in $\gvor$. Thus, we know that $\gvor(i) \subseteq \gvor'(i)$ as the former is the intersection of possibly more half-spaces. But both $\gvor$ and $\gvor'$ are a partition of $\mbR^n$, together with the fact that all $\gvor'(i)$'s are nonempty, we have $\gvor = \gvor'$.
\end{proof}

The procedure \SameVor$(A',b')$ checks whether $\gvor = \gvor'$, assuming that $\gvor'(i) \neq \emptyset$ for all $i$. However, if $\gvor'(i) = \emptyset$ for some $i$, but \SameVor does not know whether $\gvor(i)$ is empty. Our solution to this issue is to simply ignore these indices $i$ that are not in $T$ and pretend that the unknown LP does not have these constraints.
If at any point the oracle outputs some $i$ that is not in $T$ (i.e., with $\gvor'(i) = \emptyset$), then we get a separating hyperplane. Otherwise the algorithm just runs as if these $i$'s do not exist: We either find some other separating hyperplane or a valid solution to the unknown LP, or confirm that the unknown LP restricted to $T$ is infeasible, which implies that the original LP is also infeasible (since it needs to satisfy even more constraints).

\subsection{Main Algorithm}\label{sec:bMainAlg}

Now we are ready to describe the main algorithm \FurthestAlg, which uses the ellipsoid method to search for $(A,b)\in \mbR^{m(n+1)}$ in order to solve the general \ulp problem $Ax>b$.
In the algorithm, whenever we find a hyperplane that separates the unknown $(A,b)$ and the center $(A',b')$ of the current ellipsoid, we use ``\ctn'' to denote the standard procedure of proceeding to the next iteration of the ellipsoid method with a smaller ellipsoid. Whenever we find a solution $x$ with $Ax > b$, we use ``$\term(x)$'' to mean to terminate the whole program with an output $x$. We also use ``$\term(\no)$'' for terminating the whole program with $\no$, i.e., no feasible solution exists. We use $B(p,\epsilon)$ to denote the ball centered at $p$ with radius $\epsilon$.

% \begin{center}
% \fbox{
% \begin{minipage}[l1pt]{6.00in}
% {\bf  Algorithm \FurthestAlg} \\
% {\bf Input}: a verification \V for $Ax>b$ \\
% {\bf Output}: a valid solution $x$ for $Ax>b$, or \no. \\

% Run Ellipsoid algorithm to search for $(A,b)\in \mbR^{m(n+1)}$. For the center $(A',b')$ of the current ellipsoid,
% \begin{enumerate}
% \item If the LP $A'x > b'$ is feasible
% \item \quad find a solution $x$ to it and query $x$ to $\V$. If $\V$ returns \yes, then $\term(x)$; else, suppose that $\V$ returns $i$, then we get a separating hyperplane ``$a_i x \leq b_i$ and $a_i' x > b'_i$'' and \ctn.
% \item else (\emph{$A'x > b'$ is infeasible})
% \item \quad Define $\gvor$ and $\gvor'$ by Eq.\eqref{eq:GV} and \eqref{eq:GV'}.
% \item \quad \SameVor$(A',b')$.
% \item \quad Compute a point $p\in \extreme(A',b')$ and the corresponding support $S$.
% \item \quad Query \V on $p$ and confirm that it is an infeasible solution to $Ax > b$; otherwise $\term(p)$.
% \item \quad In $B(p,\epsilon)$ for a sufficiently small $\epsilon$, use $\gvor = \gvor'$ in $B$ to recover $A_S$.
% \item \quad Check whether $A_S x > 0$ is feasible. If no, then $\term(\no)$, else get a separating hyperplane and \ctn.
% \end{enumerate}
% \end{minipage}
% }
% \end{center}

\begin{algorithm}[H]
  \caption{\FurthestAlg}
  {\bf Input:} an unknown LP $Ax>b$ \\
  {\bf Output:} a valid solution $x$ for $Ax>b$, or \no.

  \begin{algorithmic}[1]
    \STATE Run the ellipsoid method to search for $(A,b)\in \mbR^{m(n+1)}$.
    \FOR {each iteration of the ellipsoid method}
      \STATE Let $(A',b')$ be the center of the current ellipsoid.
      \IF {$A'x > b'$ is feasible}
        \STATE Find a solution $x$ to it and query $x$ to the oracle.
        \STATE {\bf if} the oracle returns \yes \ {\bf then} $\term(x)$.
        \STATE {\bf else} suppose that the oracle returns $i$, then we get a separating hyperplane \\ \quad ``$a_i x \leq b_i$ and $a_i' x > b'_i$'' and \ctn.
      \ELSE %(\emph{$A'x > b'$ is infeasible})
        \STATE \label{step:def T} Define $\gvor$ and $\gvor'$ by Eq.\eqref{eq:GV} and \eqref{eq:GV'}, and let $T = \{i : \gvor'(i) \neq \emptyset\}$.
				\STATE \label{step:subproc} Run \SameVor$(A',b')$. If it outputs a separating hyperplane, then \ctn
				\STATE Compute a point $p\in \extreme(A',b')$ and the corresponding support $S$.
        \STATE \label{step:feasible} Query $p$ to the oracle. %If \V outputs \yes, then $\term(p)$; if \V outputs an $i\notin T$, then we get a separating hyperplane and \ctn.
				\IF {the oracle outputs \yes}
                    \STATE $\term(p)$.
				\ELSIF {the oracle outputs an $i\notin T$}
						\STATE we get a separating hyperplane and \ctn.% and confirm that it is an infeasible solution to $Ax > b$; otherwise $\term(p)$.
				\ELSE
					\STATE Use $\gvor = \gvor'$ in $B(p,\epsilon)$ for a sufficiently small $\epsilon$ to recover $\{\langle a_i, x \rangle > 0 : i\in S\}$.
					\IF {$\{\langle a_i, x \rangle > 0 : i\in S\}$ is feasible}
						\STATE Compute a feasible solution $x^*$, and suppose $\langle a'_i, x^* \rangle \leq 0$ for some $i \in S$. We get a separating hyperplane ``$\langle a'_i, x^* \rangle \leq 0$ and $\langle a_i, x^* \rangle > 0$'' and \ctn.
					\ELSE
						\STATE $\term(\no)$.
					\ENDIF
				\ENDIF
      \ENDIF
    \ENDFOR
  \end{algorithmic}
   %Remark: At any step after line \ref{step:def T} (actually just Step \ref{step:subproc} and \ref{step:feasible}), if \V returns an $i\notin T$, then we get a separating hyperplane between $(A', b')$ from $(A, b)$, and \ctn.
\end{algorithm}

\begin{Thm}\label{thm:furthest}
Algorithm \FurthestAlg solves the \ulp problem $Ax>b$ with the furthest oracle in polynomial time.
\end{Thm}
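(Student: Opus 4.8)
The plan is to establish correctness and polynomial running time of \FurthestAlg by verifying that it conforms to the ellipsoid-method template for finding the degenerate polyhedron $\{(A,b)\}$ in $\mathbb{R}^{m(n+1)}$, and that every iteration either solves the original LP or produces, in polynomial time, a hyperplane genuinely separating $(A,b)$ from the current center $(A',b')$. First I would set up the outer loop: initialize a large enough ellipsoid (its radius determined by the input-size bound $L$, so that it contains every candidate $(A,b)$ with $\|a_i\|=1$ and $b$ of bounded bit-length), and invoke the Gr\"otschel--Lov\'asz--Schrijver machinery for the strong nonemptiness problem on well-described polyhedra given a strong separation oracle, exactly as in \cite{GLS84,GLS88,BCZ12}. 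This reduces the whole theorem to building a polynomial-time strong separation oracle at each center $(A',b')$: on input $(A',b')$ it should either certify $(A',b')=(A,b)$ in the relevant sense (which here means solving the unknown LP), or output a rational halfspace containing $(A,b)$ but not $(A',b')$.

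Second, I would verify the two branches of the oracle construction. In the feasible branch ($A'x>b'$ consistent), we compute a solution $x$ to $A'x>b'$ and query it; a \yes answer finishes, and otherwise the returned index $i$ gives $\langle a_i,x\rangle\le b_i<b_i'-\langle a_i',x\rangle+\langle a_i',x\rangle$, i.e. the inequality $\langle a_i,x\rangle-b_i\le 0<\langle a_i',x\rangle-b_i'$, which is an honest separating hyperplane linear in the unknown $(a_i,b_i)$ — and all quantities $x,a_i',b_i'$ are known. In the infeasible branch, I would argue correctness step by step: (1) \SameVor$(A',b')$, by Lemma~\ref{lem:const-check}, either yields a separating hyperplane or confirms $\gvor=\gvor'$ (handling empty cells $\gvor'(i)$ by the $T$-restriction trick, noting that an oracle answer $i\notin T$ is itself a separating hyperplane, and that infeasibility of the $T$-restricted LP implies infeasibility of the original); (2) we compute $p\in\extreme(A',b')$ and its support $S$ by the $m+1$ auxiliary LPs described after Eq.\eqref{eq:extreme}; (3) we query $p$: \yes finishes, $i\notin T$ gives a separating hyperplane, and otherwise, since for queries inside a small ball $B(p,\epsilon)$ the oracle's behavior for $Ax>b$ coincides with the furthest oracle for the support system $\{\langle a_i,x\rangle>0:i\in S\}$, we can run the $Ax>0$ subroutine of Section~\ref{sec:Ax>0} (consistency check plus the Hartvigsen~\cite{Hart92} / Ash--Bolker~\cite{AB85} Voronoi-recognition under non-degeneracy) to recover the exact vectors $\{a_i:i\in S\}$; (4) by Lemma~\ref{lemma-furthest}, if the support system is infeasible then so is $Ax>b$ and we output \no, while if it has a solution $x^*$ then, since $p\in\extreme(A',b')$ means $\{\langle a_i',x\rangle>0:i\in S\}$ is infeasible (again by Lemma~\ref{lemma-furthest} applied to $(A',b')$), we get $\langle a_i',x^*\rangle\le 0<\langle a_i,x^*\rangle$ for some $i\in S$, a separating hyperplane.

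Third, for the complexity I would bound everything uniformly: each call to \SameVor involves $O(m^2)$ cell pairs, each requiring $O(n)$ auxiliary LPs with coefficients of bit-length $\mathrm{poly}(L)$ (the $2^{-10L}$ and $2^{-2L}$ slacks guarantee the needed precision while keeping sizes polynomial), and $O(n)$ oracle queries; computing $\extreme$, recovering the support system, and the Voronoi-recognition step are all $\mathrm{poly}(m,n,L)$; the number of ellipsoid iterations is the standard $\mathrm{poly}(m(n+1),L)$. Since each oracle query costs one unit of time, the query count is subsumed. Finally, I would confirm that the $H$-representation of every Voronoi/weighted-Voronoi cell has at most $m$ inequalities (Formula~\eqref{eq:vor} and \eqref{eq:GV}), so no exponential blow-up from cell combinatorics occurs.

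The main obstacle I expect is the precision/degeneracy bookkeeping in the infeasible branch: making rigorous that a sufficiently small but still $2^{-\mathrm{poly}(L)}$ radius $\epsilon$ for the ball $B(p,\epsilon)$ simultaneously (i) keeps all queries strictly inside the cells' relative interiors so that the oracle's answers genuinely reflect the support system's Voronoi structure, (ii) makes the first-order (linear) approximation $b_i-\langle a_i,x\rangle = d'-\langle a_i,x'\rangle$ exact for the index set $S$, and (iii) remains large enough that the recovered $a_i$'s are determined exactly rather than only approximately — this is where the non-degeneracy hypothesis on $A$ is essential, via uniqueness of the generating points of a non-degenerate Voronoi diagram. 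I would isolate this in a lemma stating that $2^{-\mathrm{poly}(L)}$ precision suffices throughout, mirroring the argument already sketched for \SameVor, and then the theorem follows by assembling the pieces above.
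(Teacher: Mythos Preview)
Your proposal is correct and follows essentially the same approach as the paper's proof: the ellipsoid method on $(A,b)\in\mathbb{R}^{m(n+1)}$ via the GLS machinery, the feasible/infeasible branch split, the \SameVor consistency check with the $T$-restriction trick, computing $p\in\extreme(A',b')$ and its support $S$, recovering the support system inside a small ball (using that $\gvor=\gvor'$ forces the supports of $(A,b)$ and $(A',b')$ at $p$ to coincide), and invoking Lemma~\ref{lemma-furthest} both to certify infeasibility and to extract a separating hyperplane when the support system is feasible. The precision obstacle you flag is exactly the one the paper addresses by noting that the gap between the largest and second-largest distances $b_i-\langle a_i,p\rangle$ is only singly exponentially small, so an $\epsilon$ with polynomially many bits suffices.
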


\begin{proof}
We will prove the correctness and analyze the complexity along the way. The whole algorithm runs the ellipsoid method to search for $(A,b)\in \mbR^{m(n+1)}$. For the center $(A',b')$ of the current ellipsoid, if the LP $A'x > b'$ is feasible, then we can find a solution $x$ to $A'x > b'$ in polynomial time. If this $x$ is also a solution to the unknown LP $Ax>b$ from the query to the oracle, then the algorithm successfully finds a feasible solution; thus, it outputs $x$ and terminates. If the oracle returns some $i$, it means that $a_ix \leq b_i$, then together with $a_i' x > b'_i$, we get a hyperplane that separates $(A,b)$ and $(A',b')$ in $\mbR^{m(n+1)}$. Thus, we can go to the next iteration of the ellipsoid method.
Therefore, the hard case is when the LP $A'x > b'$ is infeasible, which will be our assumption for the rest of the proof.

Define $\gvor$ and $\gvor'$ by Eq.\eqref{eq:GV} and \eqref{eq:GV'}. The procedure \SameVor$(A',b')$ checks whether $\gvor = \gvor'$, assuming that $T = \{i : \gvor'(i) \neq \emptyset\}=[m]$. By Lemma \ref{lem:const-check}, if $T = [m]$, then either the procedure finds a separating hyperplane for $(A,b)$ and $(A',b')$, or confirms that $\gvor = \gvor'$. If $T \neq [m]$, then some $\gvor'(i) = \emptyset$, but \SameVor does not know whether $\gvor(i)$ is empty too.
Our solution to this issue is to simply ignore these indices $i$ not in $T$ and pretend that the LP does not have these constraints. %Let set $T = \{i \mid \gvor'(i) \neq \emptyset\}$.
If at any point (of the current ellipsoid iteration, i.e., for the current $(A',b')$), the oracle outputs some $i$ that is not in $T$ (i.e., with $\gvor'(i) = \emptyset$), then we get a separating hyperplane (because in $(A',b')$ the distance $b'_i - \langle a'_i, x\rangle$ is always smaller than $b'_j - \langle a'_j, x\rangle$ for some $j$). If in other steps of the algorithm, i.e., at Step 10 of \SameVor$(A',b')$ and Step 16 of \FurthestAlg, the oracle never returns an $i\in [m]\setminus T$, then the algorithm just runs as if these $i$'s do not exist: We either find some other separating hyperplane or a valid solution to the unknown LP in some other steps, or we confirm that the unknown LP restricted to $T$ is infeasible, which implies that the original LP is also infeasible (since it needs to satisfy even more constraints).

	Next we show that one can compute a point $p\in \extreme(A',b')$ and the corresponding support $S$ in polynomial time, which are defined by Eq.\eqref{eq:vertexdist} and \eqref{eq:extreme}. From Eq.\eqref{eq:vertexdist}, we see that when $A'x>b'$ is infeasible, $d(A',b') \geq 0$. Since Eq.\eqref{eq:vertexdist} can be expressed as an LP ($\min z$ \st $b_i - \langle a_i, x \rangle \leq z$, $\forall i\in [m]$), we know that the minimum is always achievable and can be computed in polynomial time. Then, from Eq.\eqref{eq:extreme}, one can search over all $i\in [m]$ for an $x$ satisfying that $b_i - \langle a_i, x \rangle = d(A',b') \geq b_j - \langle a_j, x \rangle, \ \forall j\neq i$. Finally, it is easy to fix $S$ as the set of indices $i$ with $b_i - \langle a_i, x \rangle = d(A',b')$.
	
	Now that we have found a point $p\in \extreme(A',b')$; if $p$ happens to be a solution of $Ax>b$, then we are done. Below we focus on the situation that $Ap > b$ does not hold. By Lemma~\ref{lem:ax>b}, it suffices to show that the support linear system of $Ax > b$ at $p$ is infeasible. Let $S$ (and $S'$, respectively) be the set of the indices of the half-spaces in $Ax > b$ (and $A'x > b'$, respectively) that have the (same) maximal distance to point $p$. Since $\gvor=\gvor'$ and the set of indices that have the (same) maximal distance to some point solely depends on the Voronoi diagram, we know $S = S'$.
% Let $d = \max_{i\in [m]} \{b_i - \langle a_i, p\rangle\}$, and, as $p$ is a Voronoi vertex and $\gvor=\gvor'$, let
% $S = \{i\in [m] :  b_i - \langle a_i, p\rangle = d\} = \{i\in [m] :  b'_i - \langle a'_i, p\rangle = d\}$.
Consider the corresponding support linear system $\{\langle a_i, x \rangle > 0 : i \in S\}$ for the unknown LP $Ax>b$. % Let $d$ be this maximal distance.
We pick a small enough $\epsilon^* > 0$ such that for any $i \notin S$, $\epsilon^* < d - (b_i - \langle a_i, p \rangle)$. (Notice that though we do not know the hidden $(A,b)$, the assumed precision implies a minimum possible gap between the largest distance $d(A,b)$ and the second largest distance $\max_{i: b_i-\langle a_i, p \rangle < d(A,b)} b_i-\langle a_i, p \rangle$. It is not hard to see that this gap is singly exponentially small, thus we can take an $\epsilon$ smaller than this gap using polynomial number of bits.)

Now, if we focus on the region around $p$ and limit our queries within the ball $B = \{z\in \mathbb{R}^n :  \|z - p\| \leq \epsilon^*\}$, then for each query $x = p + x'$ within the ball $B$, where $\|x'\|\le \epsilon^*$, the oracle returns an index
\begin{eqnarray*}
i &\in & \arg\max_{i\in [m]}\big\{b_i - \langle a_i, x \rangle\big\} = \arg\max_{i\in S}\big\{b_i - \langle a_i, x \rangle\big\} \\
  &=& \arg\max_{i \in S}\big\{b_i - \langle a_i, p \rangle -\langle a_i, x' \rangle\big\} = \arg\max_{i \in S}\big\{-\langle a_i, x' \rangle\big\},
\end{eqnarray*}
where the first equality is because when $\epsilon^*$ is sufficiently small, the maximum is always achieved by some $i\in S$, by the definition of $S$. The last equality follows from the fact that all half-spaces in $S$ have the same maximal distance to $p$.
Thus, for any queried point within this ball $B$, the set of indices possibly returned by the oracle for $Ax > b$ is the same as that by the furthest oracle for the support linear system $\{\langle a_i, x \rangle > 0 :  i\in S\}$.
This means that the Voronoi structure $\gvor$ in $B$ is exactly the same as the weighted spherical\footnote{Recall the assumption of $\|a_i\| = 1$, for all $i\in [m]$, for both $Ax>0$ in Section \ref{sec:Ax>0} and $Ax>b$ in Section \ref{sec:Ax>b}.} Voronoi diagram $\gvor'$ for the corresponding support system in $B$. Note that the non-degenerency assumption of $(A,b)$ holds in the ball $B$ as well. Thus, by the results in Section \ref{sec:Ax>0}, we are able to recover all $a_i$ for $i\in S$; namely the support system $\{\langle a_i, x \rangle > 0 : i\in S\}$ is revealed.
The last step is straightforward: if the support system is infeasible, then by Lemma~\ref{lem:ax>b}, we can conclude that LP $Ax > b$ is also infeasible. If the support system has a feasible solution $x^*$, since we know $x^*$ is not a feasible solution to $\{\langle a'_i, x \rangle > 0 : i\in S\}$ (by Lemma~\ref{lem:ax>b}), we must have $\langle a'_i, x \rangle \leq 0 < \langle a_i, x \rangle$ for some $i \in S$, which is a separating hyperplane between $(A, b)$ and $(A', b')$.

From the above discussions, we know that every step can be implemented
in polynomial time. Hence, by the ellipsoid method, we can solve the problem efficiently.
\end{proof} 

\newcommand{\refl}{{\sf ref}\xspace}
\newcommand\cl[1]{#1}

\section{Proof of the Chamber Counting Theorem}\label{appendix-chamber}

In this section, we will give a formal proof of Theorem~\ref{thm:chambers} for counting the number of chambers in the general $n$-dimensional space.
We first define direction vectors in $\mbR^n$ and their indicator functions;
these definitions are inspired by Banchoff's seminal work~\cite{Ban67}.

\begin{Def}\label{def:general}
  For a given closed set $C$ in $\mathbb{R}^n$, a vector $v \in \mbR^n$ is called {\em general} if there is $x\in C$ such that $\langle
  v, x \rangle > \langle v, y \rangle$ for any $y \in C$ and $y\neq x$. That is, $x$ is the unique extreme point in $C$ along the direction $v$.
  Note that $x$, if existing, must be at the boundary of $C$, denoted by $\partial C$.
  Define an indicator function $f_v(p,C)$ by $f_v(p,C)=1$ if $p$ is such a unique extreme point, and $f_v(p,C) = 0$ otherwise.
\end{Def}

By the definition, it was shown in~\cite{Ban67} that if $v$ is general for a bounded set $C$, then $\sum_{p \in \partial C}f_v(p, C) = 1$.
Let $d\omega^{n-1}$ be the ordinary volume element on the sphere of the unit ball in $\mathbb{R}^n$, denoted by $S^{n-1} = \{p \in \mbR^n : \|p\| = 1\}$
and let $V = \int_{S^{n-1}}d\omega^{n-1}$ be the volume of $S^{n-1}$.
Note that the integral $\int_{S^{n-1}}$ can be considered as either over all points in $S^{n-1}$ or over all direction vectors, i.e., $\int_{v\in S^{n-1}}$.
Then for any nonempty bounded convex set $C \subset \mbR^n$, it holds that
\begin{equation}\label{eq:sum=1}
V = \int_{S^{n-1}}d\omega^{n-1} = \int_{S^{n-1}} \sum_{p \in \partial C}f_v(p,C) \, d\omega^{n-1} = \sum_{p \in \partial C} \int_{S^{n-1}} f_v(p,C) \, d\omega^{n-1}.
\end{equation}
That is, the summation of the indicator functions over all points on the boundary of $C$ over all directions equals to the volume of $S^{n-1}$.

\medskip \noindent
\textbf{Bound the unbounds.}
Note that Equation~(\ref{eq:sum=1}) does not hold when the set $C$ is unbounded. The major effort in our proof is devoted to dealing with unbounded convex sets, described in the following.
We add a ball $B(r) = \{p\in \mbR^n : \|p\| \leq r\}$ with radius $r$ sufficiently large to satisfy the following conditions.
\begin{itemize}
%\addtolength{\itemsep}{-0.5\baselineskip}
\item All bounded convex sets are contained within $B(r)$.
\item At any intersection point between the boundary of an unbounded
  convex set and the sphere of $B(r)$, a supporting hyperplane of the
  convex set is ``almost'' perpendicular to the tangent hyperplane of $B(r)$ at that point. Formally,
  for any point $p\in \partial C_i \cap \partial B(r)$ for some unbounded convex set $C_i$, let
  the unique supporting hyperplane of $B$ at $p$ be $\{x\in \mathbb{R}^n : \langle p, x \rangle = b\}$, then %either $p + a \notin C_i$ or
  the distance between $(1-\epsilon)p$ and the supporting hyperplane of $C_i$ is 0 when $r$ approaches infinity.
  This implies that for any $x$ which is not linear to $p$, if $\lim\limits_{r\to\infty} p+x \in C_i\cap B(r)$, then $\lim\limits_{r\to\infty}
  p+\refl(x,p)$ is not an interior point of $C_i\cap B(r)$ (see Figure~\ref{figure-chamber-b}),
  where $\refl(x,p)=(2pp^T-I)x$ reflects $x$ across the line that goes through the origin and $p$ and $I$ is the identity matrix.
%  Here $\vec{p}$ is $p$ viewed as a vector, which is also a normal vector of the supporting hyperplane of $B$ at point $p$.
\end{itemize}

Given the big ball $B(r)$ described as above, let $C'_i$ be the
closure of set $C_i \cap B(r)$ for all $1 \leq i \leq m$; note that
all sets $C'_i$ are now closed and bounded.
Let $$D' = \Big(\mbR^n \backslash \bigcup_iC_i\Big) \cap B(r)$$ be the complement of the union of
these convex sets within the ball $B(r)$. Note that the number of chambers in $D'$ and that in $D = \mbR^n \backslash \bigcup_iC_i$ is the same.
Suppose that $D'$ has $k$ chambers, and let $D'_1, \ldots, D'_k$ be the
closure of each chamber (which are all bounded).\footnote{Note that the definitions of all $C'_i$ and $D'_j$
depend on the radius of the ball $B(r)$. That is, precisely, they should be $C'_i(r)$ and $D'_j(r)$. In our discussions, for simplicity, we
ignore the parameter $r$, and note that $r$ is always sufficiently large.}
Next we partition all points in $\bigcup_j{\partial D'_j}$ into three categories:
\begin{itemize}
\item $\Gamma_B = \big\{p \in \bigcup_j{\partial D'_j} ~|~ p \in \partial B, p \notin \bigcup_i{\partial C'_i}\big\}$.
\item $\Gamma_C = \big\{p \in \bigcup_j{\partial D'_j} ~|~ p \notin \partial B, p \in \bigcup_i{\partial C'_i}\big\}$.
\item $\Gamma_{BC} = \big\{p \in \bigcup_j{\partial D'_j} ~|~ p \in \partial B, p \in \bigcup_i{\partial C'_i}\big\}$.
\end{itemize}

%It is easy to see that $\Gamma_B, \Gamma_C$ and $\Gamma_{BC}$ are pairwise disjoint, and $\Gamma_B \cup \Gamma_C \cup \Gamma_{BC} = \bigcup_j{\partial D'_j}$. Next
We next prove a key lemma used in our proof. Note that we can assume that any $D'_j$ and $D'_{j'}$, $j\neq j'$, do not intersect; indeed, we can expand all the closed sets $C_i$ by a small $\epsilon$ and this does not decrease the number of chambers. %(because each chamber, as an open set, has a positive volume).

\begin{Lem} \label{lem:cham:ineq}
  For any chamber $D'_j$, any $p \in \partial D'_j$, and any vector $v\in \mathbb{R}^n$, the following properties hold.
  \begin{itemize}
  \item [(1)] If $p \in \Gamma_B$, $f_v(p, \cl{D'_j}) = f_v(p, B)$.
  \item [(2)] If $p \in \Gamma_C$, $f_v(p, \cl{D'_j}) \leq f_{-v}\Big(p, \bigcap_{i:p \in \partial C'_i} \cl{C'_i}\Big)$.
  \item [(3)] If $p \in \Gamma_{BC}$, $\lim\limits_{r\to\infty}f_v\big(p, \cl{D'_j}\big) \leq \lim\limits_{r\to\infty}f_{\refl(v,p)}\bigg(p, \bigcap_{i:p \in \partial C'_i} \cl{C'_{i}}\bigg).$
  \end{itemize}
\end{Lem}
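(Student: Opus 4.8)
The three claims are local statements at a boundary point $p$, so in each case the plan is to describe the geometry of the sets $\cl{D'_j}$, $B$, and $\bigcap_{i : p\in\partial C'_i}\cl{C'_i}$ in a neighborhood of $p$, and then translate the defining condition of $f_v(p,\cdot)$ — namely that $p$ is the unique extreme point along direction $v$ — into a statement about supporting hyperplanes at $p$. For part (1), when $p\in\Gamma_B$, the point $p$ lies on $\partial B$ but in the interior of the complement of every $C_i$; hence in a small neighborhood $\cl{D'_j}$ coincides with $B$ (both are cut out locally by the single tangent hyperplane of $B$ at $p$). Since $f_v(p,\cdot)$ depends only on the local supporting cone at $p$, we get $f_v(p,\cl{D'_j}) = f_v(p,B)$ directly. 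I would phrase this via the tangent cone: the tangent cone of $\cl{D'_j}$ at $p$ equals the tangent cone of $B$ at $p$, and $f_v(p,C)=1$ iff $v$ lies in the interior of the polar of that tangent cone.

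For part (2), when $p\in\Gamma_C$ (so $p$ is on the boundary of one or more $C_i$ but not on $\partial B$), the chamber $\cl{D'_j}$ sits on the \emph{outside} of all those $C_i$, so locally $\cl{D'_j} \subseteq \mathbb{R}^n \setminus \mathrm{int}\big(\bigcap_{i:p\in\partial C'_i}\cl{C'_i}\big)$, and the tangent cone of $\cl{D'_j}$ at $p$ is contained in the \emph{complement} of the tangent cone $K$ of $\bigcap_i \cl{C'_i}$ at $p$ — more precisely, since each $C_i$ is convex and $D'_j$ lies outside, the outward normal directions for $D'_j$ at $p$ are (up to sign) inward normal directions for the intersection. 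The key point is: if $p$ is the unique $v$-maximizer of $\cl{D'_j}$, then the supporting hyperplane with outer normal $v$ has $\cl{D'_j}$ on one side; because $\cl{D'_j}$ hugs the outside of the convex intersection near $p$, the same hyperplane with normal $-v$ supports $\bigcap_i \cl{C'_i}$ at $p$, so $f_{-v}(p,\bigcap_i\cl{C'_i})=1$. Since $f_v(p,\cl{D'_j})\in\{0,1\}$, the inequality $f_v(p,\cl{D'_j})\le f_{-v}(p,\bigcap_i\cl{C'_i})$ follows. The only subtlety is that we drop low-dimensional contributions, which is why an inequality rather than equality appears — when $v$ is tangent to both boundaries the left side may be $0$ while the right is $1$, which is harmless.

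Part (3), the case $p\in\Gamma_{BC}$, is the main obstacle: here $p$ lies simultaneously on $\partial B$ and on $\partial C_i$ for some unbounded $C_i$, so the local picture of $\cl{D'_j}$ is governed both by the tangent hyperplane of the big ball and by the supporting hyperplanes of the (unbounded) convex sets, and these interact. This is exactly where the second bullet in the construction of $B(r)$ is used: as $r\to\infty$, the supporting hyperplane of $C_i$ at $p$ becomes perpendicular to the tangent hyperplane of $B$ at $p$, so the reflection $\refl(\cdot,p)$ across the line through the origin and $p$ swaps "the $B$-side" with "the $C_i$-side." The plan is: take $v$ general for $\cl{D'_j}$ with $f_v(p,\cl{D'_j})=1$; decompose $v$ into its component along $p$ and its component in the tangent hyperplane; argue that in the limit $\cl{D'_j}$ near $p$ looks like the intersection of the halfspace $\{\langle p,x\rangle\le \langle p,p\rangle\}$ (from $B$) with the complement of $\bigcap_i\cl{C'_i}$ (from the convex sets), and that reflecting $v$ by $\refl(\cdot,p)$ negates the tangential part — exactly matching part (2)'s sign flip — while the radial part, now pointing inward for $B\cap$(the sets), makes $p$ the $\refl(v,p)$-maximizer of $\bigcap_i\cl{C'_i}$. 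I would make this rigorous by writing the tangent cones at $p$ explicitly in an orthonormal frame adapted to $p$, invoking the limiting perpendicularity condition to identify $\lim_{r\to\infty}$ of the tangent cone of $\cl{D'_j}$, and checking that $v\mapsto\refl(v,p)$ carries the interior of its polar into the interior of the polar of $\bigcap_i\cl{C'_i}$'s tangent cone. The inequality (rather than equality), again, absorbs the boundary directions where reflection is delicate.
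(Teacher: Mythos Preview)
Your plan has the right geometric picture, especially the role of the perpendicularity condition and the reflection in part~(3), but there is a real gap in how you handle parts~(2) and~(3). The quantity $f_v(p,C)$ is a \emph{global} invariant of $C$: it asks whether $p$ is the maximizer of $\langle v,\cdot\rangle$ over all of $C$. Your working principle that ``$f_v(p,C)=1$ iff $v$ lies in the interior of the polar of the tangent cone'' holds only for convex $C$; the chamber $\cl{D'_j}$ is typically non-convex, so a direction $v$ can lie in the polar of its tangent cone at $p$ and still have $f_v(p,\cl{D'_j})=0$ because some far-away point of $\cl{D'_j}$ beats $p$. You therefore cannot reduce the lemma to a comparison of tangent cones. (Part~(1) survives only because $\cl{D'_j}\subseteq B$ and the two sets coincide near $p$, which forces both sides to be $1$ exactly when $v$ is a positive multiple of $p$; this is what the paper checks directly.)

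The concrete unjustified step in part~(2) is the sentence ``because $\cl{D'_j}$ hugs the outside of the convex intersection near $p$, the same hyperplane with normal $-v$ supports $\bigcap_i\cl{C'_i}$ at $p$.'' The sets $\cl{D'_j}$ and $\bigcap_{i\in I}\cl{C'_i}$ are not complements of one another when $|I|>1$ (locally $\cl{D'_j}$ is the complement of the \emph{union} $\bigcup_{i\in I}C'_i$), so $\cl{D'_j}\subseteq H^{-}$ does not by itself force $\bigcap_i\cl{C'_i}\subseteq H^{+}$. The paper fills exactly this gap by arguing the contrapositive with an explicit witness point: if $w\in\bigcap_{i\in I}\cl{C'_i}$ satisfies $\langle -v,w\rangle>\langle -v,p\rangle$, set $x=w-p$ and $p'=p-\epsilon x$; then $\langle v,p'\rangle>\langle v,p\rangle$, and one checks $p'\in\cl{D'_j}$ by applying a supporting hyperplane of each \emph{individual} convex $C'_i$ at $p$ (so that $p+\epsilon x\in C'_i$ forces $p-\epsilon x\notin\mathrm{int}\,C'_i$). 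This one-line reflection-through-$p$ argument, applied separately to each $C'_i$, is precisely the content your ``hugging'' sentence is standing in for, and without it the implication does not close. Part~(3) is handled in the paper by the same construction with $p'=p+\epsilon\,\refl(x,p)$, where the limiting-perpendicularity hypothesis on $B(r)$ replaces the supporting-hyperplane step; your identification of why $\refl(\cdot,p)$ appears is correct, but the argument must go through this pointwise construction rather than a tangent-cone comparison.
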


\begin{proof}
  We prove the claim for each case respectively.
  \begin{itemize}
  \item [(1)] Both sides equal to 1 if $v=\lambda p$ for some $\lambda > 0$, and both are 0 otherwise.

	\item [(2)] Let $I= \{i\in [m] ~|~ p\in \cl{C'_i}\}$.  If the right hand side of the inequality is 0, then there is another point $w \in \bigcup_{i\in I} \cl{C'_i}$ with
    $\langle p, -v \rangle < \langle w, -v \rangle$. Let $w = p + x$. We pick a
    small enough $\epsilon > 0$, such that $p' = p - \epsilon x \notin \cl{C'_j}$ for any $j\notin I$ (see Figure~\ref{figure-chamber-a}).
    At the same time, for any $i \in I$, $p' \notin \cl{C'_i}$ as
    well, because there is a separating hyperplane such that $\cl{C'_i}$ is
    entirely at one side, and thus, $p'$ cannot be an interior point of
    $C'_i$. So $p'\in \cl{D'_j}$ as we assumed that two $\cl{D'_j}$'s
    do not intersect. Now $\langle p', v \rangle = \langle p, v
    \rangle - \epsilon (\langle p, v \rangle - \langle w, v \rangle) > \langle p, v \rangle$, thus, $f_v(p, \cl{D'_j}) = 0$.

%    \begin{figure}[ht]
%    \begin{center}
%    \includegraphics[scale = 0.8]{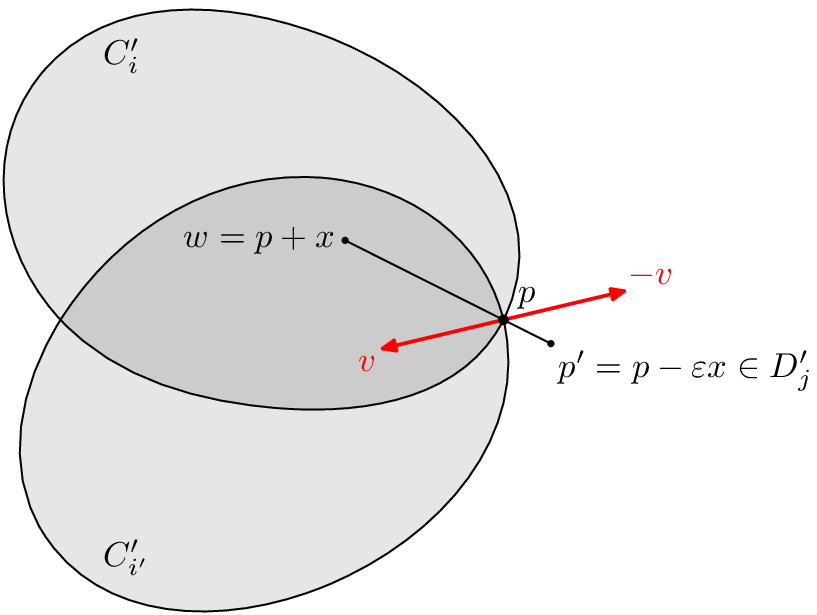}
%    \caption{Illustration of point $p\in \Gamma_c$.}\label{figure-chamber-a}
%    \end{center}
%    \end{figure}

    \begin{figure}[h]
    \centering
    \subfloat[{\small Point $p\in \Gamma_c$.}]{\label{figure-chamber-a}\includegraphics[width=0.45\textwidth]{chamber-a.eps}}
    \vspace{0.1in}
    \subfloat[{\small Point $p\in \Gamma_{BC}$.}]{\label{figure-chamber-b}\includegraphics[width=0.45\textwidth]{}}
    \caption{Illustration of the boundaries.}
    \end{figure}

	\item [(3)] We use a similar argument as above. If the
          right hand side of the inequality is 0, then there is another point $w \in \bigcup_{i\in I} \cl{C'_i}$ with
    $\langle p, \refl(v, p) \rangle < \langle w, \refl(v, p) \rangle$. Let $w = p + x$. We pick a
    small enough $\epsilon > 0$, such that $p' = p +
    \epsilon(\refl(x, p)) \notin \cl{C'_j}$ for any $j\notin I$ (see Figure~\ref{figure-chamber-b}).
    At the same time, when $r$ is large enough, for any $i \in I$, because $p + \epsilon
    x \in C'_i$, which implies that $p'$ is not an interior point of $C'_i$. Then we know $p'\in
    \cl{D'_j}$. Thus, $\langle p', v \rangle = \langle p, v \rangle +
    \epsilon \langle \refl(x, p), v \rangle = \langle p, v
    \rangle + \epsilon \langle \refl(v, p), x \rangle > \langle p, v \rangle$, which
    implies $f_v(p, \cl{D'_j}) = 0$.

%    \begin{figure}[ht]
%    \begin{center}
%    \includegraphics[scale = 0.8]{chamber-b.eps}
%    \caption{Illustration of point $p\in \Gamma_{BC}$.}\label{figure-chamber-b}
%    \end{center}
%    \end{figure}

    %We can use a similar argument as in item (2) to show that $f_v(p, \cl{D'_j}) \leq f_{-v}\big(p, \bigcap_{i:p \in \partial C'_i} \cl{C''_i}\big)$ where $C''_i = C_i \cap \{x: \langle x-p, p\rangle \geq 0\}$. And in the argument, note that when $r$ is large enough, $w \in C'_i$ implies $\refl(w,H) \in C''_i$, where $H$ is the tangent hyperplane of $B$ at $p$. Then $p' = p - \epsilon(\refl(w,H)-p) = p - \epsilon(\refl(w-p,\vec{p}))$, thus the conclusion holds.
  \end{itemize}
  This completes the proof of the lemma.
\end{proof}

In addition, notice that for any point $p$, direction vector $v'$ and bounded set $C$, we have
\begin{eqnarray*}
\int_{S^{n-1}}f_{v}(p, C)\, d\omega^{n-1} = \int_{S^{n-1}}f_{-v}(p, C)\, d\omega^{n-1} = \int_{S^{n-1}}f_{\refl(v,v')}(p, C)\, d\omega^{n-1}.
\end{eqnarray*}
This fact, together with the above lemma, implies the following important corollary.
\begin{Cor} \label{cor:cham:ineq}
  For any chamber $D'_j$ and $p \in \partial D'_j$, the following hold.
  \begin{itemize}
  \item [(1)] If $p \in \Gamma_B$, then $\int_{S^{n-1}}f_v(p, D'_j) \, d\omega^{n-1} = \int_{S^{n-1}}f_v(p, B) \, d\omega^{n-1}$.
  \item [(2)] If $p \in \Gamma_C \cup \Gamma_{BC}$, then
        $\int_{S^{n-1}}\lim\limits_{r\to\infty}f_v(p, D'_j) \, d\omega^{n-1}
        \leq \int_{S^{n-1}}\lim\limits_{r\to\infty}f_v\bigg(p, \bigcap_{i:v \in \partial C'_i}C'_i\bigg) \, d\omega^{n-1}.$
  \end{itemize}
\end{Cor}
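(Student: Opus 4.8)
The plan is to obtain Corollary~\ref{cor:cham:ineq} directly from Lemma~\ref{lem:cham:ineq} by integrating the pointwise estimates of the lemma over all directions $v\in S^{n-1}$ and then invoking the symmetry identity
\[
\int_{S^{n-1}}f_v(p,C)\,d\omega^{n-1}=\int_{S^{n-1}}f_{-v}(p,C)\,d\omega^{n-1}=\int_{S^{n-1}}f_{\refl(v,v')}(p,C)\,d\omega^{n-1}
\]
recorded just above the corollary. This identity holds because $v\mapsto -v$ and $v\mapsto\refl(v,v')$ are orthogonal maps of $\mbR^n$ (the antipodal map, and the reflection fixing the line through $v'$ while negating its orthogonal complement), hence isometries of $S^{n-1}$ that preserve $d\omega^{n-1}$; a change of variable rewrites the integral of $v\mapsto f_v(p,C)$ as that of the composed function. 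Nothing beyond these two inputs is needed.

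For part~(1), when $p\in\Gamma_B$ Lemma~\ref{lem:cham:ineq}(1) gives $f_v(p,\cl{D'_j})=f_v(p,B)$ for \emph{every} direction $v$, so integrating both sides over $S^{n-1}$ yields $\int_{S^{n-1}}f_v(p,D'_j)\,d\omega^{n-1}=\int_{S^{n-1}}f_v(p,B)\,d\omega^{n-1}$. For part~(2) with $p\in\Gamma_C$, set $K_p=\bigcap_{i:\,p\in\partial C'_i}\cl{C'_i}$, which is bounded; since $p\notin\partial B$, the boundaries $\partial C'_i$ near $p$ do not change once $r$ is large, so $f_v(p,\cdot)$ for the relevant sets is eventually constant in $r$ and the outer $\lim_{r\to\infty}$ is vacuous. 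Lemma~\ref{lem:cham:ineq}(2) gives $f_v(p,\cl{D'_j})\le f_{-v}(p,K_p)$ for every $v$; integrating over $S^{n-1}$ and using $\int_{S^{n-1}}f_{-v}(p,K_p)\,d\omega^{n-1}=\int_{S^{n-1}}f_v(p,K_p)\,d\omega^{n-1}$ from the identity, we obtain $\int_{S^{n-1}}f_v(p,D'_j)\,d\omega^{n-1}\le\int_{S^{n-1}}f_v(p,K_p)\,d\omega^{n-1}$, as claimed.

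For part~(2) with $p\in\Gamma_{BC}$, Lemma~\ref{lem:cham:ineq}(3) gives, for every $v$, $\lim\limits_{r\to\infty}f_v(p,\cl{D'_j})\le\lim\limits_{r\to\infty}f_{\refl(v,p)}(p,K_p)$; integrating over $S^{n-1}$, then applying the identity with $v'=p$ at each finite $r$ to replace $\int f_{\refl(v,p)}(p,K_p)$ by $\int f_v(p,K_p)$, we arrive at $\int_{S^{n-1}}\lim\limits_{r\to\infty}f_v(p,D'_j)\,d\omega^{n-1}\le\int_{S^{n-1}}\lim\limits_{r\to\infty}f_v(p,K_p)\,d\omega^{n-1}$. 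The one step demanding care — and the only genuine obstacle here — is interchanging $\lim_{r\to\infty}$ with the integral on both sides: each integrand is $\{0,1\}$-valued and uniformly bounded in $r$, so bounded convergence applies as soon as $\lim_{r\to\infty}f_v(p,\cdot)$ exists for $d\omega^{n-1}$-almost every $v$, which can fail only on the null set of directions along which the extreme point of the limiting configuration is non-unique. With that verified, the corollary follows by the routine substitutions above.
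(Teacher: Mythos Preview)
Your proposal is correct and follows exactly the paper's approach: integrate the pointwise inequalities of Lemma~\ref{lem:cham:ineq} over $S^{n-1}$ and use the invariance of $d\omega^{n-1}$ under $v\mapsto -v$ and $v\mapsto\refl(v,p)$, which is precisely what the paper records just before the corollary. One small simplification: in the $\Gamma_{BC}$ case you need not invoke bounded convergence to swap $\lim_{r\to\infty}$ with the integral, since the reflection $v\mapsto\refl(v,p)$ is a measure-preserving change of variable on $S^{n-1}$ and can be applied directly to the limiting integrand $v\mapsto\lim_{r\to\infty}f_{\refl(v,p)}(p,K_p)$, yielding $\int_{S^{n-1}}\lim_{r\to\infty}f_v(p,K_p)\,d\omega^{n-1}$ in one step.
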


Before proving the main theorem, we need to handle degenerated cases in which some points are at the boundary of more than $n$ convex sets.

\begin{Lem} \label{lem:cham:dege}
  For any point $p\in \mathbb{R}^n$, let $S = \big\{C_i ~|~ 1 \leq i \leq m, p \in \partial C_i\big\}$ be the set of convex sets whose boundary contains $p$. If
  $|S| > n$, then for any vector $v$ that is general to all these convex sets,
  $$f_v\bigg(p, \bigcap_{C \in S}C\bigg) \leq \sum_{S' \subset S, |S'| = n}f_v\bigg(p, \bigcap_{C \in S'}C\bigg).$$
\end{Lem}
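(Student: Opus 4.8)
The plan is as follows. Since $f_v(\cdot,\cdot)\in\{0,1\}$, the claimed inequality is trivial unless $f_v\big(p,\bigcap_{C\in S}C\big)=1$, so assume this. Translate $p$ to the origin, and recall that $0\in C$ for every $C\in S$ (each such $C$ has $p$ on its boundary). The first step is to recast the hypothesis $f_v(0,C)=1$ in terms of feasible directions: for a convex set $C$ with $0\in C$, let $R(C)=\{d:\exists t_0>0,\ td\in C\ \forall t\in(0,t_0]\}\cup\{0\}$ be the cone of feasible directions at $0$. A short argument (one direction by convexity, using that $0$ is the tail of every feasible segment; the other immediate) shows that $f_v(0,C)=1$ if and only if $R(C)\cap\{d:\langle v,d\rangle\ge 0\}=\{0\}$. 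Moreover, for a finite family the cone of feasible directions of an intersection equals the intersection of the cones of feasible directions. Hence, writing $G_i=R(C_i)\cap\{d:\langle v,d\rangle\ge 0\}$, we get that for every $S'\subseteq S$, $f_v\big(0,\bigcap_{C_i\in S'}C_i\big)=1$ if and only if $\bigcap_{i\in S'}G_i=\{0\}$.

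Thus the lemma reduces to the following conic Helly-type statement: the $G_i$, $i\in S$, are convex cones all lying in the fixed closed half-space $H^{+}=\{d:\langle v,d\rangle\ge 0\}$, whose bounding hyperplane $v^{\perp}$ contains the common apex $0$; if $\bigcap_{i\in S}G_i=\{0\}$ and $|S|>n$, then $\bigcap_{i\in S'}G_i=\{0\}$ for some $S'\subseteq S$ with $|S'|=n$. Equivalently, intersecting with the unit sphere, finitely many spherically convex subsets of a closed hemisphere of $S^{n-1}$ with empty common intersection admit an $n$-subfamily with empty common intersection. (The containment in a common half-space is essential here: three arcs covering a circle, pairwise intersecting but with empty triple intersection, defeat any such statement when it is dropped.) The easy case is when some $G_i=\{0\}$ — equivalently, $v$ already exposes $p$ in $C_i$ — in which case any $n$-element subfamily containing that index works, using $|S|\ge n+1$.

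To prove the conic statement in general, I would project the open hemisphere $\{d:\langle v,d\rangle>0\}$ centrally (gnomonically) onto the affine hyperplane $\{d:\langle v,d\rangle=1\}\cong\mathbb{R}^{n-1}$; this carries spherically convex sets to convex sets and turns "$\bigcap_{i\in S}G_i=\{0\}$" into "empty intersection", so Helly's theorem in $\mathbb{R}^{n-1}$ (Helly number $n$) yields an $n$-element subfamily whose open-hemisphere parts already have empty intersection.

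The main obstacle is the equator $v^{\perp}$, on which the gnomonic projection is undefined: a cone $G_i$ may contain nonzero directions orthogonal to $v$ (this happens exactly when $C_i$ is flat at $p$ at the level $\langle v,p\rangle$, which is perfectly compatible with $v$ being general for $C_i$), and the $n$-subfamily produced above need not have trivial intersection on $v^{\perp}$. I would close this gap either by a perturbation that nudges all the spherical sets strictly into the open hemisphere while preserving the (finitely many) subfamily intersection patterns, or by a secondary induction on the dimension applied to the equator stratum $\{G_i\cap v^{\perp}\}_{i\in S}$, whose total intersection is already $\{0\}$ and which feeds back into the same scheme. Making this step fully rigorous for an arbitrary $v$ that is general to all the $C_i$ — as opposed to a generic $v$, for which (especially when the $C_i$ are polytopes) one simply avoids the finitely many normal directions that create equator degeneracy — is the delicate point of the argument.
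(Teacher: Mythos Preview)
Your reduction is correct: after translating $p$ to the origin, the characterization $f_v(0,C)=1\Leftrightarrow R(C)\cap\{d:\langle v,d\rangle\ge 0\}=\{0\}$ holds, and feasible-direction cones do commute with finite intersections. So the lemma is indeed equivalent to your conic Helly-type statement for cones in a common closed half-space.

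The paper, however, takes the dual route and it is much shorter. Instead of feasible-direction cones it works with \emph{polar cones}: after shifting $p$ to the origin one has $f_v(0,C)=1\Leftrightarrow v\in C^{*}$. By Lemma~\ref{lem:polar}, $\big(\bigcap_{C\in S}C\big)^{*}=\operatorname{conv}\big(\bigcup_{C\in S}C^{*}\big)$, so $v$ lies in the conic hull of $\bigcup_{C\in S}C^{*}$. Carath\'eodory's theorem for cones in $\mathbb{R}^n$ then places $v$ in $\operatorname{conv}\big(\bigcup_{C\in S'}C^{*}\big)=\big(\bigcap_{C\in S'}C\big)^{*}$ for some $S'\subseteq S$ with $|S'|=n$, and the lemma follows. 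No hemisphere, no gnomonic projection, no equator.

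Your conic Helly statement is in fact true, and its cleanest proof is precisely to dualize and invoke Carath\'eodory, which is what the paper does. Your direct attempt through projection onto $\{d:\langle v,d\rangle=1\}\cong\mathbb{R}^{n-1}$ genuinely stalls at the boundary, as you correctly flag: an $n$-subfamily whose open-hemisphere traces have empty intersection may still share a nonzero direction in $v^{\perp}$, and nothing in Helly's theorem controls that. Your proposed fixes (perturb $v$, or induct on the equator stratum) can be pushed through, but each adds a layer of case analysis that the polar-cone side simply does not need. In short: the reduction is right, but you are working on the harder side of the duality; switching from Helly on the $G_i$ to Carath\'eodory on the $C_i^{*}$ closes the gap in one line.
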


\begin{proof}
  For notational convenience, we shift the origin of the coordinate system to $p$. An important observation is
  Note that for any convex set $C$ with $p\in \partial C$, $f_v(p, C) = 1$ if and only if $v$ is contained in the polar
  cone of set $C$. Thus, by Lemma~\ref{lem:polar}, we know that $f_v\big(p, \bigcap_{C \in S}C\big) = 1$ if any only if $v \in \big(\bigcap_{C \in S}C\big)^* = conv\big(\bigcup_{C \in S}C^*\big)$. Combining this with the fact
  that for any set $S$ of more than $n$ cones in $\mbR^n$, $conv(S) =
  \bigcup_{S' \subset S, |S'| = n}conv(S')$, we know that $f_v\big(p, \bigcap_{C \in S}C\big) = 1$ then there must exist
  $S' \subset S$, $|S'| = n$, such that $f_v\big(p, \bigcap_{C \in S'}C\big) = 1$. %This proves the claim.
\end{proof}

Now the main theorem can be proved by a simple counting argument.

\begin{proof}[Proof of Theorem~\ref{thm:chambers}]
  As discussed above, assume there are $k$ chambers caused by $T\triangleq\{C'_1,\ldots,C'_m\}$ (which are all bounded). Then we have
  \small
  \begin{eqnarray*}
    k & = & \frac1V \sum_{j = 1}^k \sum_{p \in \partial D'_j} \int_{S^{n-1}}f_v(p, D'_j) \, d\omega^{n-1} \qquad \text{(Equation~\ref{eq:sum=1})} \\
    & = & \frac1V \bigg( \sum_{p \in \Gamma_B} + \sum_{p \in \Gamma_C} + \sum_{v \in \Gamma_{BC}} \bigg) \int_{S^{n-1}}f_v(p, D'_j) \,  d\omega^{n-1} \qquad \text{($j$ is the unique one with $p\in \cl{D'_j}$)}\\
    & \leq & \frac1V \sum_{p \in \partial B} \int_{S^{n-1}}f_v(p, B) \, d\omega^{n-1} +
    \frac1V \sum_{p \in \bigcup_j\partial D'_j} \int_{S^{n-1}}\lim_{r\to\infty}f_v\bigg(p, \bigcap_{i:p \in \partial C'_i}C'_i\bigg) \, d\omega^{n-1} \qquad
    \textrm{(Corollary~\ref{cor:cham:ineq})} \\
    & \leq & \frac1V \sum_{p \in \partial B} \int_{S^{n-1}}f_v(p, B) \, d\omega^{n-1} +
    \frac1V \sum_{\substack{S\subseteq T\\ 0 < |S|\le n}} \sum_{p \in \bigcap_{C'\in S} \partial C'} \int_{S^{n-1}}\lim_{r\to\infty}f_v\bigg(p, \bigcap_{C'\in S} C'\bigg) \, d\omega^{n-1} \qquad
    \textrm{(Lemma~\ref{lem:cham:dege})} \\
    & = & 1 + \sum_{\substack{S\subseteq T\\ 0 < |S|\le n}} 1 \qquad \textrm{(Equation~\ref{eq:sum=1})} \\
    & = & \sum_{i = 0}^n{m \choose i}
  \end{eqnarray*}
  \normalsize
  Therefore, the theorem follows.
\end{proof}

\end{document}